\newtheorem{lemma}{Lemma}
\newtheorem{proposition}{Proposition}
\newtheorem{corollary}{Corollary}
\newenvironment{proof}{{\bf Proof:}}{~$\dashv$\\}
\def\N{\mathbb{N}}
\def\axiomN{\mathbf{N}}
\def\axiomwCD{\mathbf{wCD}}
\def\Axiom{\mathbf{A}}
\def\Rule{\mathbf{R}}
\def\LIK{\mathbf{LIK}}
\def\FIK{\mathbf{FIK}}
\def\Fo{\mathbf{Fo}}
\def\allfra{\mathbf{all}}
\def\fbcfra{\mathbf{fbc}}
\def\bcfra{\mathbf{bc}}
\def\fcfra{\mathbf{fc}}
\def\fbcfra{\mathbf{fbc}}
\def\fdcfra{\mathbf{fdc}}
\def\fucfra{\mathbf{fuc}}
\def\bdcfra{\mathbf{bdc}}
\def\bucfra{\mathbf{buc}}
\def\ducfra{\mathbf{duc}}
\def\fbdcfra{\mathbf{fbdc}}
\def\fbucfra{\mathbf{fbuc}}
\def\fducfra{\mathbf{fduc}}
\def\bducfra{\mathbf{bduc}}
\def\fbducfra{\mathbf{fbduc}}
\def\dcfra{\mathbf{dc}}
\def\ucfra{\mathbf{uc}}
\def\card{\mathtt{Card}}
\def\At{\mathbf{At}}
\def\K{\mathbf{K}}
\def\IK{\mathbf{IK}}
\def\S{\mathbf{S}}
\def\Log{\mathtt{Log}}
\def\IPL{\mathbf{IPL}}
\def\L{\mathbf{L}}
\def\hauteur{\mathtt{h}}
\def\degre{\mathtt{deg}}
\begin{document}
\title{Intuitionistic modal logics: new and simpler
\\
decidability proofs for $\FIK$ and $\LIK$}
\author{Philippe Balbiani\footnote{Corresponding author.
Email address: philippe.balbiani@irit.fr.}
\hspace{0.11cm}
\c{C}i\u gdem Gencer\footnote{Email addresses: cigdem.gencer@irit.fr.}}
\date{CNRS--INPT--UT3, IRIT, Toulouse, France}
\maketitle
\begin{abstract}
In this note, by integrating ideas concerning terminating tableaux-based procedures in modal logics and finite frame property of intuitionistic modal logic $\IK$, we provide new and simpler decidability proofs for $\FIK$ and $\LIK$.
\end{abstract}
{\bf Keywords:}
Intuitionistic modal logics.
Decidability and complexity.
\section{Introduction}\label{section:introduction}
Intuitionistic modal logic $\FIK$ has been introduced in~\cite{Balbiani:et:al:2024}.
Its semantics is based on Kripke frames $(W,{\leq},{R})$ where $\leq$ is a preorder on $W$ and $R$ is a binary relation on $W$ satisfying the {\em condition of forward confluence:} for all $s,t,u{\in}W$, if $t{\leq}s$ and $t{R}u$ then there exists $v{\in}W$ such that $s{R}v$ and $u{\leq}v$.
Intuitionistic modal logic $\LIK$ has been introduced in~\cite{Balbiani:et:al:2024b}.
Its semantics is based on Kripke frames $(W,{\leq},{R})$ where $\leq$ is a preorder on $W$ and $R$ is a binary relation on $W$ satisfying the condition of forward confluence and the {\em condition of downward confluence:} for all $s,t,u{\in}W$, if $s{\leq}t$ and $t{R}u$ then there exists $v{\in}W$ such that $s{R}v$ and $v{\leq}u$.
In~\cite{Balbiani:et:al:2024,Balbiani:et:al:2024b}, by integrating ideas coming from~\cite{Das:Marin:2023} about nested sequent calculi in intuitionistic modal logic $\IK$ and~\cite{Fitting:2014} about nested sequent calculi in first-order intuitionistic logic, Balbiani {\em et al.}\/ have demonstrated that the membership problems in $\FIK$ and $\LIK$ are decidable.
\\
\\
In~\cite{Gasquet:et:al:2006}, Gasquet {\em et al.}\/ have provided a general framework for defining tableau rules and tableau strategies in modal logics.
By using a simple graph rewriting language, they have given uniform proofs of soundness and completeness of multifarious terminating tableaux-based procedures in modal logics.
In~\cite{Grefe:1996}, Grefe has provided an algorithm constructing a finite countermodel for any formula not belonging to $\IK$.
As $\IK$ is finitely axiomatizable, he has therefore proved that the membership problem in $\IK$ is decidable.
The proofs of the decidability of the membership problems in $\FIK$ and $\LIK$ provided in~\cite{Balbiani:et:al:2024,Balbiani:et:al:2024b} are quite involved.
In this note, by integrating ideas coming from Gasquet {\em et al.}\/~\cite{Gasquet:et:al:2006} for what concerns terminating tableaux-based procedures in modal logics and Grefe~\cite{Grefe:1996} for what concerns finite frame property of $\IK$, we provide new and simpler decidability proofs for the membership problems in $\FIK$ and $\LIK$.
\\
\\
The usual approach to solving the membership problems in modal logics is to build finite models.
In modal logics, filtration is a fundamental concept for building finite models.
As far as we are aware, with a few exceptions such as~\cite{Hasimoto:2001,Sotirov:1984,Takano:2003}, it has not been so much adapted to intuitionistic modal logics, probably because it does not easily work with conditions involving the composition of binary relations such as the above-mentioned conditions of forward confluence and downward confluence.
In this note, considering that $\L$ is either $\FIK$, or $\LIK$, the new and simpler decidability proof for the membership problem in $\L$ that we provide is based on a selective filtration procedure: given a formula $A$, if $A{\not\in}\L$ then $A$ is falsified in the canonical model of $\L$ studied in Sections~\ref{section:canonical:model} and~\ref{section:maximal:worlds} and a selective filtration procedure defined in Sections~\ref{section:tips}--\ref{section:a:decision:procedure} can be used to constructing in Section~\ref{section:finite:model:property} a finite model falsifying $A$.
\section{Syntax}\label{section:closed:sets:of:formulas}
\subsection*{Notes on notation}
For all $n{\in}\N$, $(n)$ denotes $\{i{\in}\N:\ 1{\leq}i{\leq}n\}$.
\\
\\
For all sets $\Sigma$, $\card(\Sigma)$ denotes the {\em cardinal of $\Sigma$.}
\\
\\
For all sets $W$ and for all binary relations $R,S$ on $W$, ${R}{\circ}{S}$ denotes the {\em composition of $R$ and $S$,} i.e. the binary relation $T$ on $W$ such that for all $s,t{\in}W$, $s{T}t$ if and only if there exists $u{\in}W$ such that $s{R}u$ and $u{S}t$.
\\
\\
For all sets $W$ and for all binary relations $R$ on $W$, $R^{\star}$ denotes the least preorder on $W$ containing $R$.
\\
\\
For all sets $W$ and for all preorders $\leq$ on $W$, $\geq$ denotes the preorder on $W$ such that for all $s,t{\in}W$, $s{\geq}t$ if and only if $t{\leq}s$.
\\
\\
For all sets $W$ and for all preorders $\leq$ on $W$, a subset $U$ of $W$ is {\em $\leq$-closed}\/ if for all $s,t{\in}W$, if $s{\in}U$ and $s{\leq}t$ then $t{\in}U$.
\\
\\
For all sets $W$, for all preorders $\leq$ on $W$ and for all $s,t{\in}W$, when we write ``$s{<}t$'' we mean ``$s{\leq}t$ and $t{\not\leq}s$''.
\\
\\
Finally, ``$\IPL$'' stands for ``Intuitionistic Propositional Logic'' and ``IML'' stands for ``intuitionistic modal logic''.
\subsection*{Formulas}
Let $\At$ be a countably infinite set (with typical members called {\em atoms}\/ and denoted $p$, $q$, etc).
\\
\\
Let $\Fo$ be the countably infinite set (with typical members called {\em formulas}\/ and denoted $A$, $B$, etc) of finite words over $\At{\cup}\{{\rightarrow},{\top},{\bot},{\wedge},{\vee},{\square},{\lozenge},(,)\}$ defined by
$$A\ {::=}\ p{\mid}(A{\rightarrow}A){\mid}{\top}{\mid}{\bot}{\mid}(A{\wedge}A){\mid}(A{\vee}A){\mid}{\square}A{\mid}{\lozenge}A$$
where $p$ ranges over $\At$.
\\
\\
For all $A{\in}\Fo$, the {\em length of $A$}\/ (denoted ${\parallel}A{\parallel}$) is the number of symbols in $A$.
\\
\\
We follow the standard rules for omission of the parentheses.
\\
\\
For all $A{\in}\Fo$, when we write ``$\neg A$'' we mean ``$A{\rightarrow}{\bot}$''.
\subsection*{Sets of formulas}
Let ${\bowtie}$ be the binary relation between sets of formulas such that for all sets $\Gamma,\Delta$ of formulas, $\Gamma{\bowtie}\Delta$ if and only if for all $A{\in}\Fo$, the following conditions hold:
\begin{itemize}
\item if ${\square}A{\in}\Gamma$ then $A{\in}\Delta$,
\item if $A{\in}\Delta$ then ${\lozenge}A{\in}\Gamma$.
\end{itemize}
A set $\Sigma$ of formulas is {\em closed}\/ if for all $A,B{\in}\Fo$,
\begin{itemize}
\item if $A{\rightarrow}B{\in}\Sigma$ then $A{\in}\Sigma$ and $B{\in}\Sigma$,
\item if $A{\wedge}B{\in}\Sigma$ then $A{\in}\Sigma$ and $B{\in}\Sigma$,
\item if $A{\vee}B{\in}\Sigma$ then $A{\in}\Sigma$ and $B{\in}\Sigma$,
\item if ${\square}A{\in}\Sigma$ then $A{\in}\Sigma$,
\item if ${\lozenge}A{\in}\Sigma$ then $A{\in}\Sigma$.
\end{itemize}
For all $A{\in}\Fo$, let $\Sigma_{A}$ be the least closed set of formulas containing $A$.
\begin{lemma}
For all atoms $p$, $\Sigma_{p}{=}\{p\}$.
Moreover, for all $A,B{\in}\Fo$,
\begin{itemize}
\item $\Sigma_{A{\rightarrow}B}{=}\{A{\rightarrow}B\}{\cup}\Sigma_{A}{\cup}\Sigma_{B}$,
\item $\Sigma_{\top}{=}\{\top\}$,
\item $\Sigma_{\bot}{=}\{\bot\}$,
\item $\Sigma_{A{\wedge}B}{=}\{A{\wedge}B\}{\cup}\Sigma_{A}{\cup}\Sigma_{B}$,
\item $\Sigma_{A{\vee}B}{=}\{A{\vee}B\}{\cup}\Sigma_{A}{\cup}\Sigma_{B}$,
\item $\Sigma_{{\square}A}{=}\{{\square}A\}{\cup}\Sigma_{A}$,
\item $\Sigma_{{\lozenge}A}{=}\{{\lozenge}A\}{\cup}\Sigma_{A}$.
\end{itemize}
\end{lemma}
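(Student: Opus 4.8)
The plan is to prove each displayed equality by establishing the two inclusions separately, relying throughout on the two characteristic properties of the least closed set $\Sigma_{X}$ guaranteed by its definition: (i) $\Sigma_{X}$ is closed and $X{\in}\Sigma_{X}$, and (ii) for every closed set $\Sigma$ with $X{\in}\Sigma$, we have $\Sigma_{X}{\subseteq}\Sigma$ (minimality). I would dispose of the atomic case and the cases of $\top$ and $\bot$ first. Here $\{p\}$, $\{\top\}$ and $\{\bot\}$ are \emph{vacuously} closed, since no atom, $\top$ or $\bot$ matches the antecedent of any clause in the definition of closedness, and each evidently contains the relevant formula. Combining this with $p{\in}\Sigma_{p}$, $\top{\in}\Sigma_{\top}$, $\bot{\in}\Sigma_{\bot}$ from (i) and with minimality (ii), the three equalities follow at once.

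For the binary connectives I would treat $\Sigma_{A{\rightarrow}B}{=}\{A{\rightarrow}B\}{\cup}\Sigma_{A}{\cup}\Sigma_{B}$ in full, the cases of $\wedge$ and $\vee$ being identical and those of $\square$ and $\lozenge$ being the same with a single conjunct. For the inclusion $\supseteq$: by (i) we have $A{\rightarrow}B{\in}\Sigma_{A{\rightarrow}B}$; since $\Sigma_{A{\rightarrow}B}$ is closed, its clause for ${\rightarrow}$ gives $A{\in}\Sigma_{A{\rightarrow}B}$ and $B{\in}\Sigma_{A{\rightarrow}B}$; then $\Sigma_{A{\rightarrow}B}$ is a closed set containing $A$, so minimality (ii) yields $\Sigma_{A}{\subseteq}\Sigma_{A{\rightarrow}B}$, and likewise $\Sigma_{B}{\subseteq}\Sigma_{A{\rightarrow}B}$. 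Taken together these give $\{A{\rightarrow}B\}{\cup}\Sigma_{A}{\cup}\Sigma_{B}{\subseteq}\Sigma_{A{\rightarrow}B}$.

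For the reverse inclusion $\subseteq$, it suffices by minimality (ii) to check that $\{A{\rightarrow}B\}{\cup}\Sigma_{A}{\cup}\Sigma_{B}$ is a closed set containing $A{\rightarrow}B$. Containment is immediate, so the work is verifying the closure clauses. Any compound formula in the union either lies in $\Sigma_{A}$, or lies in $\Sigma_{B}$, or equals $A{\rightarrow}B$. In the first two cases the closedness of $\Sigma_{A}$ (respectively $\Sigma_{B}$) returns the required immediate subformulas to $\Sigma_{A}$ (respectively $\Sigma_{B}$), hence to the union. The only genuinely new case is $A{\rightarrow}B$ itself, whose ${\rightarrow}$-clause demands that $A$ and $B$ belong to the set; this holds because $A{\in}\Sigma_{A}$ and $B{\in}\Sigma_{B}$ by (i). This last point is the one deserving attention — that adjoining the top formula does not destroy closedness — and it succeeds precisely because the immediate subformulas of the adjoined formula already appear in the constituent closed sets. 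Replaying the same argument verbatim for $\wedge$, $\vee$, $\square$ and $\lozenge$ completes the proof.
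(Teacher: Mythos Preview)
Your proof is correct and complete. The paper itself simply writes ``Left to the reader'' for this lemma, so your argument is exactly the routine verification the authors omit; there is no alternative approach to compare against.
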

\begin{proof}
Left to the reader.
\medskip
\end{proof}
\begin{lemma}
For all $A{\in}\Fo$, $\Sigma_{A}$ is finite.
More precisely, for all $A{\in}\Fo$, $\card(\Sigma_{A})
$\linebreak$
{\leq}{\parallel}A{\parallel}$.
\end{lemma}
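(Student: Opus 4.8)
The plan is to prove both claims of the lemma simultaneously by strong induction on the length ${\parallel}A{\parallel}$ of the formula $A$, using the explicit recursive description of $\Sigma_A$ furnished by the previous lemma. The finiteness of each $\Sigma_A$ is an immediate consequence of the cardinality bound, so the real content is the inequality $\card(\Sigma_A){\leq}{\parallel}A{\parallel}$; I will establish that and read off finiteness for free.

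First I would treat the base cases. For an atom $p$, the previous lemma gives $\Sigma_p{=}\{p\}$, so $\card(\Sigma_p){=}1{=}{\parallel}p{\parallel}$. Likewise for the constants, $\Sigma_\top{=}\{\top\}$ and $\Sigma_\bot{=}\{\bot\}$ each have cardinality $1$, matching the length. In each of these cases equality holds, which is the tightest possible situation and a useful sanity check on the bound.

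Next I would handle the inductive step, splitting according to the principal connective of $A$. For a binary connective, say $A{=}B{\rightarrow}C$ (and identically for ${\wedge}$ and ${\vee}$), the previous lemma yields $\Sigma_{B{\rightarrow}C}{=}\{B{\rightarrow}C\}{\cup}\Sigma_B{\cup}\Sigma_C$. By subadditivity of cardinality, $\card(\Sigma_{B{\rightarrow}C}){\leq}1{+}\card(\Sigma_B){+}\card(\Sigma_C)$, and the induction hypothesis bounds this by $1{+}{\parallel}B{\parallel}{+}{\parallel}C{\parallel}$. Since the length of $B{\rightarrow}C$ counts the two parentheses, the connective symbol, and the symbols of $B$ and $C$, we have ${\parallel}B{\rightarrow}C{\parallel}{=}{\parallel}B{\parallel}{+}{\parallel}C{\parallel}{+}3$, which dominates $1{+}{\parallel}B{\parallel}{+}{\parallel}C{\parallel}$ comfortably. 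For a unary modality, say $A{=}{\square}B$ (and identically for ${\lozenge}$), the lemma gives $\Sigma_{{\square}B}{=}\{{\square}B\}{\cup}\Sigma_B$, so $\card(\Sigma_{{\square}B}){\leq}1{+}\card(\Sigma_B){\leq}1{+}{\parallel}B{\parallel}{=}{\parallel}{\square}B{\parallel}$, again with the bound holding.

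I do not expect a genuine obstacle here, since the previous lemma has already done the structural work of unfolding $\Sigma_A$; the only point demanding a little care is the bookkeeping of symbol counts in the length function, particularly remembering that the parenthesized binary cases contribute extra symbols, so that the subadditive estimate $1{+}\card(\Sigma_B){+}\card(\Sigma_C)$ is always absorbed by ${\parallel}A{\parallel}$. One should also note that the union in the binary cases may overlap when $B$ and $C$ share subformulas, which only strengthens the inequality, so the crude subadditive bound already suffices and no refinement is needed.
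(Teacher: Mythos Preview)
Your proof is correct and is precisely the induction on $A$ that the paper has in mind; you have simply spelled out the case analysis and symbol-counting that the paper leaves implicit. No changes are needed.
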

\begin{proof}
By induction on $A$.
\medskip
\end{proof}
For all $A{\in}\Fo$ and for all $\Gamma{\subseteq}\Sigma_{A}$, let $\Gamma^{\circ}{=}{\bigcup}\{\Sigma_{B}$: $B{\in}\Fo$ is such that either ${\square}B{\in}\Gamma$, or ${\lozenge}B{\in}\Gamma\}$.
\begin{lemma}
For all $A{\in}\Fo$ and for all $\Gamma{\subseteq}\Sigma_{A}$, $\Gamma^{\circ}$ is closed.
Moreover, if $\Gamma$ is closed then $\Gamma^{\circ}{\subseteq}\Gamma$.
If, in addition, $\Gamma{\not=}\emptyset$ then $\card(\Gamma^{\circ}){<}\card(\Gamma)$.
\end{lemma}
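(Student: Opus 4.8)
The plan is to prove the three assertions in order: the first two by unfolding the definition of $\Gamma^{\circ}$ together with the minimality built into each $\Sigma_{B}$, and the last by exhibiting an explicit member of $\Gamma$ that escapes $\Gamma^{\circ}$. Throughout, recall that $\Gamma^{\circ}$ is a union of sets of the form $\Sigma_{B}$, one for each $B{\in}\Fo$ with ${\square}B{\in}\Gamma$ or ${\lozenge}B{\in}\Gamma$.

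First I would check that $\Gamma^{\circ}$ is closed. Each $\Sigma_{B}$ is closed by definition, and the defining conditions of closedness merely demand that certain immediate subformulas of a member be members as well. Such conditions are inherited by any union of closed sets: if a formula lies in the union it lies in some $\Sigma_{B}$, and the subformulas required by closedness already belong to that $\Sigma_{B}$, hence to the union. This gives the first claim with no real difficulty.

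Next, assuming $\Gamma$ is closed, I would show $\Gamma^{\circ}{\subseteq}\Gamma$. Take $C{\in}\Gamma^{\circ}$, so $C{\in}\Sigma_{B}$ for some $B$ with ${\square}B{\in}\Gamma$ or ${\lozenge}B{\in}\Gamma$. By closedness of $\Gamma$, in either case $B{\in}\Gamma$. Now $\Gamma$ is itself a closed set containing $B$, whereas $\Sigma_{B}$ is by definition the \emph{least} closed set containing $B$; hence $\Sigma_{B}{\subseteq}\Gamma$, and in particular $C{\in}\Gamma$.

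For the strict inequality I would assume in addition that $\Gamma{\not=}\emptyset$. Since $\Gamma{\subseteq}\Sigma_{A}$ and $\Sigma_{A}$ is finite, $\Gamma$ is finite and nonempty, so it contains a formula $D$ of maximal length. The one step requiring any care---and the main obstacle---is to see that $D{\notin}\Gamma^{\circ}$. Indeed, every member of $\Gamma^{\circ}$ belongs to some $\Sigma_{B}$ with ${\square}B{\in}\Gamma$ or ${\lozenge}B{\in}\Gamma$, and every member of $\Sigma_{B}$ is a subformula of $B$, hence of length at most ${\parallel}B{\parallel}{<}{\parallel}{\square}B{\parallel}{=}{\parallel}{\lozenge}B{\parallel}$; this last fact follows from the recursive description of $\Sigma_{B}$ established above, by a routine induction on $B$. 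Were $D$ in $\Gamma^{\circ}$, it would therefore be strictly shorter than a modal formula lying in $\Gamma$, contradicting the maximality of ${\parallel}D{\parallel}$. Combining $D{\in}\Gamma{\setminus}\Gamma^{\circ}$ with the inclusion $\Gamma^{\circ}{\subseteq}\Gamma$ just proved, $\Gamma^{\circ}$ is a proper subset of the finite set $\Gamma$, whence $\card(\Gamma^{\circ}){<}\card(\Gamma)$.
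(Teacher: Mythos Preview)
Your argument is correct in all three parts. The paper itself leaves this lemma to the reader, so there is no proof to compare against; your maximal-length witness for the strict inclusion is exactly the kind of elementary argument one would expect here.
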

\begin{proof}
Left to the reader.
\medskip
\end{proof}
For all $A{\in}\Fo$, for all $\Gamma{\subseteq}\Sigma_{A}$ and for all $\alpha{\in}\N$, let $\Gamma^{\alpha}$ be inductively defined as follows:
\begin{itemize}
\item $\Gamma^{0}{=}\Gamma$,
\item $\Gamma^{\alpha{+}1}{=}{\Gamma^{\alpha}}^{\circ}$.
\end{itemize}
\begin{lemma}
For all $A{\in}\Fo$, for all $\Gamma{\subseteq}\Sigma_{A}$ and for all $\alpha{\in}\N$, $\card(\Gamma^{\alpha}){\leq}\max\{0,
$\linebreak$
\card(\Sigma_{A}){-}\alpha\}$.
\end{lemma}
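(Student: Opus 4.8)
The plan is to argue by induction on $\alpha$, the point being that each application of the operation $(\cdot)^{\circ}$ strictly decreases the cardinality once the argument is closed, while the very first application already forces $A$ out of the set. First I would record two facts that hold throughout the induction. Since $\Sigma_{A}$ is closed, one checks that $\Gamma^{\alpha}{\subseteq}\Sigma_{A}$ for every $\alpha{\in}\N$: indeed, if ${\square}B$ or ${\lozenge}B$ lies in a subset of $\Sigma_{A}$ then $B{\in}\Sigma_{A}$, hence $\Sigma_{B}{\subseteq}\Sigma_{A}$. Moreover, by the closedness clause of the preceding lemma, $\Gamma^{\alpha}$ is closed for every $\alpha{\geq}1$, being of the form $(\Gamma^{\alpha{-}1})^{\circ}$.

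For the base case $\alpha{=}0$ we have $\card(\Gamma^{0}){=}\card(\Gamma){\leq}\card(\Sigma_{A}){=}\max\{0,\card(\Sigma_{A})\}$ because $\Gamma{\subseteq}\Sigma_{A}$. For the inductive step I would distinguish three situations. If $\Gamma^{\alpha}{=}\emptyset$ then $\Gamma^{\alpha{+}1}{=}\emptyset$ and the bound is trivial. If $\Gamma^{\alpha}{\not=}\emptyset$ and $\alpha{\geq}1$, then $\Gamma^{\alpha}$ is closed and nonempty, so the strict-decrease clause of the preceding lemma yields $\card(\Gamma^{\alpha{+}1}){<}\card(\Gamma^{\alpha})$; combined with the induction hypothesis $\card(\Gamma^{\alpha}){\leq}\max\{0,\card(\Sigma_{A}){-}\alpha\}$ (the maximum being $\card(\Sigma_{A}){-}\alpha$, since $\card(\Gamma^{\alpha}){\geq}1$), this gives $\card(\Gamma^{\alpha{+}1}){\leq}\card(\Sigma_{A}){-}\alpha{-}1{\leq}\max\{0,\card(\Sigma_{A}){-}\alpha{-}1\}$.

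The delicate situation is the remaining one, $\alpha{=}0$ with $\Gamma{\not=}\emptyset$, where $\Gamma$ need not be closed, so the strict-decrease clause does not apply to $\Gamma$ itself. Here I would pass to the least closed set $\overline{\Gamma}$ containing $\Gamma$; since $\Sigma_{A}$ is closed and contains $\Gamma$, we have $\overline{\Gamma}{\subseteq}\Sigma_{A}$, and since $\Gamma{\subseteq}\overline{\Gamma}$, every modal formula of $\Gamma$ is a modal formula of $\overline{\Gamma}$, whence $\Gamma^{\circ}{\subseteq}\overline{\Gamma}^{\circ}$ straight from the definition of $(\cdot)^{\circ}$. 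Applying the preceding lemma to the closed nonempty set $\overline{\Gamma}$ then gives $\card(\overline{\Gamma}^{\circ}){<}\card(\overline{\Gamma}){\leq}\card(\Sigma_{A})$, so that $\card(\Gamma^{1}){=}\card(\Gamma^{\circ}){\leq}\card(\Sigma_{A}){-}1{=}\max\{0,\card(\Sigma_{A}){-}1\}$, the last equality holding because $A{\in}\Sigma_{A}$ forces $\card(\Sigma_{A}){\geq}1$.

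Alternatively, one can avoid introducing $\overline{\Gamma}$ by observing that every formula in $\Gamma^{\circ}$ is a subformula of some $B$ with ${\square}B$ or ${\lozenge}B$ in $\Gamma{\subseteq}\Sigma_{A}$, hence strictly shorter than $A$; therefore $A{\not\in}\Gamma^{\circ}$ and $\Gamma^{\circ}{\subsetneq}\Sigma_{A}$, giving $\card(\Gamma^{1}){\leq}\card(\Sigma_{A}){-}1$ again. I expect this first step, where closedness of the argument cannot be assumed, to be the only genuine obstacle; every other step is routine bookkeeping with the maximum.
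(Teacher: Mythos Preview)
Your argument is correct and follows exactly the inductive scheme the paper indicates (its proof reads in full ``By induction on $\alpha$''). You have additionally spelled out the one non-obvious point the paper leaves implicit, namely the passage from $\alpha{=}0$ to $\alpha{=}1$ when $\Gamma$ is not assumed closed; both of your fixes for this case are valid, and in the paper's actual applications the issue never arises since tips carry only \emph{closed} topics $\Gamma$.
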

\begin{proof}
By induction on $\alpha$.
\medskip
\end{proof}
\section{Semantics}\label{section:semantics}
\subsection*{Frames}
A {\em frame}\/ is a relational structure of the form $(W,{\leq},{R})$ where $W$ is a nonempty set of {\em worlds,} $\leq$ is a preorder on $W$ and ${R}$ is a binary relation on $W$.
\\
\\
Let ${\mathcal C}_{\allfra}$ be the class of all frames.
\\
\\
For all classes ${\mathcal C}$ of frames, let ${\mathcal C}^{\omega}$ be the set of all finite frames in ${\mathcal C}$.
\subsection*{Confluences}
A frame $(W,{\leq},{R})$ is
\begin{itemize}
\item {\em forward confluent}\/ if ${\geq}{\circ}{R}{\subseteq}{R}{\circ}{\geq}$, i.e. for all $s,t,u{\in}W$, if $t{\leq}s$ and $t{R}u$ then there exists $v{\in}W$ such that $s{R}v$ and $u{\leq}v$,
\item {\em backward confluent}\/ if ${R}{\circ}{\leq}{\subseteq}{\leq}{\circ}{R}$, i.e. for all $s,t,u{\in}W$, if $s{R}t$ and $t{\leq}u$ then there exists $v{\in}W$ such that $s{\leq}v$ and $v{R}u$,
\item {\em downward confluent}\/ if ${\leq}{\circ}{R}{\subseteq}{R}{\circ}{\leq}$, i.e. for all $s,t,u{\in}W$, if $s{\leq}t$ and $t{R}u$ then there exists $v{\in}W$ such that $s{R}v$ and $v{\leq}u$,
\item {\em upward confluent}\/ if ${R}{\circ}{\geq}{\subseteq}{\geq}{\circ}{R}$, i.e. for all $s,t,u{\in}W$, if $s{R}t$ and $u{\leq}t$ then there exists $v{\in}W$ such that $v{\leq}s$ and $v{R}u$.
\end{itemize}
Let ${\mathcal C}_{\fcfra}$ be the class of all forward confluent frames, ${\mathcal C}_{\bcfra}$ be the class of all backward confluent frames, ${\mathcal C}_{\dcfra}$ be the class of all downward confluent frames, ${\mathcal C}_{\ucfra}$ be the class of all upward confluent frames, ${\mathcal C}_{\fbcfra}$ be the class of all forward confluent and backward confluent frames, ${\mathcal C}_{\fdcfra}$ be the class of all forward confluent and downward confluent frames, ${\mathcal C}_{\fucfra}$ be the class of all forward confluent and upward confluent frames, ${\mathcal C}_{\bdcfra}$ be the class of all backward confluent and downward confluent frames, ${\mathcal C}_{\bucfra}$ be the class of all backward confluent and upward confluent frames, ${\mathcal C}_{\ducfra}$ be the class of all downward confluent and upward confluent frames, ${\mathcal C}_{\fbdcfra}$ be the class of all forward confluent, backward confluent and downward confluent frames, ${\mathcal C}_{\fbucfra}$ be the class of all forward confluent, backward confluent and upward confluent frames, ${\mathcal C}_{\fducfra}$ be the class of all forward confluent, downward confluent and upward confluent frames, ${\mathcal C}_{\bducfra}$ be the class of all backward confluent, downward confluent and upward confluent frames and ${\mathcal C}_{\fbducfra}$ be the class of all forward confluent, backward confluent, downward confluent and upward confluent frames.
\subsection*{Valuations}
A {\em valuation on a frame $(W,{\leq},{R})$}\/ is a function $V\ :\ \At\longrightarrow\wp(W)$ such that for all atoms $p$, $V(p)$ is $\leq$-closed.
\subsection*{Models}
A {\em model}\/ is a $4$-tuple consisting of the $3$ components of a frame and a valuation on that frame.
\subsection*{Satisfiability}
With respect to a model $(W,{\leq},{R},V)$, for all $s{\in}W$ and for all $A{\in}\Fo$, the {\em satisfiability of $A$ at $s$ in $(W,{\leq},{R},V)$}\/ (in symbols $(W,{\leq},{R},V),s{\models}A$) is inductively defined as follows:
\begin{itemize}
\item $(W,{\leq},{R},V),s{\models}p$ if and only if $s{\in}V(p)$,
\item $(W,{\leq},{R},V),s{\models}A{\rightarrow}B$ if and only if for all $t{\in}W$, if $s{\leq}t$ and $(W,{\leq},{R},V),t
$\linebreak$
{\models}A$ then $(W,{\leq},{R},V),t{\models}B$,
\item $(W,{\leq},{R},V),s{\models}{\top}$,
\item $(W,{\leq},{R},V),s{\not\models}{\bot}$,
\item $(W,{\leq},{R},V),s{\models}A{\wedge}B$ if and only if $(W,{\leq},{R},V),s{\models}A$ and $(W,{\leq},{R},V),s
$\linebreak$
{\models}B$,
\item $(W,{\leq},{R},V),s{\models}A{\vee}B$ if and only if either $(W,{\leq},{R},V),s{\models}A$, or $(W,{\leq},{R},
$\linebreak$
V),s{\models}B$,
\item $(W,{\leq},{R},V),s{\models}{\square}A$ if and only if for all $t{\in}W$, if $s{\leq}t$ then for all $u{\in}W$, if $t{R}u$ then $(W,{\leq},{R},V),u{\models}A$,
\item $(W,{\leq},{R},V),s{\models}{\lozenge}A$ if and only if there exists $t{\in}W$ such that $t{\leq}s$ and there exists $u{\in}W$ such that $t{R}u$ and $(W,{\leq},{R},V),u{\models}A$.
\end{itemize}
For all models $(W,{\leq},{R},V)$, for all $s{\in}W$ and for all $A{\in}\Fo$, if $(W,{\leq},{R},V)$ is clear from the context then when we write ``$s{\models}A$'' we mean ``$(W,{\leq},{R},V),s{\models}A$''.
\begin{lemma}[Heredity Property]
Let $(W,{\leq},{R},V)$ be a model.
For all $A{\in}\Fo$ and for all $s,t{\in}W$, if $s{\models}A$ and $s{\leq}t$ then $t{\models}A$.
\end{lemma}
\begin{proof}
By induction on $A$.
\medskip
\end{proof}
\subsection*{About Fischer Servi and Wijesekera}
Fischer Servi~\cite{FischerServi:1984} has defined the satisfiability of $\lozenge$-formulas as follows:
\begin{itemize}
\item $(W,{\leq},{R},V),s{\models_{\mathtt{FS}}}{\lozenge}A$ if and only if there exists $t{\in}W$ such that $s{R}t$ and $(W,{\leq},{R},V),t{\models_{\mathtt{FS}}}A$.
\end{itemize}
The definition of the satisfiability of formulas considered by Fischer Servi necessitates to restrict the discussion to the class of all forward confluent frames, otherwise the above-described Heredity Property would not hold.
\\
\\
Wijesekera~\cite{Wijesekera:1990} has defined the satisfiability of $\lozenge$-formulas as follows:
\begin{itemize}
\item $(W,{\leq},{R},V),s{\models_{\mathtt{W}}}{\lozenge}A$ if and only if for all $t{\in}W$, if $s{\leq}t$ then there exists $u{\in}W$ such that $t{R}u$ and $(W,{\leq},{R},V),u{\models_{\mathtt{W}}}A$.
\end{itemize}
The definition of the satisfiability of formulas considered by Wijesekera does not necessitate to restrict the discussion to a specific class of frames.
\begin{lemma}
In the class of all forward confluent frames, the definition of the satisfiability of formulas considered by Fischer Servi, the definition of the satisfiability of formulas considered by Wijesekera and our definition of the satisfiability of formulas give rise to the same relation of satisfiability between formulas, worlds and models.
\end{lemma}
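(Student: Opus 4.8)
The plan is to fix a forward confluent model $(W,{\leq},{R},V)$ and to prove, by induction on the structure of $A{\in}\Fo$, that for every $s{\in}W$ the three assertions $(W,{\leq},{R},V),s{\models_{\mathtt{FS}}}A$, $(W,{\leq},{R},V),s{\models_{\mathtt{W}}}A$ and $(W,{\leq},{R},V),s{\models}A$ are equivalent. Since the three definitions of satisfiability share all their clauses except the one governing $\lozenge$-formulas, every base case (atoms, $\top$, $\bot$) and every inductive case other than the one for $\lozenge$-formulas (the connectives $\rightarrow$, $\wedge$, $\vee$ and the operator $\square$) is immediate: the matching clauses refer only to proper subformulas, so the induction hypothesis yields the equivalence at once.

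The substantive case is $A{=}{\lozenge}B$. Here the induction hypothesis guarantees that the three relations agree on $B$, so I may treat satisfaction of $B$ as a single relation and write $u{\models}B$ without ambiguity; furthermore, by the Heredity Property applied to $B$, if $u{\models}B$ and $u{\leq}v$ then $v{\models}B$. It then suffices to establish a cycle of implications linking the three $\lozenge$-clauses evaluated at $s$.

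For the passage from Wijesekera's clause to ours I would instantiate its universally quantified world with $s$ itself, which is licit because $s{\leq}s$, obtaining some $u$ with $s{R}u$ and $u{\models}B$, and then witness our clause by taking $s$ as the required $\leq$-predecessor. For the passage from our clause to Fischer Servi's I would start from worlds $t,u$ with $t{\leq}s$, $t{R}u$ and $u{\models}B$, apply forward confluence to $t{\leq}s$ and $t{R}u$ to obtain $v$ with $s{R}v$ and $u{\leq}v$, and conclude $v{\models}B$ by heredity, which is precisely Fischer Servi's clause. For the passage from Fischer Servi's clause to Wijesekera's I would take $t$ with $s{R}t$ and $t{\models}B$, fix an arbitrary $t'$ with $s{\leq}t'$, apply forward confluence to $s{\leq}t'$ and $s{R}t$ to obtain $v$ with $t'{R}v$ and $t{\leq}v$, and again invoke heredity to reach $v{\models}B$, verifying Wijesekera's clause. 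Chaining these three implications closes the loop.

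The main obstacle is confined to the $\lozenge$-case, and within it to the two implications where the frame condition genuinely intervenes: moving from our backward-looking existential clause, or from Fischer Servi's plain existential clause, to a clause quantifying over $\leq$-successors needs exactly the inclusion ${\geq}{\circ}{R}{\subseteq}{R}{\circ}{\geq}$ expressing forward confluence, followed by an appeal to heredity to transport satisfaction of $B$ upward along $\leq$. The only point demanding care is the bookkeeping of which direction of the preorder instantiates which variable in the forward confluence condition; the remainder is routine.
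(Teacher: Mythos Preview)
Your proposal is correct and follows exactly the approach the paper sketches: structural induction on the formula, with the only nontrivial case being $\lozenge$, handled via forward confluence and heredity. The paper merely states that the reader may verify the induction; you have supplied the details (a correct cycle of implications among the three $\lozenge$-clauses), so there is nothing to add.
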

\begin{proof}
Let $(W,{\leq},{R},V)$ be a model.
Suppose $(W,{\leq},{R})$ is forward confluent.
By induction on $A$, the reader may easily verify that for all $s{\in}W$, $s{\models_{\mathtt{FS}}}A$ if and only if $s{\models_{\mathtt{W}}}A$ if and only if $s{\models}A$.
\medskip
\end{proof}
\subsection*{Validity}
A formula $A$ is {\em true in a model $(W,{\leq},{R},V)$}\/ (in symbols $(W,{\leq},{R},V){\models}A$) if for all $s{\in}W$, $s{\models}A$.
\\
\\
A formula $A$ is {\em valid in a frame $(W,{\leq},{R})$}\/ (in symbols $(W,{\leq},{R}){\models}A$) if for all models $(W,{\leq},{R},V)$ based on $(W,{\leq},{R})$, $(W,{\leq},{R},V){\models}A$.
\\
\\
For all classes ${\mathcal C}$ of frames, let $\Log({\mathcal C}){=}\{A{\in}\Fo\ :\ \text{for}\ \text{all}\ (W,{\leq},{R}){\in}{\mathcal C},\ (W,{\leq},{R}){\models}
$\linebreak$
A\}$.
\begin{proposition}\label{proposition:facile:about:inclusion}
\begin{itemize}
\item $\Log({\mathcal C}_{\fcfra}){\subseteq}\Log({\mathcal C}_{\fucfra}){\subseteq}\Log({\mathcal C}_{\fucfra}^{\omega})$,
\item $\Log({\mathcal C}_{\fdcfra}){\subseteq}\Log({\mathcal C}_{\fducfra}){\subseteq}\Log({\mathcal C}_{\fducfra}^{\omega})$.
\end{itemize}
\end{proposition}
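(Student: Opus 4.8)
The plan is to reduce all four inclusions to a single elementary observation about the anti-monotonicity of the operator $\Log$. First I would note that for any two classes ${\mathcal C}$ and ${\mathcal C}'$ of frames with ${\mathcal C}{\subseteq}{\mathcal C}'$, one has $\Log({\mathcal C}'){\subseteq}\Log({\mathcal C})$: indeed, if $A{\in}\Log({\mathcal C}')$ then $A$ is valid in every frame of ${\mathcal C}'$, hence a fortiori in every frame of the subclass ${\mathcal C}$, so $A{\in}\Log({\mathcal C})$. This follows directly from the definition of $\Log$ and requires no case analysis on the structure of $A$.

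Next I would record the relevant class inclusions. Since a forward confluent and upward confluent frame is in particular forward confluent, ${\mathcal C}_{\fucfra}{\subseteq}{\mathcal C}_{\fcfra}$; and since every finite frame in ${\mathcal C}_{\fucfra}$ belongs to ${\mathcal C}_{\fucfra}$, we have ${\mathcal C}_{\fucfra}^{\omega}{\subseteq}{\mathcal C}_{\fucfra}$. Likewise, a frame that is forward, downward and upward confluent is in particular forward and downward confluent, so ${\mathcal C}_{\fducfra}{\subseteq}{\mathcal C}_{\fdcfra}$, and ${\mathcal C}_{\fducfra}^{\omega}{\subseteq}{\mathcal C}_{\fducfra}$ by the same finiteness remark.

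Applying the anti-monotonicity observation to each of these four inclusions yields, respectively, $\Log({\mathcal C}_{\fcfra}){\subseteq}\Log({\mathcal C}_{\fucfra})$, then $\Log({\mathcal C}_{\fucfra}){\subseteq}\Log({\mathcal C}_{\fucfra}^{\omega})$, then $\Log({\mathcal C}_{\fdcfra}){\subseteq}\Log({\mathcal C}_{\fducfra})$, and finally $\Log({\mathcal C}_{\fducfra}){\subseteq}\Log({\mathcal C}_{\fducfra}^{\omega})$; chaining the first two and the last two gives the two displayed chains. There is no real obstacle here: the only point that demands care is the direction of the inclusions, namely that imposing additional confluence conditions or restricting to finite frames shrinks the class of frames and therefore enlarges its logic. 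This is exactly why the proposition is labelled \emph{facile}, and I would expect the author's proof to be a one- or two-line remark to the same effect.
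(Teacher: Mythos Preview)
Your argument is correct and is exactly the routine anti-monotonicity check the authors had in mind: the paper's proof is simply ``Left to the reader,'' and your expansion --- shrinking the class by adding confluence conditions or restricting to finite frames enlarges $\Log$ --- is the intended one-line justification.
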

\begin{proof}
Left to the reader.
\medskip
\end{proof}
\section{Axiomatization}\label{section:axiomatization}
\subsection*{Preliminaries}
See~\cite[Chapter~$2$]{Chagrov:Zakharyaschev:1997} for an introduction to the standard axioms of $\IPL$ and the standard inference rules of $\IPL$.
\\
\\
The axioms
\begin{description}
%
%
%
%
%
%
\item[$(\mathbf{D}\square)$] ${\square}p{\wedge}{\square}q{\rightarrow}{\square}(p{\wedge}q)$,
\item[$(\mathbf{D}\lozenge)$] ${\lozenge}(p{\vee}q){\rightarrow}{\lozenge}p{\vee}{\lozenge}q$,
\item[$(\axiomN{\square})$] ${\square}{\top}$,
\item[$(\axiomN{\lozenge})$] ${\neg}{\lozenge}{\bot}$,
\item[$(\axiomwCD)$] ${\square}(p{\vee}q){\rightarrow}(({\lozenge}p{\rightarrow}{\square}q){\rightarrow}{\square}q)$,
\item[$(\Axiom\mathbf{f})$] ${\lozenge}(p{\rightarrow}q){\rightarrow}({\square}p{\rightarrow}{\lozenge}q)$,
\item[$(\Axiom\mathbf{b})$] $({\lozenge}p{\rightarrow}{\square}q){\rightarrow}{\square}(p{\rightarrow}q)$,
\item[$(\Axiom\mathbf{d})$] ${\square}(p{\vee}q){\rightarrow}{\lozenge}p{\vee}{\square}q$,
\end{description}
and the inference rules
\begin{description}
\item[$(\Rule{\square})$] $\frac{p{\rightarrow}q}{{\square}p{\rightarrow}{\square}q}$,
\item[$(\Rule{\lozenge})$] $\frac{p{\rightarrow}q}{{\lozenge}p{\rightarrow}{\lozenge}q}$,
\item[$(\Rule\mathbf{I})$] $\frac{{\lozenge}p{\rightarrow}q{\vee}{\square}(p{\rightarrow}r)}{{\lozenge}p{\rightarrow}q{\vee}{\lozenge}r}$,
\end{description}
have been recently used in IMLs~\cite{Balbiani:et:al:2021,Balbiani:et:al:2024,Balbiani:et:al:2024b,Balbiani:Gencer:preliminary:draft,Dalmonte:et:al:2021,Das:Marin:2023,VanDerGiessen:2023,Girlando:et:al:2023,Girlando:et:al:2024,Marin:et:al:2021,Mendler:et:al:2021,Olivetti:2022}.
\subsection*{Intuitionistic modal logics}
An {\em IML}\/ is a set of formulas closed under uniform substitution, containing the standard axioms of $\IPL$, closed under the standard inference rules of $\IPL$, containing the axioms $(\mathbf{D}\square)$, $(\mathbf{D}\lozenge)$, $(\axiomN{\square})$, $(\axiomN{\lozenge})$ and$(\axiomwCD)$ and closed under the inference rules $(\Rule{\square})$, $(\Rule{\lozenge})$ and $(\Rule\mathbf{I})$.
\\
\\
The reader may easily see that axiom $(\axiomwCD)$ has similarities with P\v{r}enosil's equation ${\lozenge}a{\rightarrow}{\square}b{\leq}{\square}(a{\vee}b){\rightarrow}{\square}b$ and inference rule $(\Rule\mathbf{I})$ has similarities with P\v{r}enosil's {\em positive modal law}\/ ${\lozenge}b{\leq}{\square}a{\vee}c{\Rightarrow}{\lozenge}b{\leq}{\lozenge}(a{\wedge}b){\vee}c$~\cite{Prenosil:2014}.
\\
\\
Let $\L_{\min}$ be the least IML.
\\
\\
For all IMLs $\L$ and for all sets $\Sigma$ of formulas, let $\L{\oplus}\Sigma$ be the least IML containing $\L$ and $\Sigma$.
\\
\\
For all IMLs $\L$ and for all formulas $A$, we write $\L{\oplus}A$ instead of $\L{\oplus}\{A\}$.
\\
\\
Let $\L_{\fcfra}{=}\L_{\min}{\oplus}(\Axiom\mathbf{f})$, $\L_{\bcfra}{=}\L_{\min}{\oplus}(\Axiom\mathbf{b})$ and $\L_{\dcfra}{=}\L_{\min}{\oplus}(\Axiom\mathbf{d})$.
We also write $\L_{\fbcfra}$ to denote $\L_{\min}{\oplus}\{(\Axiom\mathbf{f}),(\Axiom\mathbf{b})\}$, $\L_{\fdcfra}$ to denote $\L_{\min}{\oplus}\{(\Axiom\mathbf{f}),(\Axiom\mathbf{d})\}$, etc.
%
%
%
%
%
%
%
%
%
%
%
%
%
%
%
%
%
%
%
%
%
%
%
%
%
%
%
%
%
%
%
%
%
%
%
%
%
%
%
%
%
%
%
%
%
%
%
%
%
%
%
%
%
%
\section{Canonical model}\label{section:canonical:model}
From now on in this note, let $\L$ be either $\FIK$, or $\LIK$.
\subsection*{Theories}
A {\em theory}\/ is a set of formulas containing $\L$ and closed with respect to the inference rule of modus ponens.
\begin{lemma}\label{lemma:L:is:theory}
$\L$ is a theory.
\end{lemma}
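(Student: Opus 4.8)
The plan is to check the two defining clauses of a theory directly against the definition of an IML. Recall that a theory is a set of formulas that both contains $\L$ and is closed under the inference rule of modus ponens, so it suffices to verify these two conditions for $\L$ itself.

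First I would dispose of the containment clause: ${\L}{\subseteq}{\L}$ holds trivially by reflexivity of set inclusion, so $\L$ contains $\L$. The only point requiring any argument is therefore closure under modus ponens. Since $\L$ is by hypothesis either $\FIK$ or $\LIK$, and each of these is an IML, I would appeal to the definition of an IML, according to which $\L$ is closed under the standard inference rules of $\IPL$. Because modus ponens figures among those standard inference rules, it follows that for all formulas $A,B$, if $A{\in}\L$ and $A{\rightarrow}B{\in}\L$ then $B{\in}\L$, which is exactly the required closure condition.

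Having verified both clauses, I would conclude that $\L$ is a theory. I do not expect any genuine obstacle here: the statement is immediate once the definitions are unfolded, the only subtlety being to notice that modus ponens is already packaged into the notion of an IML through the requirement of closure under the standard inference rules of $\IPL$, so no separate verification of derivability is needed.
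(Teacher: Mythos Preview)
Your proposal is correct and follows exactly the approach the paper intends: the paper's own proof is simply ``Left to the reader,'' and what you have written is precisely the routine unfolding of definitions that the reader is expected to supply. There is nothing to add.
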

\begin{proof}
Left to the reader.
\medskip
\end{proof}
A theory $s$ is {\em proper}\/ if ${\bot}{\not\in}s$.
\begin{lemma}\label{lemma:L:is:proper:theory}
$\L$ is proper.
\end{lemma}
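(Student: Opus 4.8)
The plan is to argue semantically, relying on the soundness of $\L$ with respect to its intended class of frames together with the elementary fact that $\bot$ is satisfied nowhere.

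First I would recall that, since $\L$ is either $\FIK$ or $\LIK$, the soundness halves of the completeness results established in~\cite{Balbiani:et:al:2024} and~\cite{Balbiani:et:al:2024b} yield the inclusion $\L{\subseteq}\Log({\mathcal C})$, where ${\mathcal C}$ is ${\mathcal C}_{\fcfra}$ when $\L{=}\FIK$ and ${\mathcal C}_{\fdcfra}$ when $\L{=}\LIK$. Should one prefer a self-contained justification, this inclusion can be re-obtained by a routine induction on the length of derivations in $\L$: one checks that the standard axioms of $\IPL$, the axioms $(\mathbf{D}\square)$, $(\mathbf{D}\lozenge)$, $(\axiomN{\square})$, $(\axiomN{\lozenge})$, $(\axiomwCD)$ and the relevant confluence axiom among $(\Axiom\mathbf{f})$ and $(\Axiom\mathbf{d})$ are valid in every frame of ${\mathcal C}$, and that the inference rules of $\IPL$ together with $(\Rule{\square})$, $(\Rule{\lozenge})$ and $(\Rule\mathbf{I})$ preserve validity in ${\mathcal C}$. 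The Heredity Property is the only ingredient needed to see that the $\IPL$-part behaves correctly, while validity of the confluence axiom is exactly where the assumption that ${\mathcal C}$ consists of forward (and, for $\LIK$, downward) confluent frames is used.

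Second I would invoke the satisfiability clause for $\bot$: for every model $(W,{\leq},{R},V)$ and every $s{\in}W$ we have $s{\not\models}{\bot}$. Since the set of worlds of every frame is nonempty, it follows that $\bot$ is true in no model, hence valid in no frame, and therefore $\bot{\notin}\Log({\mathcal C})$ whatever ${\mathcal C}$ is. Combining this with the inclusion $\L{\subseteq}\Log({\mathcal C})$ gives $\bot{\notin}\L$, which is precisely the assertion that $\L$ is proper.

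I do not expect a genuine obstacle here. The unsatisfiability of $\bot$ is immediate from its satisfiability clause, and the only point carrying any content is the soundness inclusion $\L{\subseteq}\Log({\mathcal C})$. That inclusion is already on record in the cited papers, and even verified from scratch it reduces to confirming the validity of finitely many axioms and the preservation of validity under finitely many rules, the sole delicate cases being the confluence axioms $(\Axiom\mathbf{f})$ and $(\Axiom\mathbf{d})$, whose validity simply mirrors the corresponding confluence conditions imposed on the frames of ${\mathcal C}$.
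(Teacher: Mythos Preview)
Your proposal is correct and follows essentially the same approach as the paper, which simply cites its soundness proposition (the inclusion $\L{\subseteq}\Log({\mathcal C})$ for the relevant ${\mathcal C}$) to conclude $\bot{\notin}\L$. You have merely unpacked in detail what the paper compresses into a single reference.
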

\begin{proof}
By Proposition~\ref{proposition:soundness:FIK:LIK}.
\medskip
\end{proof}
A proper theory $s$ is {\em prime}\/ if for all $A,B{\in}\Fo$, if $A{\vee}B{\in}s$ then either $A{\in}s$, or $B{\in}s$.
\begin{lemma}
There exists prime theories.
\end{lemma}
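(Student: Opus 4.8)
The plan is to build a prime theory by a Lindenbaum-style saturation, starting from the proper theory $\L$ itself (which is a theory by Lemma~\ref{lemma:L:is:theory} and proper by Lemma~\ref{lemma:L:is:proper:theory}) and repeatedly resolving disjunctions while preserving properness.

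First I would record the deduction-theorem machinery. For a theory $s$ and a formula $C$, write $s{+}C{=}\{D{\in}\Fo:\ C{\rightarrow}D{\in}s\}$. Using the standard $\IPL$ axioms contained in $\L$ (hence in every theory $s$) together with closure under modus ponens, one checks routinely that $s{+}C$ is the least theory containing $s{\cup}\{C\}$, and that $s{+}C$ is improper if and only if $\neg C{\in}s$ (since ${\bot}{\in}s{+}C$ amounts to $C{\rightarrow}{\bot}{\in}s$).

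Next comes the key combinatorial step. Enumerate the disjunctions of $\Fo$ as $(C_{n}{\vee}D_{n})_{n{\in}\N}$ in such a way that every disjunction occurs infinitely often (possible since $\Fo$ is countable). Define an increasing chain of proper theories $s_{0}{\subseteq}s_{1}{\subseteq}\cdots$ by $s_{0}{=}\L$ and, given a proper theory $s_{n}$: if $C_{n}{\vee}D_{n}{\in}s_{n}$, let $s_{n{+}1}$ be $s_{n}{+}C_{n}$ when this is proper and $s_{n}{+}D_{n}$ otherwise; if $C_{n}{\vee}D_{n}{\notin}s_{n}$, let $s_{n{+}1}{=}s_{n}$. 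The crucial point is that $s_{n{+}1}$ is always proper: if both $s_{n}{+}C_{n}$ and $s_{n}{+}D_{n}$ were improper then $\neg C_{n}{\in}s_{n}$ and $\neg D_{n}{\in}s_{n}$, and since the $\IPL$ theorem $(C_{n}{\vee}D_{n}){\rightarrow}(\neg C_{n}{\rightarrow}(\neg D_{n}{\rightarrow}{\bot}))$ lies in $s_{n}$, three applications of modus ponens starting from $C_{n}{\vee}D_{n}{\in}s_{n}$ would yield ${\bot}{\in}s_{n}$, contradicting properness.

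Finally I would set $s{=}{\bigcup}_{n{\in}\N}s_{n}$ and verify it is as required. It contains $\L$ and is closed under modus ponens (any witnessing pair already lies in some $s_{n}$), so it is a theory; it is proper because each $s_{n}$ is, whence ${\bot}{\notin}s$. For primeness, suppose $A{\vee}B{\in}s$; then $A{\vee}B{\in}s_{k}$ for all sufficiently large $k$, and since $A{\vee}B$ occurs as $C_{m}{\vee}D_{m}$ for infinitely many $m$, we may choose such an $m$ with $A{\vee}B{\in}s_{m}$, so that $A{\in}s_{m{+}1}{\subseteq}s$ or $B{\in}s_{m{+}1}{\subseteq}s$. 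Thus $s$ is a prime theory. The only delicate point is this last synchronization between membership in the union and the stage at which a disjunction is actually processed, which is precisely why the enumeration is arranged so that each disjunction recurs infinitely often; everything else reduces to the deduction-theorem bookkeeping and the single intuitionistic validity used above.
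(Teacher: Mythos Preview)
Your argument is correct and follows essentially the same route as the paper: the paper simply invokes a Lindenbaum Lemma from~\cite{Balbiani:Gencer:preliminary:draft} together with Lemma~\ref{lemma:L:is:proper:theory} (so that $\L$ is a proper theory to which Lindenbaum can be applied), whereas you spell out that Lindenbaum-style saturation in full detail. The only point worth flagging is that your infinite-repetition trick in the enumeration is a convenient but inessential device---the standard formulation (extend a proper theory avoiding a fixed formula to a prime one) would serve equally well here, and is likely closer to what the cited lemma actually states.
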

\begin{proof}
By Lindenbaum Lemma~\cite{Balbiani:Gencer:preliminary:draft} and Lemma~\ref{lemma:L:is:proper:theory}.
\medskip
\end{proof}
\subsection*{The canonical frame}
Let $(W_{\L},{\leq_{\L}},{R_{\L}})$ be the frame such that
\begin{itemize}
\item $W_{\L}$ is the nonempty set of all prime theories,
\item $\leq_{\L}$ is the preorder on $W_{\L}$ such that for all $s,t{\in}W_{\L}$, $s{\leq_{\L}}t$ if and only if $s{\subseteq}t$,
\item $R_{\L}$ is the binary relation on $W_{\L}$ such that for all $s,t{\in}W_{\L}$, $s{R_{\L}}t$ if and only if $s{\bowtie}t$.
\end{itemize}
The frame $(W_{\L},{\leq_{\L}},{R_{\L}})$ is called {\em canonical frame of $\L$.}
\begin{lemma}\label{lemma:canonical:frame:confluences}
\begin{itemize}
\item $(W_{\L},{\leq_{\L}},{R_{\L}})$ is in ${\mathcal C}_{\fcfra}$ when $\L{=}\FIK$,
\item $(W_{\L},{\leq_{\L}},{R_{\L}})$ is in ${\mathcal C}_{\fdcfra}$ when $\L{=}\LIK$.
\end{itemize}
\end{lemma}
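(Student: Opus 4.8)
The plan is to prove each bullet point by verifying the relevant confluence conditions directly on the canonical frame, using the definitions of $\leq_{\L}$ (set inclusion), $R_{\L}$ (the relation $\bowtie$), and the axioms available in $\L$. Since forward confluence is common to both cases, I would establish it first and then handle downward confluence separately for $\LIK$.

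\medskip

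First I would prove forward confluence: given prime theories $s,t,u$ with $t{\leq_{\L}}s$ (i.e. $t{\subseteq}s$) and $t{R_{\L}}u$ (i.e. $t{\bowtie}u$), I must produce a prime theory $v$ with $s{R_{\L}}v$ and $u{\leq_{\L}}v$. The natural candidate is built from $u$ together with the ``outputs'' forced by $s$: one forms a suitable set combining $u$ with $\{B:{\square}B{\in}s\}$, shows it is consistent and can be extended by Lindenbaum's Lemma to a prime theory $v$ satisfying both $u{\subseteq}v$ and $s{\bowtie}v$. The key leverage here is that the ambient logic $\L$ contains the axioms $(\mathbf{D}\square)$, $(\axiomN{\square})$, and crucially the Fischer Servi axiom $(\Axiom\mathbf{f})$ (present because $\L$ is either $\FIK$ or $\LIK$, both extensions of $\L_{\fcfra}$), which is exactly the axiom that semantically corresponds to forward confluence. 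The standard modal-logic argument is to assume the candidate set is inconsistent, derive a finite contradiction, and then push it back through $t{\bowtie}u$ and $t{\subseteq}s$ to contradict the consistency of $s$ or the relation $t{\bowtie}u$.

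\medskip

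For the second bullet, when $\L{=}\LIK$, I would additionally verify downward confluence: given $s{\leq_{\L}}t$ (i.e. $s{\subseteq}t$) and $t{R_{\L}}u$, produce $v$ with $s{R_{\L}}v$ and $v{\leq_{\L}}u$. Here the relevant axiom is $(\Axiom\mathbf{d})$, i.e. ${\square}(p{\vee}q){\rightarrow}{\lozenge}p{\vee}{\square}q$, which axiomatizes $\LIK$ and corresponds precisely to the condition of downward confluence. The construction of $v$ is dual: one assembles a set from the $\square$- and $\lozenge$-constraints that $s$ imposes while ensuring $v{\subseteq}u$, and uses the disjunction property of the prime theory $u$ together with axiom $(\Axiom\mathbf{d})$ to select, for each relevant disjunction, the correct disjunct to maintain consistency.

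\medskip

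The main obstacle I expect is the consistency verification of the candidate set before applying Lindenbaum's Lemma, particularly in the downward-confluence case: one must carefully manage the interaction between the two clauses defining $\bowtie$ (the $\square$-clause pushing formulas forward and the $\lozenge$-clause pulling formulas back) so that the set built satisfies $s{\bowtie}v$ on both clauses simultaneously while respecting the inclusion $v{\subseteq}u$. This is where axiom $(\Axiom\mathbf{d})$ and the primeness of $u$ must be combined with some precision, since a naive union of constraints may be inconsistent and one has to invoke the disjunction property to split cases correctly. The forward-confluence part is more routine, following the familiar canonical-model pattern for Fischer Servi logics, so I would expect it to be largely a matter of adapting the standard existence-lemma argument.
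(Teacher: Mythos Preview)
Your plan is sound and follows the standard canonical-model route for Fischer Servi--style intuitionistic modal logics: build the witness prime theory by a Lindenbaum extension, using $(\Axiom\mathbf{f})$ to control the $\lozenge$-clause of $\bowtie$ in the forward case and $(\Axiom\mathbf{d})$ together with primeness of $u$ in the downward case. The paper, however, does not carry out any of this: its entire proof is the one-line citation ``See~\cite{Balbiani:Gencer:preliminary:draft}'', deferring the argument to the companion paper where the canonical-model existence lemmas are established. So there is no methodological divergence to discuss---you are simply sketching what that cited proof presumably contains, whereas the present paper treats the result as imported.

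One small caution on your downward-confluence sketch: the witness $v$ must simultaneously satisfy $\{B:\square B\in s\}\subseteq v$, $v\subseteq u$, and $\{B:B\in v\}\subseteq\{B:\lozenge B\in s\}$, and the last constraint is not a simple upper bound by a fixed set (since $\{B:\lozenge B\in s\}$ need not be closed under modus ponens). The usual fix is a relative Lindenbaum construction that, at each stage, uses $(\Axiom\mathbf{d})$ and primeness of $s$ to decide which disjunct to add while keeping the $\lozenge$-clause intact; you allude to this, but it is the one place where a careless execution would leave a genuine gap.
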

\begin{proof}
See~\cite{Balbiani:Gencer:preliminary:draft}.
\medskip
\end{proof}
\subsection*{The canonical valuation}
The {\em canonical valuation of $\L$}\/ is the valuation $V_{\L}\ :\ \At\longrightarrow\wp(W_{\L})$ on $(W_{\L},{\leq_{\L}},
$\linebreak$
{R_{\L}})$ such that for all atoms $p$, $V_{\L}(p){=}\{s{\in}W_{\L}\ :\ p{\in}s\}$.
\subsection*{The canonical model}
The {\em canonical model of $\L$}\/ is the model $(W_{\L},{\leq_{\L}},{R_{\L}},V_{\L})$.
\begin{lemma}[Canonical Truth Lemma]\label{lemma:truth:lemma}
For all $A{\in}\Fo$ and for all $s{\in}W_{\L}$, $A{\in}s$ if and only if $(W_{\L},{\leq_{\L}},{R_{\L}},V_{\L}),s{\models}A$.
\end{lemma}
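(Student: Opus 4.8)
The plan is to argue by induction on the structure of $A$, following the standard pattern for canonical-model truth lemmas but paying attention to the intuitionistic reading of $\rightarrow$ and $\square$ and to the definition of the canonical relation ${R_\L}{=}{\bowtie}$. For the base cases, the atomic case is immediate from the definition of $V_\L$; the case $A{=}{\top}$ follows because every $s{\in}W_\L$ is a theory and hence contains $\L$, so $\top{\in}s$ while $s{\models}{\top}$ always; and the case $A{=}{\bot}$ follows because every $s{\in}W_\L$ is proper, so $\bot{\not\in}s$ while $s{\not\models}{\bot}$. For $A{=}B{\wedge}C$ and $A{=}B{\vee}C$ I would combine the induction hypothesis with the fact that $s$ is a theory: the conjunction case uses closure under the conjunction introduction and elimination theorems of $\IPL$ together with modus ponens, and the disjunction case uses the disjunction introduction theorems for one inclusion and the primeness of $s$ for the converse.

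The implication case $A{=}B{\rightarrow}C$ is the first place where the intuitionistic semantics bites, and this is precisely why the ordering $\leq_\L$ (inclusion of prime theories) enters. For the direction from membership to satisfiability I would take any $t{\in}W_\L$ with $s{\leq_\L}t$ and $t{\models}B$; the induction hypothesis gives $B{\in}t$, the inclusion $s{\subseteq}t$ gives $B{\rightarrow}C{\in}t$, and closure under modus ponens gives $C{\in}t$, whence $t{\models}C$ by the induction hypothesis, so $s{\models}B{\rightarrow}C$. For the converse I would contrapose: assuming $B{\rightarrow}C{\not\in}s$, the deduction-style properties of $\IPL$ show that $C$ does not belong to the least theory containing $s{\cup}\{B\}$ (otherwise $B{\rightarrow}C$ would already lie in the theory $s$), so a Lindenbaum-type extension — the same device that yields the existence of prime theories — produces a prime theory $t{\supseteq}s$ with $B{\in}t$ and $C{\not\in}t$; the induction hypothesis then makes $t$ witness $s{\not\models}B{\rightarrow}C$.

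For the modal cases I would first dispatch the two directions that follow directly from the two clauses defining $\bowtie$. If $\square B{\in}s$, $s{\leq_\L}t$ and $t{R_\L}u$, then $\square B{\in}t$ by inclusion and $B{\in}u$ by the first clause of $\bowtie$, so $u{\models}B$ by the induction hypothesis, giving $s{\models}\square B$; dually, if $s{\models}\lozenge B$ then there are $t{\leq_\L}s$ and $u$ with $t{R_\L}u$ and $u{\models}B$, the induction hypothesis gives $B{\in}u$, the second clause of $\bowtie$ gives $\lozenge B{\in}t$, and $t{\subseteq}s$ gives $\lozenge B{\in}s$. The hard part will be the two remaining directions — falsifying $\square B$ when $\square B{\not\in}s$, and realizing $\lozenge B$ when $\lozenge B{\in}s$ — since each requires \emph{constructing} a prime theory $u$ that stands in the relation $\bowtie$ to $s$ (or to a suitable $t{\geq_\L}s$) while, respectively, omitting or containing $B$; the delicate point is keeping the built-up theory inside the set $\{C:\lozenge C{\in}s\}$ imposed by the second clause of $\bowtie$. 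I would isolate these as canonical existence lemmas, proved by Lindenbaum-style constructions exploiting the modal axioms and rules of $\L$ (as established in \cite{Balbiani:Gencer:preliminary:draft}), and plug them in to close the two cases, after which the induction is complete.
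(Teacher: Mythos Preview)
Your sketch is correct and follows the standard canonical-model induction; the paper itself does not spell out a proof but simply refers to \cite{Balbiani:Gencer:preliminary:draft}, where exactly this kind of argument (including the two existence lemmas you isolate for the hard directions of $\square$ and $\lozenge$) is carried out. So there is nothing to compare against in the present paper beyond noting that your approach is precisely the expected one.
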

\begin{proof}
See~\cite{Balbiani:Gencer:preliminary:draft}.
\medskip
\end{proof}
\begin{lemma}\label{lemma:proof:of:completeness}
For all $A{\in}\Fo$,
\begin{itemize}
\item if $A{\not\in}\L$ then ${\mathcal C}_{\fcfra}{\not\models}A$ when $\L{=}\FIK$,
\item if $A{\not\in}\L$ then ${\mathcal C}_{\fdcfra}{\not\models}A$ when $\L{=}\LIK$.
\end{itemize}
\end{lemma}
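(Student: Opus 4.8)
The plan is to run the standard canonical-model completeness argument, using the machinery assembled in this section. Since ${\mathcal C}_{\fcfra}{\not\models}A$ amounts to the existence of a forward confluent frame carrying a valuation that falsifies $A$ at some world, and since by Lemma~\ref{lemma:canonical:frame:confluences} the canonical frame of $\FIK$ is itself forward confluent (while that of $\LIK$ is forward and downward confluent), it suffices to falsify $A$ somewhere in the canonical model $(W_{\L},{\leq_{\L}},{R_{\L}},V_{\L})$. By the Canonical Truth Lemma (Lemma~\ref{lemma:truth:lemma}), this reduces in turn to exhibiting a prime theory $s{\in}W_{\L}$ with $A{\not\in}s$.

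First I would fix $A{\in}\Fo$ with $A{\not\in}\L$ and recall from Lemma~\ref{lemma:L:is:theory} that $\L$ is a theory. Since $A{\not\in}\L$, the Lindenbaum Lemma~\cite{Balbiani:Gencer:preliminary:draft} applies to separate $A$ from $\L$, yielding a prime theory $s$ with $\L{\subseteq}s$ and $A{\not\in}s$; by definition such an $s$ belongs to $W_{\L}$. Next I would invoke Lemma~\ref{lemma:truth:lemma} at the world $s$: from $A{\not\in}s$ it gives $(W_{\L},{\leq_{\L}},{R_{\L}},V_{\L}),s{\not\models}A$, so $A$ is not true in the canonical model. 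Finally I would read off from Lemma~\ref{lemma:canonical:frame:confluences} that the underlying canonical frame lies in ${\mathcal C}_{\fcfra}$ when $\L{=}\FIK$ (respectively in ${\mathcal C}_{\fdcfra}$ when $\L{=}\LIK$); together with the canonical valuation $V_{\L}$, this exhibits a frame in the relevant class on which $A$ is not valid, witnessing ${\mathcal C}_{\fcfra}{\not\models}A$ (respectively ${\mathcal C}_{\fdcfra}{\not\models}A$). The two cases differ only in which clause of Lemma~\ref{lemma:canonical:frame:confluences} is cited.

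The substance of the argument is entirely packaged into the lemmas already established, so there is no genuine obstacle to overcome. The only delicate point is the exact shape of the Lindenbaum Lemma: what is required is not merely the existence of some prime theory (as used earlier to guarantee $W_{\L}{\not=}\emptyset$), but the stronger separating form producing a prime theory that contains $\L$ yet omits the prescribed non-member $A$. Provided the cited Lindenbaum Lemma is available in this separating form, the remainder is a mechanical chain of implications linking non-membership in $\L$, non-membership in a prime theory, falsification in the canonical model, and invalidity on the corresponding class of frames.
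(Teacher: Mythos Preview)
Your proposal is correct and follows exactly the paper's own approach: the paper's proof is the one-line citation ``By Lindenbaum Lemma~\cite{Balbiani:Gencer:preliminary:draft} and Lemmas~\ref{lemma:L:is:theory}, \ref{lemma:canonical:frame:confluences} and~\ref{lemma:truth:lemma},'' and you have simply spelled out how these ingredients are chained together. Your remark about needing the separating form of the Lindenbaum Lemma is apt and is precisely what the cited result provides.
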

\begin{proof}
By Lindenbaum Lemma~\cite{Balbiani:Gencer:preliminary:draft} and Lemmas~\ref{lemma:L:is:theory}, \ref{lemma:canonical:frame:confluences} and~\ref{lemma:truth:lemma}.
\medskip
\end{proof}
\begin{proposition}[Completeness]\label{proposition:completeness:of:FIK:and:LIK}
\begin{itemize}
\item $\FIK{=}\Log({\mathcal C}_{\fcfra})$,
\item $\LIK{=}\Log({\mathcal C}_{\fdcfra})$.
\end{itemize}
\end{proposition}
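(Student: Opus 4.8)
The plan is to prove each of the two equalities as a conjunction of two inclusions, corresponding respectively to soundness and completeness. I treat $\FIK{=}\Log({\mathcal C}_{\fcfra})$ in detail; the case of $\LIK{=}\Log({\mathcal C}_{\fdcfra})$ is entirely parallel, with ${\mathcal C}_{\fdcfra}$ replacing ${\mathcal C}_{\fcfra}$ throughout and the second item of Lemma~\ref{lemma:proof:of:completeness} replacing the first.

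For the inclusion $\FIK{\subseteq}\Log({\mathcal C}_{\fcfra})$, I rely on soundness. Concretely, one checks that each standard axiom of $\IPL$, each of $(\mathbf{D}\square)$, $(\mathbf{D}\lozenge)$, $(\axiomN{\square})$, $(\axiomN{\lozenge})$, $(\axiomwCD)$ and the characteristic axiom $(\Axiom\mathbf{f})$ is valid in every forward confluent frame, and that the standard inference rules of $\IPL$ together with $(\Rule{\square})$, $(\Rule{\lozenge})$ and $(\Rule\mathbf{I})$ preserve validity in ${\mathcal C}_{\fcfra}$; the validity of $(\Axiom\mathbf{f})$ is precisely where the forward confluence condition is used. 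Since $\FIK$ is the least set of formulas closed under these axioms and rules, it follows that every member of $\FIK$ is valid in every frame of ${\mathcal C}_{\fcfra}$, i.e.\ belongs to $\Log({\mathcal C}_{\fcfra})$. This is exactly the content invoked as Proposition~\ref{proposition:soundness:FIK:LIK}, so in the write-up I would simply cite it here.

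For the converse inclusion $\Log({\mathcal C}_{\fcfra}){\subseteq}\FIK$, I argue by contraposition, and this is where all the real work has already been done. Suppose $A{\not\in}\FIK$. By the first item of Lemma~\ref{lemma:proof:of:completeness}, ${\mathcal C}_{\fcfra}{\not\models}A$, that is, there is a forward confluent frame on which $A$ fails to be valid; hence $A{\not\in}\Log({\mathcal C}_{\fcfra})$ by the definition of $\Log({\mathcal C}_{\fcfra})$. Taking the contrapositive yields $\Log({\mathcal C}_{\fcfra}){\subseteq}\FIK$, and combined with the soundness inclusion this gives $\FIK{=}\Log({\mathcal C}_{\fcfra})$.

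The principal difficulty in such a completeness theorem lies entirely in the completeness half, namely producing a falsifying frame of the correct class for any non-theorem, and that obstacle has already been surmounted before the statement: Lemma~\ref{lemma:proof:of:completeness} packages the canonical model construction (the prime theories $W_{\L}$, the order $\leq_{\L}$ and relation $R_{\L}$), the Lindenbaum Lemma guaranteeing a prime theory avoiding a given non-theorem, the Canonical Truth Lemma (Lemma~\ref{lemma:truth:lemma}) linking membership in a world to satisfiability, and Lemma~\ref{lemma:canonical:frame:confluences} certifying that the canonical frame is forward confluent (respectively forward and downward confluent). Given these, the present proposition is merely the assembly of soundness and completeness into an equality, and no further calculation is required.
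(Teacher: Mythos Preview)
Your argument is correct and matches the paper's own proof, which simply cites Proposition~\ref{proposition:soundness:FIK:LIK} for the soundness inclusion and Lemma~\ref{lemma:proof:of:completeness} (via contraposition) for the completeness inclusion. Your additional unpacking of what feeds into Lemma~\ref{lemma:proof:of:completeness} is accurate but goes beyond what the paper records at this point.
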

\begin{proof}
By Proposition~\ref{proposition:soundness:FIK:LIK} and Lemma~\ref{lemma:proof:of:completeness}.
\medskip
\end{proof}
\section{Maximal worlds}\label{section:maximal:worlds}
We say that $s{\in}W_{\L}$ is {\em maximal with respect to $B{\in}\Fo$}\/ if for all $t{\in}W_{\L}$, if $s{<_{\L}}t$ then $t{\models}B$.
\begin{lemma}\label{lemma:maximal}
Let $s{\in}W_{\L}$ and $B{\in}\Fo$.
If $s$ is not maximal with respect to $B$ then there exists $t{\in}W_{\L}$ such that $s{<_{\L}}t$, $t{\not\models}B$ and $t$ is maximal with respect to $B$.
\end{lemma}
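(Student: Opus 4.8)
The plan is to obtain $t$ by a Zorn's Lemma argument applied to a suitable collection of prime theories. First I would unpack the hypothesis. Since $s$ is not maximal with respect to $B$, the definition of maximality gives some $t_{0}{\in}W_{\L}$ with $s{<_{\L}}t_{0}$ and $t_{0}{\not\models}B$. By the Canonical Truth Lemma (Lemma~\ref{lemma:truth:lemma}), $t_{0}{\not\models}B$ is equivalent to $B{\not\in}t_{0}$. Recalling that $\leq_{\L}$ is inclusion, the relation $s{<_{\L}}t_{0}$ means $s{\subsetneq}t_{0}$, so we have a witness $t_{0}$ with $s{\subsetneq}t_{0}$ and $B{\not\in}t_{0}$.

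Next I would introduce the poset $\mathcal{P}{=}\{u{\in}W_{\L}:\ t_{0}{\subseteq}u\ \text{and}\ B{\not\in}u\}$, ordered by inclusion. It is nonempty because $t_{0}{\in}\mathcal{P}$. The key step is to show that every nonempty $\subseteq$-chain $\mathcal{D}$ in $\mathcal{P}$ has an upper bound in $\mathcal{P}$, the natural candidate being $u^{\star}{=}{\bigcup}\mathcal{D}$. I would verify that $u^{\star}{\in}W_{\L}$, i.e. that $u^{\star}$ is a prime theory: it contains $\L$ and is closed under modus ponens because $\mathcal{D}$ is a chain, so any two premises lie in a common member of $\mathcal{D}$, which is itself a theory; it is proper because ${\bot}{\not\in}u$ for each $u{\in}\mathcal{D}$; and it is prime because any disjunction in $u^{\star}$ already belongs to some $u{\in}\mathcal{D}$, which is prime. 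Finally $t_{0}{\subseteq}u^{\star}$ and $B{\not\in}u^{\star}$ (as $B{\not\in}u$ for every $u{\in}\mathcal{D}$), so $u^{\star}{\in}\mathcal{P}$. Zorn's Lemma then yields a maximal element $t$ of $\mathcal{P}$.

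It remains to check that $t$ witnesses the conclusion. From $t_{0}{\subseteq}t$ and $s{\subsetneq}t_{0}$ we get $s{\subsetneq}t$, that is $s{<_{\L}}t$; and $t{\in}\mathcal{P}$ gives $B{\not\in}t$, hence $t{\not\models}B$ by Lemma~\ref{lemma:truth:lemma}. For maximality with respect to $B$, suppose $t{<_{\L}}u$ for some $u{\in}W_{\L}$, i.e. $t{\subsetneq}u$. If $u{\not\models}B$ then $B{\not\in}u$, and since $t_{0}{\subseteq}t{\subseteq}u$ we would have $u{\in}\mathcal{P}$ with $t{\subsetneq}u$, contradicting the maximality of $t$ in $\mathcal{P}$. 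Hence $u{\models}B$, which shows that $t$ is maximal with respect to $B$.

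The main obstacle is the chain step: confirming that the union of a chain of prime theories is again a prime theory (in particular proper and prime), which is exactly what licenses the application of Zorn's Lemma. The translation between ${\models}$ and ${\in}$ through the Canonical Truth Lemma and the remaining bookkeeping are routine.
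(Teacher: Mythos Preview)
Your proof is correct and follows essentially the same Zorn's Lemma argument as the paper: the paper applies Zorn directly to $S=\{t\in W_{\L}: s<_{\L}t\text{ and }t\not\models B\}$, whereas you work inside the cofinal subset $\mathcal{P}=\{u\in W_{\L}: t_{0}\subseteq u\text{ and }B\notin u\}$ for a fixed witness $t_{0}$, but the core mechanism is identical. Your write-up is in fact more explicit than the paper's, which simply asserts that unions of chains stay in $S$; you spell out why the union of a chain of prime theories is again a prime theory and invoke the Canonical Truth Lemma to pass between $t\not\models B$ and $B\notin t$.
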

\begin{proof}
Suppose $s$ is not maximal with respect to $B$.
Let $S=\{t{\in}W_{\L}$: $s{<_{\L}}t$ and $t{\not\models}B\}$.
Since $s$ is not maximal with respect to $B$, then $S$ is nonempty.
Moreover, obviously, for all nonempty chains $(t_{i})_{i{\in}I}$ of elements of $S$, $\bigcup\{t_{i}$: $i{\in}I\}$ is in $S$.
Hence, by Zorn's Lemma, $S$ possesses a maximal element $t$.
Obviously, $s{<_{\L}}t$, $t{\not\models}B$ and $t$ is maximal with respect to $B$.
\medskip
\end{proof}
\begin{lemma}\label{lemma:maximal:rightarrow}
Let $s{\in}W_{\L}$ and $B,C{\in}\Fo$.
If $s{\not\models}B{\rightarrow}C$ and $s$ is maximal with respect to $B{\rightarrow}C$ then $s{\models}B$ and $s{\not\models}C$.
\end{lemma}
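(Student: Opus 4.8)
The plan is to extract a witness from the failure of $s{\models}B{\rightarrow}C$ and then use maximality to show that this witness cannot lie strictly above $s$, so that the relevant facts can be transferred back to $s$ itself.

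First I would unfold the satisfiability clause for implication. Since $s{\not\models}B{\rightarrow}C$, there is a world $t{\in}W_{\L}$ with $s{\leq_{\L}}t$, $t{\models}B$ and $t{\not\models}C$. The whole point of the argument is to locate this $t$ relative to $s$.

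Next I would show that it is not the case that $s{<_{\L}}t$. Indeed, suppose $s{<_{\L}}t$. Then, since $s$ is maximal with respect to $B{\rightarrow}C$, we would get $t{\models}B{\rightarrow}C$. Applying the satisfiability clause for implication at $t$, using the reflexive instance $t{\leq_{\L}}t$ together with $t{\models}B$, we would obtain $t{\models}C$, contradicting $t{\not\models}C$. Hence $s{\not<_{\L}}t$, and since $s{\leq_{\L}}t$ holds, the definition of $<_{\L}$ forces $t{\leq_{\L}}s$.

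Finally, with $t{\leq_{\L}}s$ in hand, I would conclude by the Heredity Property. From $t{\models}B$ and $t{\leq_{\L}}s$ we get $s{\models}B$. For the remaining conjunct, if $s{\models}C$ held, then Heredity applied along $s{\leq_{\L}}t$ would give $t{\models}C$, contradicting $t{\not\models}C$; hence $s{\not\models}C$. I do not expect any serious obstacle here: the only delicate point is the dichotomy between $s{<_{\L}}t$ and $t{\leq_{\L}}s$, where the reflexivity of $\leq_{\L}$ and the definition of $<_{\L}$ must be invoked carefully. (Alternatively, since $\leq_{\L}$ is set inclusion and hence antisymmetric, one may note directly that $s{\leq_{\L}}t$ and $t{\leq_{\L}}s$ give $s{=}t$, making both conclusions immediate without even appealing to Heredity.)
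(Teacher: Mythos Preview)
Your proof is correct and follows essentially the same approach as the paper: both extract a witness $t$ from $s{\not\models}B{\rightarrow}C$ and use maximality to rule out the case $s{<_{\L}}t$. The only cosmetic difference is that the paper frames the argument as a global proof by contradiction with a case split $s{=}t$ versus $s{<_{\L}}t$ (using antisymmetry of inclusion), whereas you argue directly, derive $t{\leq_{\L}}s$, and invoke Heredity---which, as you note, collapses to the paper's $s{=}t$ once one remembers that $\leq_{\L}$ is set inclusion.
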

\begin{proof}
Suppose $s{\not\models}B{\rightarrow}C$ and $s$ is maximal with respect to $B{\rightarrow}C$.
For the sake of the contradiction, suppose either $s{\not\models}B$, or $s{\models}C$.
Since $s{\not\models}B{\rightarrow}C$, then there exists $t{\in}W_{\L}$ such that $s{\leq_{\L}}t$, $t{\models}B$ and $t{\not\models}C$.
Hence, either $s{=}t$, or $s{<_{\L}}t$.
In the former case, since $t{\models}B$ and $t{\not\models}C$, then $s{\models}B$ and $s{\not\models}C$.
Since either $s{\not\models}B$, or $s{\models}C$, then $s{\models}C$: a contradiction.
In the latter case, since $s$ is maximal with respect to $B{\rightarrow}C$, then $t{\models}B{\rightarrow}C$.
Since $t{\models}B$, then $t{\models}C$: a contradiction.
\medskip
\end{proof}
\begin{lemma}\label{lemma:maximal:square}
Let $s{\in}W_{\L}$ and $B{\in}\Fo$.
If $s{\not\models}{\square}B$ and $s$ is maximal with respect to ${\square}B$ then there exists $t{\in}W_{\L}$ such that $t{\in}R_{\L}(s)$ and $t{\not\models}B$.
\end{lemma}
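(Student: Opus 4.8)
The plan is to read a witness directly off the failure of $\square B$ at $s$ and then use maximality to relocate that witness to $s$ itself, in complete analogy with the proof of Lemma~\ref{lemma:maximal:rightarrow}. Assume $s \not\models \square B$ and that $s$ is maximal with respect to $\square B$. First I would apply the satisfiability clause for $\square$ in the canonical model: from $s \not\models \square B$ there exist $t' \in W_{\L}$ with $s \leq_{\L} t'$ and $u \in W_{\L}$ with $t' R_{\L} u$ and $u \not\models B$. The candidate for the desired world is $u$; what remains is to show that $u$ is an $R_{\L}$-successor of $s$ and not merely of $t'$.

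Next I would perform the case split on how $t'$ sits above $s$. Since $\leq_{\L}$ is the inclusion order on prime theories, it is antisymmetric, so $s \leq_{\L} t'$ yields either $s = t'$ or $s <_{\L} t'$. In the strict case $s <_{\L} t'$, maximality of $s$ with respect to $\square B$ gives $t' \models \square B$; feeding $t' \leq_{\L} t'$ (reflexivity) and $t' R_{\L} u$ into the satisfiability clause for $\square$ then forces $u \models B$, contradicting $u \not\models B$. Hence the strict case is impossible and $s = t'$.

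Finally, from $s = t'$ and $t' R_{\L} u$ I conclude $s R_{\L} u$, i.e. $u \in R_{\L}(s)$, together with $u \not\models B$, so taking $t = u$ settles the claim. The only point that does any work — and the one I expect to be the crux — is the collapse $s = t'$: the naive temptation would be to manufacture a fresh prime theory $R_{\L}$-related to $s$ (e.g. by a Lindenbaum-style construction on $\{C : \square C \in s\}$), but this is both unnecessary and harder; maximality is precisely the hypothesis that annihilates the $\leq_{\L}$-step and hands us the $R_{\L}$-successor for free. I would also observe that the argument is uniform in $\L$, using nothing specific to $\FIK$ or $\LIK$ beyond the canonical model semantics and the antisymmetry of $\leq_{\L}$.
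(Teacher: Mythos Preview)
Your proof is correct and follows exactly the same line as the paper's: unfold the clause for $\square$ to get $s \leq_{\L} t'$, $t' R_{\L} u$, $u \not\models B$, then use maximality to rule out $s <_{\L} t'$ and conclude $s = t'$, whence $u \in R_{\L}(s)$. Your explicit appeal to the antisymmetry of $\leq_{\L}$ (inclusion of prime theories) to justify the dichotomy $s = t'$ or $s <_{\L} t'$ is a small clarification the paper leaves implicit.
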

\begin{proof}
Suppose $s{\not\models}{\square}B$ and $s$ is maximal with respect to ${\square}B$.
Hence, there exists $t{\in}W_{\L}$ such that $s{\leq_{\L}}t$ and there exists $u{\in}W_{\L}$ such that $t{R_{\L}}u$ and $u{\not\models}B$.
Thus, either $s{=}t$, or $s{<_{\L}}t$.
In the former case, since $t{R_{\L}}u$, then $u{\in}R_{\L}(s)$.
In the latter case, since $s$ is maximal with respect to ${\square}B$, then $t{\models}{\square}B$.
Since $t{R_{\L}}u$, then $u{\models}B$: a contradiction.
\medskip
\end{proof}
\begin{lemma}\label{lemma:lozenge}
Let $s{\in}W_{\L}$ and $B{\in}\Fo$.
If $s{\models}{\lozenge}B$ then there exists $t{\in}W_{\L}$ such that $t{\in}R_{\L}(s)$ and $t{\models}B$.
\end{lemma}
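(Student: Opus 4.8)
The plan is to unfold the semantic definition of $\lozenge$ at $s$ and then transport the witness along the relations using forward confluence and the Heredity Property. First I would use the hypothesis $s{\models}{\lozenge}B$ to obtain, directly from the satisfiability clause for $\lozenge$-formulas, a world $t{\in}W_{\L}$ with $t{\leq_{\L}}s$ together with a world $u{\in}W_{\L}$ such that $t{R_{\L}}u$ and $u{\models}B$.

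The key move is to shift the $R_{\L}$-successor from $t$ to $s$. Since the canonical frame $(W_{\L},{\leq_{\L}},{R_{\L}})$ is forward confluent---this holds for both $\L{=}\FIK$ (by membership in ${\mathcal C}_{\fcfra}$) and $\L{=}\LIK$ (by membership in ${\mathcal C}_{\fdcfra}$, whose frames are in particular forward confluent), as recorded in Lemma~\ref{lemma:canonical:frame:confluences}---I would apply forward confluence to the pair $t{\leq_{\L}}s$ and $t{R_{\L}}u$. This yields a world $v{\in}W_{\L}$ with $s{R_{\L}}v$ and $u{\leq_{\L}}v$.

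Finally, to recover satisfaction of $B$ at the new witness, I would invoke the Heredity Property: from $u{\models}B$ and $u{\leq_{\L}}v$ it follows that $v{\models}B$. Since $s{R_{\L}}v$ means $v{\in}R_{\L}(s)$, the world $v$ is the desired witness, completing the argument.

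I do not anticipate a genuine obstacle here: the statement is essentially the observation that, in a forward confluent frame, the paper's ``backward-looking'' clause for $\lozenge$ collapses to the simpler Fischer Servi clause (cf.\ the earlier lemma comparing the three satisfiability definitions). The only point requiring care is orienting forward confluence correctly---matching $t{\leq_{\L}}s$ with the antecedent of the condition so that the new successor hangs off $s$ rather than off $t$---and checking that forward confluence is indeed available in the $\LIK$ case as well.
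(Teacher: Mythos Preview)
Your proposal is correct and matches the paper's proof essentially step for step: unfold the $\lozenge$-clause to get $t{\leq_{\L}}s$, $t{R_{\L}}u$, $u{\models}B$; apply forward confluence of the canonical frame (available in both the $\FIK$ and $\LIK$ cases via Lemma~\ref{lemma:canonical:frame:confluences}) to obtain $v$ with $s{R_{\L}}v$ and $u{\leq_{\L}}v$; then conclude $v{\models}B$ by heredity. The paper leaves the final heredity step implicit, but otherwise the arguments are identical.
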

\begin{proof}
Suppose $s{\models}{\lozenge}B$.
Hence, there exists $t{\in}W_{\L}$ such that $t{\leq_{\L}}s$ and there exists $u{\in}W_{\L}$ such that $t{R_{\L}}u$ and $u{\models}B$.
Since $(W_{\L},{\leq_{\L}},{R_{\L}})$ is in ${\mathcal C}_{\fcfra}$ when $\L{=}\FIK$ and in ${\mathcal C}_{\fdcfra}$ when $\L{=}\LIK$, then $(W_{\L},{\leq_{\L}},{R_{\L}})$ is forward confluent.
Since $t{\leq_{\L}}s$ and $t{R_{\L}}u$, then there exists $v{\in}W_{\L}$ such that $s{R_{\L}}v$ and $u{\leq_{\L}}v$.
Thus, $v{\in}R_{\L}(s)$.
Moreover, since $u{\models}B$, then $v{\models}B$.
\medskip
\end{proof}
\section{Tips and clips}\label{section:tips}
From now on in this note, let $A{\in}\Fo$ be such that $A{\not\in}\L$.
\\
\\
Therefore, by Lindenbaum Lemma~\cite{Balbiani:Gencer:preliminary:draft} and Lemmas~\ref{lemma:L:is:theory} and~\ref{lemma:truth:lemma}, there exists $s_{0}{\in}W_{\L}$ such that $(W_{\L},{\leq_{\L}},{R_{\L}},V_{\L}),s_{0}{\not\models}A$.
\subsection*{Tips}
A {\em tip}\/ is a $5$-tuple of the form $(i,s,\Gamma,\alpha,X)$ where $i{\in}\N$, $s{\in}W_{\L}$, $\Gamma$ is a closed subset of $\Sigma_{A}$, $\alpha{\in}\N$ and $X{\in}\N$.
\\
\\
The $5$-tuple $(0,s_{0},\Sigma_{A},0,0)$ is called {\em initial tip of $s_{0}$.}
\\
\\
The {\em name of the tip $(i,s,\Gamma,\alpha,X)$}\/ is $i$.
\\
\\
The {\em correspondent of the tip $(i,s,\Gamma,\alpha,X)$}\/ is $s$.
\\
\\
The {\em topic of the tip $(i,s,\Gamma,\alpha,X)$}\/ is $\Gamma$.
\\
\\
The {\em rank of the tip $(i,s,\Gamma,\alpha,X)$}\/ is $\alpha$.
\\
\\
The {\em height of the tip $(i,s,\Gamma,\alpha,X)$}\/ is $X$.
\\
\\
The {\em degree of the tip $(i,s,\Gamma,\alpha,X)$}\/ (denoted $\degre(i,s,\Gamma,\alpha,X)$) is the number of $B{\in}\Gamma$ such that $s{\not\models}B$ and $s$ is not maximal with respect to $B$.
\begin{lemma}
$\degre(0,s_{0},\Sigma_{A},0,0){\leq}\card(\Sigma_{A})$.
\end{lemma}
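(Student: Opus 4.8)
The plan is to observe that $\degre(0,s_{0},\Sigma_{A},0,0)$ is nothing more than the cardinality of a subset of $\Sigma_{A}$, whence the bound is immediate. First I would unfold the definition of degree: $\degre(0,s_{0},\Sigma_{A},0,0)$ is, by definition, the number of $B{\in}\Sigma_{A}$ such that $s_{0}{\not\models}B$ and $s_{0}$ is not maximal with respect to $B$. Hence it equals $\card(D)$, where $D{=}\{B{\in}\Sigma_{A}:\ s_{0}{\not\models}B$ and $s_{0}$ is not maximal with respect to $B\}$.

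The key step is then simply to note that $D{\subseteq}\Sigma_{A}$, since every element of $D$ is by construction an element of $\Sigma_{A}$ satisfying two further conditions. Consequently $\card(D){\leq}\card(\Sigma_{A})$, which is exactly the inequality we want.

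I expect no genuine obstacle here: the result is an immediate consequence of the fact that the degree of a tip counts elements of its topic satisfying an additional restriction, together with the observation that the topic of the initial tip of $s_{0}$ is $\Sigma_{A}$ itself. The only point worth a line of verification is the transparent remark that the counting set is a subset of $\Sigma_{A}$, so that its cardinality cannot exceed $\card(\Sigma_{A})$.
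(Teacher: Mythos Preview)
Your proposal is correct and is exactly the kind of trivial cardinality observation the paper has in mind, as the paper's own proof is simply ``Left to the reader.'' There is nothing to add.
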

\begin{proof}
Left to the reader.
\medskip
\end{proof}
For all finite nonempty sets ${\mathcal T}$ of tips and for all $\alpha{\in}\N$, let $\hauteur_{\alpha}({\mathcal T}){=}\max\{X{\in}\N$: $(i,s,\Gamma,\alpha,X){\in}{\mathcal T}\}$.
\subsection*{Clips}
A {\em clip}\/ is a triple of the form $({\mathcal T},{\ll},{\triangleright})$ where ${\mathcal T}$ is a finite nonempty set of tips and $\ll$ and $\triangleright$ are binary relations on $\mathcal T$.
\\
\\
The triple $(\{(0,s_{0},\Sigma_{A},0,0)\},\emptyset,\emptyset)$ is called {\em initial clip of $s_{0}$.}
\subsection*{Coherence}
We say that the clip $({\mathcal T},{\ll},{\triangleright})$ is {\em coherent}\/ if for all $(i,s,\Gamma,\alpha,X),(j,
$\linebreak$
t,\Delta,\beta,Y){\in}{\mathcal T}$,
\begin{itemize}
\item if $i{=}j$ then $s{=}t$, $\Gamma{=}\Delta$, $\alpha{=}\beta$ and $X{=}Y$,
\item if $(i,s,\Gamma,\alpha,X){\ll}(j,t,\Delta,\beta,Y)$ then $j{\not=}i$, $s{\leq_{\L}}t$, $\Gamma{\not=}\emptyset$, $\Delta{=}\Gamma$, $\beta{=}\alpha$ and $Y{=}X
$\linebreak$
{+}1$,
\item if $(i,s,\Gamma,\alpha,X){\triangleright}(j,t,\Delta,\beta,Y)$ then $j{\not=}i$, $t{\in}R_{\L}(s)$, $\Gamma{\not=}\emptyset$, $\Delta{=}\Gamma^{\circ}$, $\beta{=}\alpha{+}1$ and $Y{=}X$.
\end{itemize}
\begin{lemma}
$(\{(0,s_{0},\Sigma_{A},0,0)\},\emptyset,\emptyset)$ is coherent.
\end{lemma}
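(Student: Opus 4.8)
The plan is to check directly the three clauses in the definition of coherence for the clip $(\{(0,s_{0},\Sigma_{A},0,0)\},\emptyset,\emptyset)$. The key observation is that the set of tips here is the singleton $\{(0,s_{0},\Sigma_{A},0,0)\}$, and that both relations ${\ll}$ and ${\triangleright}$ are empty. So I expect each clause to hold for an almost trivial reason: the first by reflexivity on the diagonal, the other two vacuously.

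First I would address the first clause. Any two tips $(i,s,\Gamma,\alpha,X)$ and $(j,t,\Delta,\beta,Y)$ selected from ${\mathcal T}$ must both be the initial tip $(0,s_{0},\Sigma_{A},0,0)$, since that is the only element of ${\mathcal T}$. Hence $i{=}j{=}0$, and the required equalities $s{=}t$, $\Gamma{=}\Delta$, $\alpha{=}\beta$ and $X{=}Y$ all hold because in each case both sides are the corresponding component of the initial tip (namely $s_{0}$, $\Sigma_{A}$, $0$ and $0$). Next I would treat the second and third clauses together: since ${\ll}{=}\emptyset$ and ${\triangleright}{=}\emptyset$, there are no pairs of tips standing in either relation, so both clauses are satisfied vacuously. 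This completes the verification of all three clauses.

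There is no real obstacle here; the only point worth stating explicitly is that the single pair one can form for the first clause is the initial tip paired with itself, so that clause is not vacuous but is discharged by the evident identities among the components. I would not expect this lemma to require any appeal to the confluence properties of the canonical frame or to the earlier lemmas on $\Sigma_{A}$ and $\Gamma^{\circ}$; it is purely a bookkeeping check that the starting configuration of the selective filtration procedure satisfies the invariant that later constructions will be required to preserve.
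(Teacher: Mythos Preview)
Your proposal is correct and is exactly the routine verification the paper intends when it writes ``Left to the reader.'' The three coherence clauses are discharged just as you describe: the first by the fact that the only tip is the initial tip, and the second and third vacuously because ${\ll}$ and ${\triangleright}$ are empty.
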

\begin{proof}
Left to the reader.
\medskip
\end{proof}
\begin{lemma}\label{lemma:about:coherence:and:morphisms}
Let $({\mathcal T},{\ll},{\triangleright})$ be a clip.
If $({\mathcal T},{\ll},{\triangleright})$ is coherent then the function $f:\ {\mathcal T}{\longrightarrow}W_{\L}$ such that for all $(i,s,\Gamma,\alpha,X){\in}{\mathcal T}$, $f(i,s,\Gamma,\alpha,X){=}s$ is a homomorphism from $({\mathcal T},{\ll^{\star}},{\triangleright})$ to $(W_{\L},{\leq_{\L}},{R_{\L}})$.
\end{lemma}
\begin{proof}
Suppose $({\mathcal T},{\ll},{\triangleright})$ is coherent.
Let $(i,s,\Gamma,\alpha,X),(j,t,\Delta,\beta,Y){\in}{\mathcal T}$.
\linebreak
Firstly, suppose $(i,s,\Gamma,\alpha,X){\ll^{\star}}(j,t,\Delta,\beta,Y)$.
Hence, there exists $n{\in}\N$ and there exists $(k_{0},u_{0},\Lambda_{0},\gamma_{0},Z_{0}),\ldots,(k_{n},u_{n},\Lambda_{n},\gamma_{n},Z_{n}){\in}{\mathcal T}$ such that $(k_{0},u_{0},\Lambda_{0},
$\linebreak$
\gamma_{0},Z_{0}){=}(i,s,\Gamma,\alpha,X)$, $(k_{n},u_{n},\Lambda_{n},\gamma_{n},Z_{n}){=}(j,t,\Delta,\beta,Y)$ and for all $i{\in}(n)$,
\linebreak$
(k_{i{-}1},u_{i{-}1},\Lambda_{i{-}1},\gamma_{i{-}1},Z_{i{-}1}){\ll}(k_{i},u_{i},\Lambda_{i},\gamma_{i},Z_{i})$.
By induction on $n$, the reader may easily verify that $s{\leq_{\L}}t$.
Secondly, suppose $(i,s,\Gamma,\alpha,X){\triangleright}(j,t,\Delta,\beta,Y)$.
Since ${\mathcal T}$ is coherent, then $s{R_{\L}}t$.
\medskip
\end{proof}
\subsection*{Regularity}
We say that the clip $({\mathcal T},{\ll},{\triangleright})$ is {\em regular}\/ if for all $(i,s,\Gamma,\alpha,X),(j,t,\Delta,\beta,Y),(k,
$\linebreak$
u,\Lambda,\gamma,Z){\in}{\mathcal T}$,
\begin{itemize}
\item if $(i,s,\Gamma,\alpha,X){\ll}(k,u,\Lambda,\gamma,Z)$ and $(j,t,\Delta,\beta,Y){\ll}(k,u,\Lambda,\gamma,Z)$ then $i{=}j$,
\item if $(i,s,\Gamma,\alpha,X){\triangleright}(k,u,\Lambda,\gamma,Z)$ and $(j,t,\Delta,\beta,Y){\triangleright}(k,u,\Lambda,\gamma,Z)$ then $i{=}j$,
\item if $(i,s,\Gamma,\alpha,X){\triangleright}(k,u,\Lambda,\gamma,Z)$ and $(j,t,\Delta,\beta,Y){\ll}(k,u,\Lambda,\gamma,Z)$ then there exists $(l,v,\Theta,\delta,T){\in}{\mathcal T}$ such that $(l,v,\Theta,\delta,T){\ll}(i,s,\Gamma,\alpha,X)$ and $(l,v,\Theta,\delta,
$\linebreak$
T){\triangleright}(j,t,\Delta,\beta,Y)$.
\end{itemize}
\begin{lemma}
$(\{(0,s_{0},\Sigma_{A},0,0)\},\emptyset,\emptyset)$ is regular.
\end{lemma}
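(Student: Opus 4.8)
The plan is to establish regularity vacuously, exploiting the fact that the initial clip carries no edges. First I would note that in $(\{(0,s_{0},\Sigma_{A},0,0)\},\emptyset,\emptyset)$ the relations occupying the positions of $\ll$ and $\triangleright$ are both the empty relation, so there is no pair of tips in the (singleton) set ${\mathcal T}$ standing in either relation.

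Then I would inspect the three clauses in the definition of regularity one at a time. Each clause is a universally quantified implication whose antecedent requires two tips of ${\mathcal T}$ to be related by $\ll$ or by $\triangleright$ to a common third tip: the first clause needs two $\ll$-predecessors of a common tip, the second needs two $\triangleright$-predecessors of a common tip, and the third needs a $\triangleright$-predecessor together with an $\ll$-predecessor of a common tip. Since $\ll$ and $\triangleright$ are both empty, none of these antecedents can ever be satisfied, so all three implications hold vacuously and the initial clip is regular. I expect no genuine obstacle: the whole argument reduces to the single observation that empty relations falsify every antecedent, and this lemma functions merely as the (trivial) base case for the inductive construction of regular clips carried out later in the note.
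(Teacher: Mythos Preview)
Your proof is correct and matches the paper's intent: the paper simply writes ``Left to the reader,'' and the vacuous argument from the emptiness of $\ll$ and $\triangleright$ is exactly what is expected here.
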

\begin{proof}
Left to the reader.
\medskip
\end{proof}
\begin{lemma}\label{lemma:about:regularity:and:upward:confluence}
Let $({\mathcal T},{\ll},{\triangleright})$ be a clip.
If $({\mathcal T},{\ll},{\triangleright})$ is regular then $({\mathcal T},{\ll^{\star}},{\triangleright})$ is upward confluent.
\end{lemma}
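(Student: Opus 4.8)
The plan is to prove upward confluence of $({\mathcal T},{\ll^{\star}},{\triangleright})$ straight from the definition: given tips $x,y,z{\in}{\mathcal T}$ with $x{\triangleright}y$ and $z{\ll^{\star}}y$, I must produce $w{\in}{\mathcal T}$ such that $w{\ll^{\star}}x$ and $w{\triangleright}z$. The essential observation is that the third clause of regularity is exactly the \emph{single-step} version of this property, with $\ll$ in place of $\ll^{\star}$ both in the hypothesis and in the conclusion. So the whole task reduces to lifting this single-step confluence along the reflexive-transitive closure $\ll^{\star}$.

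Since $\ll^{\star}$ is by definition the least preorder containing $\ll$, the relation $z{\ll^{\star}}y$ is witnessed by a finite $\ll$-chain $z{=}z_{0}{\ll}z_{1}{\ll}{\cdots}{\ll}z_{n}{=}y$ for some $n{\in}\N$, all of whose members lie in ${\mathcal T}$. I would argue by induction on $n$. In the base case $n{=}0$ we have $z{=}y$, and taking $w{=}x$ works, since $x{\ll^{\star}}x$ by reflexivity of the preorder $\ll^{\star}$ and $x{\triangleright}y{=}z$ by hypothesis.

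For the inductive step, consider the last edge $z_{n-1}{\ll}z_{n}{=}y$ of the chain. Applying the third clause of regularity to $x{\triangleright}y$ and $z_{n-1}{\ll}y$ yields a tip $w'{\in}{\mathcal T}$ with $w'{\ll}x$ and $w'{\triangleright}z_{n-1}$. Now $z{=}z_{0}{\ll}{\cdots}{\ll}z_{n-1}$ is a chain of length $n{-}1$ witnessing $z{\ll^{\star}}z_{n-1}$, and we have $w'{\triangleright}z_{n-1}$; the induction hypothesis then delivers $w{\in}{\mathcal T}$ with $w{\ll^{\star}}w'$ and $w{\triangleright}z$. Finally, from $w'{\ll}x$ we get $w'{\ll^{\star}}x$, so $w{\ll^{\star}}w'{\ll^{\star}}x$ gives $w{\ll^{\star}}x$ by transitivity of $\ll^{\star}$, while $w{\triangleright}z$ is already in hand; thus $w$ is the required witness.

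The only genuine subtlety, and the point most likely to go wrong, is the orientation of the induction: the chain must be peeled from the end adjacent to $y$, so that regularity is applied to the pair $x{\triangleright}y$, $z_{n-1}{\ll}y$ and produces a $\triangleright$-predecessor $w'$ of the \emph{shorter} chain's endpoint $z_{n-1}$, to which the induction hypothesis then applies. Peeling from the $z$-end instead would not match the shape of the third regularity clause and would break the recursion. Everything else is routine bookkeeping with the definition of $\ll^{\star}$ and its transitivity and reflexivity.
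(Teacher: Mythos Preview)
Your proof is correct and follows essentially the same approach as the paper: both unfold $z{\ll^{\star}}y$ as a finite $\ll$-chain $z{=}z_{0}{\ll}\cdots{\ll}z_{n}{=}y$ and argue by induction on $n$, using the third regularity clause to peel off the last edge. The paper merely states ``by induction on $n$, the reader may easily verify'' the conclusion, whereas you spell out the base case and inductive step explicitly, including the correct orientation (peeling from the $y$-end).
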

\begin{proof}
Suppose $({\mathcal T},{\ll},{\triangleright})$ is regular.
Let $(i,s,\Gamma,\alpha,X),(j,t,\Delta,\beta,Y),(k,u,\Lambda,\gamma,
$\linebreak$
Z){\in}{\mathcal T}$.
Suppose $(i,s,\Gamma,\alpha,X){\triangleright}(j,t,\Delta,\beta,Y)$ and $(k,u,\Lambda,\gamma,Z){\ll^{\star}}(j,t,\Delta,\beta,Y)$.
Hence, there exists $n{\in}\N$ and there exists $(l_{0},v_{0},\Theta_{0},\delta_{0},T_{0}),\ldots,(l_{n},v_{n},\Theta_{n},\delta_{n},
$\linebreak$
T_{n}){\in}{\mathcal T}$ such that $(l_{0},v_{0},\Theta_{0},\delta_{0},T_{0}){=}(k,u,\Lambda,\gamma,Z)$, $(l_{n},v_{n},\Theta_{n},\delta_{n},T_{n}){=}(j,t,\Delta,
$\linebreak$
\beta,Y)$ and for all $i{\in}(n)$, $(l_{i{-}1},v_{i{-}1},\Theta_{i{-}1},\delta_{i{-}1},T_{i{-}1}){\ll}(l_{i},v_{i},\Theta_{i},\delta_{i},T_{i})$.
By induction on $n$, the reader may easily verify that there exists $(m,w,\Phi,\epsilon,U){\in}{\mathcal T}$ such that $(m,w,\Phi,\epsilon,U){\ll^{\star}}(i,s,\Gamma,\alpha,X)$ and $(m,w,\Phi,\epsilon,U){\triangleright}(k,u,\Lambda,\gamma,Z)$.
\medskip
\end{proof}
\section{Defects}\label{section:defects}
\subsection*{Defects of $\rightarrow$-maximality}
A {\em defect of $\rightarrow$-maximality of a coherent and regular clip $({\mathcal T},{\ll},{\triangleright})$}\/ is a triple $((i,s,\Gamma,\alpha,X),B,C)$ where $(i,s,\Gamma,\alpha,X){\in}{\mathcal T}$ and $B,C{\in}\Fo$ are such that
\begin{itemize}
\item $B{\rightarrow}C{\in}\Gamma$,
\item $s{\not\models}B{\rightarrow}C$ and $s$ is not maximal with respect to $B{\rightarrow}C$,
\item for all $(j,t,\Delta,\beta,Y){\in}{\mathcal T}$, if $(i,s,\Gamma,\alpha,X){\ll}(j,t,\Delta,\beta,Y)$ then $t{\models}B{\rightarrow}C$.
\end{itemize}
The {\em rank of the defect $((i,s,\Gamma,\alpha,X),B,C)$ of $\rightarrow$-maximality}\/ is $\alpha$.
\\
\\
The {\em height of the defect $((i,s,\Gamma,\alpha,X),B,C)$ of $\rightarrow$-maximality}\/ is $X$.
\subsection*{Defects of $\square$-maximality}
A {\em defect of $\square$-maximality of a coherent and regular clip $({\mathcal T},{\ll},{\triangleright})$}\/ is a couple $((i,s,\Gamma,\alpha,X),B)$ where $(i,s,\Gamma,\alpha,X){\in}{\mathcal T}$ and $B{\in}\Fo$ are such that
\begin{itemize}
\item ${\square}B{\in}\Gamma$,
\item $s{\not\models}{\square}B$ and $s$ is not maximal with respect to ${\square}B$,
\item for all $(j,t,\Delta,\beta,Y){\in}{\mathcal T}$, if $(i,s,\Gamma,\alpha,X){\ll}(j,t,\Delta,\beta,Y)$ then $t{\models}{\square}B$.
\end{itemize}
The {\em rank of the defect $((i,s,\Gamma,\alpha,X),B)$ of $\square$-maximality}\/ is $\alpha$.
\\
\\
The {\em height of the defect $((i,s,\Gamma,\alpha,X),B)$ of $\square$-maximality}\/ is $X$.
\subsection*{Defects of $\square$-accessibility}
A {\em defect of $\square$-accessibility of a coherent and regular clip $({\mathcal T},{\ll},{\triangleright})$}\/ is a couple $((i,s,\Gamma,\alpha,X),B)$ where $(i,s,\Gamma,\alpha,X){\in}{\mathcal T}$ and $B{\in}\Fo$ are such that
\begin{itemize}
\item ${\square}B{\in}\Gamma$,
\item $s{\not\models}{\square}B$ and $s$ is maximal with respect to $\square B$,
\item for all $(j,t,\Delta,\beta,Y){\in}{\mathcal T}$, if $(i,s,\Gamma,\alpha,X){\triangleright}(j,t,\Delta,\beta,Y)$ then $t{\models}B$.
\end{itemize}
The {\em rank of the defect $((i,s,\Gamma,\alpha,X),B)$ of $\square$-accessibility}\/ is $\alpha$.
\\
\\
The {\em height of the defect $((i,s,\Gamma,\alpha,X),B)$ of $\square$-accessibility}\/ is $X$.
\subsection*{Defects of $\lozenge$-accessibility}
A {\em defect of $\lozenge$-accessibility of a coherent and regular clip $({\mathcal T},{\ll},{\triangleright})$}\/ is a couple $((i,s,\Gamma,\alpha,X),B)$ where $(i,s,\Gamma,\alpha,X){\in}{\mathcal T}$ and $B{\in}\Fo$ are such that
\begin{itemize}
\item ${\lozenge}B{\in}\Gamma$,
\item $s{\models}{\lozenge}B$,
\item for all $(j,t,\Delta,\beta,Y){\in}{\mathcal T}$, if $(i,s,\Gamma,\alpha,X){\triangleright}(j,t,\Delta,\beta,Y)$ then $t{\not\models}B$.
\end{itemize}
The {\em rank of the defect $((i,s,\Gamma,\alpha,X),B)$ of $\lozenge$-accessibility}\/ is $\alpha$.
\\
\\
The {\em height of the defect $((i,s,\Gamma,\alpha,X),B)$ of $\lozenge$-accessibility}\/ is $X$.
\subsection*{Defects of downward confluence}
When $\L{=}\LIK$, a {\em defect of downward confluence of a coherent and regular clip $({\mathcal T},{\ll},{\triangleright})$}\/ is a triple $((i,s,\Gamma,\alpha,X),(j,t,\Delta,\beta,Y),(k,u,\Lambda,\gamma,Z))$ where $(i,s,\Gamma,\alpha,
$\linebreak$
X),(j,t,\Delta,\beta,Y),(k,u,\Lambda,\gamma,Z){\in}{\mathcal T}$ are such that
\begin{itemize}
\item $(i,s,\Gamma,\alpha,X){\ll}(j,t,\Delta,\beta,Y)$,
\item $(j,t,\Delta,\beta,Y){\triangleright}(k,u,\Lambda,\gamma,Z)$,
\item for all $(l,v,\Theta,\delta,T){\in}{\mathcal T}$, either $(i,s,\Gamma,\alpha,X){\not\triangleright}(l,v,\Theta,\delta,T)$, or $(l,v,\Theta,\delta,T)
$\linebreak$
{\not\ll}(k,u,\Lambda,\gamma,Z)$.
\end{itemize}
The {\em rank of the defect $((i,s,\Gamma,\alpha,X),(j,t,\Delta,\beta,Y),(k,u,\Lambda,\gamma,Z))$ of downward confluence}\/ is $\alpha$.
\\
\\
The {\em height of the defect $((i,s,\Gamma,\alpha,X),(j,t,\Delta,\beta,Y),(k,u,\Lambda,\gamma,Z))$ of downward confluence}\/ is $X$.
\subsection*{Defects of forward confluence}
A {\em defect of forward confluence of a coherent and regular clip $({\mathcal T},{\ll},{\triangleright})$}\/ is a triple $((i,s,\Gamma,\alpha,X),(j,t,\Delta,\beta,Y),(k,u,\Lambda,\gamma,Z))$ where $(i,s,\Gamma,\alpha,X),(j,t,\Delta,\beta,Y),
$\linebreak$
(k,u,\Lambda,\gamma,Z){\in}{\mathcal T}$ are such that
\begin{itemize}
\item $(j,t,\Delta,\beta,Y){\ll}(i,s,\Gamma,\alpha,X)$,
\item $(j,t,\Delta,\beta,Y){\triangleright}(k,u,\Lambda,\gamma,Z)$,
\item for all $(l,v,\Theta,\delta,T){\in}{\mathcal T}$, either $(i,s,\Gamma,\alpha,X){\not\triangleright}(l,v,\Theta,\delta,T)$, or $(k,u,\Lambda,\gamma,Z)
$\linebreak$
{\not\ll}(l,v,\Theta,\delta,T)$.
\end{itemize}
The {\em rank of the defect $((i,s,\Gamma,\alpha,X),(j,t,\Delta,\beta,Y),(k,u,\Lambda,\gamma,Z))$ of forward confluence}\/ is $\alpha$.
\\
\\
The {\em height of the defect $((i,s,\Gamma,\alpha,X),(j,t,\Delta,\beta,Y),(k,u,\Lambda,\gamma,Z))$ of forward confluence}\/ is $X$.
\subsection*{Cleanness}
For all $\alpha{\in}\N$, we say that a coherent and regular clip $({\mathcal T},{\ll},{\triangleright})$ is
\begin{itemize}
\item {\em $\alpha$-clean for maximality}\/ if for all $\beta{\in}\N$, if $\beta{<}\alpha$ then $({\mathcal T},{\ll},{\triangleright})$ contains no defect of maximality of rank $\beta$,
\item {\em $\alpha$-clean for accessibility}\/ if for all $\beta{\in}\N$, if $\beta{<}\alpha$ then $({\mathcal T},{\ll},{\triangleright})$ contains no defect of accessibility of rank $\beta$,
\item {\em $\alpha$-clean for downward confluence}\/ if for all $\beta{\in}\N$, if $\beta{<}\alpha$ then $({\mathcal T},{\ll},{\triangleright})$ contains no defect of downward confluence of rank $\beta$,
\item {\em $\alpha$-clean for forward confluence}\/ if for all $\beta{\in}\N$, if $\beta{<}\alpha$ then $({\mathcal T},{\ll},{\triangleright})$ contains no defect of forward confluence of rank $\beta$.
\end{itemize}
We say that a coherent and regular clip $({\mathcal T},{\ll},{\triangleright})$ is {\em clean}\/ if it contains no defect.
\section{Repair of maximality defects}\label{section:reparations:maximality:defects}
\subsection*{Repair of $\rightarrow$-maximality defects}
The {\em repair of a defect $((i,s,\Gamma,\alpha,X),B,C)$ of $\rightarrow$-maximality of a coherent and regular clip $({\mathcal T},{\ll},{\triangleright})$}\/ consists in sequentially executing the following actions:
\begin{itemize}
\item add a tip $(j,t,\Delta,\beta,Y)$ to ${\mathcal T}$ such that $j$ is new, $s{<_{\L}}t$, $t{\not\models}B{\rightarrow}C$, $t$ is maximal with respect to $B{\rightarrow}C$, $\Delta{=}\Gamma$, $\beta{=}\alpha$ and $Y{=}X{+}1$,
\item add the couple $((i,s,\Gamma,\alpha,X),(j,t,\Delta,\beta,Y))$ to $\ll$.
\end{itemize}
Obviously, the resulting clip is coherent.
\\
\\
Moreover, since the resulting clip is obtained by adding the tip $(j,t,\Delta,\beta,Y)$ to ${\mathcal T}$ and the couple $((i,s,\Gamma,\alpha,X),(j,t,\Delta,\beta,Y))$ to $\ll$, then the resulting clip is regular.
Indeed, suppose for a while that the resulting clip is not regular.
Hence, the tip $(j,t,\Delta,\beta,Y)$ has a $\ll$-predecessor that is different from $(i,s,\Gamma,\alpha,X)$ in the resulting clip.
Since $j$ is new, then this is impossible.
\\
\\
In other respect, notice that if $\Gamma{=}\Sigma_{A}^{\alpha}$ then $\Delta{=}\Sigma_{A}^{\beta}$.
\\
\\
As well, notice that this repair is the repair of a defect of rank $\alpha$ and height $X$ that only introduces in ${\mathcal T}$ a tip of rank $\alpha$ and height $X{+}1$.
\\
\\
Finally, notice that $\degre(j,t,\Delta,\beta,Y){<}\degre(i,s,\Gamma,\alpha,X)$.
\subsection*{Repair of $\square$-maximality defects}
The {\em repair of a defect $((i,s,\Gamma,\alpha,X),B)$ of $\square$-maximality of a coherent and regular clip $({\mathcal T},{\ll},{\triangleright})$}\/ consists in sequentially executing the following actions:
\begin{itemize}
\item add a tip $(j,t,\Delta,\beta,Y)$ to ${\mathcal T}$ such that $j$ is new, $s{<_{\L}}t$, $t{\not\models}{\square}B$, $t$ is maximal with respect to ${\square}B$, $\Delta{=}\Gamma$, $\beta{=}\alpha$ and $Y{=}X{+}1$,
\item add the couple $((i,s,\Gamma,\alpha,X),(j,t,\Delta,\beta,Y))$ to $\ll$.
\end{itemize}
Obviously, the resulting clip is coherent.
\\
\\
Moreover, since the resulting clip is obtained by adding the tip $(j,t,\Delta,\beta,Y)$ to ${\mathcal T}$ and the couple $((i,s,\Gamma,\alpha,X),(j,t,\Delta,\beta,Y))$ to $\ll$, then the resulting clip is regular.
Indeed, suppose for a while that the resulting clip is not regular.
Hence, the tip $(j,t,\Delta,\beta,Y)$ has a $\ll$-predecessor that is different from $(i,s,\Gamma,\alpha,X)$ in the resulting clip.
Since $j$ is new, then this is impossible.
\\
\\
In other respect, notice that if $\Gamma{=}\Sigma_{A}^{\alpha}$ then $\Delta{=}\Sigma_{A}^{\beta}$.
\\
\\
As well, notice that this repair is the repair of a defect of rank $\alpha$ and height $X$ that only introduces in ${\mathcal T}$ a tip of rank $\alpha$ and height $X{+}1$.
\\
\\
Finally, notice that $\degre(j,t,\Delta,\beta,Y){<}\degre(i,s,\Gamma,\alpha,X)$.
\subsection*{The maximality procedure}
Given $\alpha{\in}\N$ and a coherent and regular clip $({\mathcal T},{\ll},{\triangleright})$, the {\em maximality procedure}\/ is defined as follows:
\begin{enumerate}
\item $x:=({\mathcal T},{\ll},{\triangleright})$,
\item $X:=0$,
\item while $x$ contains defects of maximality of rank $\alpha$ do
\begin{enumerate}
\item repair in $x$ all defects of maximality of rank $\alpha$ and height $X$,
\item $X:=X{+}1$.
\end{enumerate}
\end{enumerate}
Given $\alpha{\in}\N$ and a coherent and regular clip $({\mathcal T},{\ll},{\triangleright})$, the role of the maximality procedure is to iteratively repair all defects of maximality of $({\mathcal T},{\ll},{\triangleright})$ of rank $\alpha$.
\begin{lemma}\label{lemma:about:height:maximal:in:tips:repairing:max:defects}
Given $\alpha{\in}\N$ and a coherent and regular clip $({\mathcal T},{\ll},{\triangleright})$, at any moment of the execution of the maximality procedure, for all tips $(j,t,\Delta,\beta,Y)$ occurring in $x$, $Y{\leq}\max\{Z{\in}\N$: $(k,u,\Lambda,\gamma,Z){\in}{\mathcal T}\}{+}\card(\Sigma_{A})$.
\end{lemma}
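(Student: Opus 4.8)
The plan is to prove, by induction on the number of individual maximality defects repaired so far, the following strengthening: at every moment of the run, every tip $(j,t,\Delta,\beta,Y)$ occurring in $x$ satisfies
$$Y{+}\degre(j,t,\Delta,\beta,Y){\leq}M{+}\card(\Sigma_{A}),$$
where $M{=}\max\{Z{\in}\N:\ (k,u,\Lambda,\gamma,Z){\in}{\mathcal T}\}$ is the largest height occurring in the input clip. Since degrees are non-negative, this invariant immediately gives the announced bound $Y{\leq}M{+}\card(\Sigma_{A})$.

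First I would record the relevant features of the two maximality repairs. Both the repair of a $\rightarrow$-maximality defect and the repair of a $\square$-maximality defect located at a tip $p{=}(i,s,\Gamma,\alpha,X)$ add exactly one new tip $q{=}(j,t,\Delta,\beta,Y)$, and by the observations stated just after each repair we have $Y{=}X{+}1$ and $\degre(q){<}\degre(p)$, hence $\degre(q){\leq}\degre(p){-}1$. No pre-existing tip is modified; moreover the degree of a tip is determined by its correspondent and its topic alone, so it is unaffected by the addition of new tips or edges. I would also note that $\degre$ is bounded by $\card(\Sigma_{A})$ for every tip, since the degree counts certain members of the topic $\Gamma$ and $\Gamma{\subseteq}\Sigma_{A}$.

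For the base case, before any repair the tips occurring in $x$ are exactly those of the input clip ${\mathcal T}$; for each of them $Y{\leq}M$ by the definition of $M$ and $\degre{\leq}\card(\Sigma_{A})$, so the invariant holds. For the inductive step, consider a single repair located at a pre-existing tip $p{=}(i,s,\Gamma,\alpha,X)$, producing the new tip $q{=}(j,t,\Delta,\beta,Y)$. By the induction hypothesis $X{+}\degre(p){\leq}M{+}\card(\Sigma_{A})$, and therefore
$$Y{+}\degre(q){=}(X{+}1){+}\degre(q){\leq}X{+}\degre(p){\leq}M{+}\card(\Sigma_{A}),$$
so $q$ satisfies the invariant as well. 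Since all other tips are left unchanged and their degrees are unaffected, they keep satisfying it, which closes the induction.

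The one point that really needs care is the choice of the potential $Y{+}\degre$ and the verification that it does not increase along a creation step: a repair raises the height by one but strictly lowers the degree, so their sum cannot grow. Everything else---the base case, the bound $\degre{\leq}\card(\Sigma_{A})$, and the stability of the parameters of existing tips---is routine. Equivalently, one could argue directly by tracing, from any tip, the chain of creations back to its ancestor in ${\mathcal T}$: heights increase by one at each step while degrees strictly decrease from a value at most $\card(\Sigma_{A})$, so the chain has length at most $\card(\Sigma_{A})$ and $Y{\leq}M{+}\card(\Sigma_{A})$ follows.
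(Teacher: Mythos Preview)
Your proof is correct and follows essentially the same approach as the paper: both rely on the two observations that a maximality repair at a tip of height $X$ introduces a tip of height $X{+}1$ and that the degree strictly decreases along such an introduction. The paper leaves the argument at the level of these observations, whereas you make the invariant $Y{+}\degre\leq M{+}\card(\Sigma_{A})$ explicit and carry out the induction in full; your final ``chain tracing'' remark is exactly the paper's intended reading.
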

\begin{proof}
It suffices to notice that for all $X{\in}\N$, the repair of a defect of maximality of rank $\alpha$ and height $X$ only introduces a tip of rank $\alpha$ and height $X{+}1$.
Moreover, as noticed above, the degree of the introduced tip is strictly smaller than the degree of the tip that has caused this introduction.
\medskip
\end{proof}
\begin{lemma}\label{lemma:procedure:defects:maximality:terminates}
Given $\alpha{\in}\N$ and a coherent and regular clip $({\mathcal T},{\ll},{\triangleright})$, the maximality procedure terminates.
\end{lemma}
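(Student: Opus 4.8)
The plan is to show that the \textbf{while} loop of the maximality procedure executes only finitely many times and that each execution of its body is itself finite. First I would record, using Lemma~\ref{lemma:about:height:maximal:in:tips:repairing:max:defects}, that throughout the execution every tip occurring in $x$ has height at most $M{:=}\max\{Z{\in}\N:\ (k,u,\Lambda,\gamma,Z){\in}{\mathcal T}\}{+}\card(\Sigma_{A})$, where $\mathcal T$ is the tip set of the input clip. Since a defect of maximality of rank $\alpha$ is located at a tip of rank $\alpha$, and its height is by definition the height of that tip, every such defect has height at most $M$.

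The next, and central, observation is that repairs strictly increase height. Indeed, the repair of a defect of $\rightarrow$-maximality, or of $\square$-maximality, of rank $\alpha$ and height $X$ introduces exactly one new tip, of rank $\alpha$ and height $X{+}1$, together with exactly one new $\ll$-edge into that tip. Hence repairing a defect of height $X$ creates neither tips of height ${\leq}X$ nor, I claim, any new maximality defect of height ${\leq}X$. The point is that the third clause in the definition of a maximality defect is a universal statement over the $\ll$-successors of a tip, so enlarging a tip's successor set can only falsify this clause, never make it true; thus the only effect of a repair on defects is to destroy the one being repaired (whose treated tip now acquires a successor violating the relevant clause), and the only tip whose successor set changes is that one. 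In particular, at any given moment the clip is finite with topics drawn from the finite set $\Sigma_{A}$, so there are finitely many defects of maximality of rank $\alpha$ and height $X$, and repairing them all does not spawn further defects at that same height; step (a) of the loop therefore terminates.

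I would then assemble these facts. Because a tip of height $h$ can arise only from repairing a defect of height $h{-}1$, all tips of height $h$ are already present by the iteration at which the counter equals $h$; and because repairs at height $h$ produce neither tips nor maximality defects of height ${\leq}h$, once that iteration has repaired all defects of maximality of rank $\alpha$ and height $h$, no such defect of height ${\leq}h$ is ever present again. Consequently, at the start of any iteration the loop guard can hold only on account of defects of height at least the current value of $X$. When $X$ reaches $M{+}1$, the heights ${\leq}M$ are all permanently clean while, by the height bound above, the clip contains no tip, and \emph{a fortiori} no defect, of height exceeding $M$; so the guard fails. Since $X$ is incremented by $1$ at each iteration, the loop performs at most $M{+}1$ iterations, each with a terminating body, and the procedure halts.

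The step I expect to demand the most care is precisely the claim that a repair never manufactures a fresh maximality defect at an already-treated height, since everything else is bookkeeping with the height counter and an appeal to Lemma~\ref{lemma:about:height:maximal:in:tips:repairing:max:defects}. This rests on the monotonicity just described, namely that the universal third clause of each maximality defect can only be broken, not satisfied anew, by adding $\ll$-successors, together with the fact that each repair touches the successor set of exactly one tip and strictly raises its height. Once that is secured, the strictly increasing counter $X$, provably incapable of surpassing $M{+}1$ while the guard holds, bounds the number of iterations and yields termination.
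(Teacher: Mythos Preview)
Your proof is correct and follows the same approach as the paper, which simply cites Lemma~\ref{lemma:about:height:maximal:in:tips:repairing:max:defects} and leaves the remaining bookkeeping implicit. You have carefully unpacked that bookkeeping: the height bound $M$ caps all tips, the monotonicity of the universal third clause ensures repairs never create fresh maximality defects at already-treated heights, and the strictly increasing counter $X$ therefore exhausts all possible defect heights in at most $M{+}1$ iterations---this is exactly the argument the paper's one-line proof is gesturing at.
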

\begin{proof}
By Lemma~\ref{lemma:about:height:maximal:in:tips:repairing:max:defects}.
\medskip
\end{proof}
\begin{lemma}\label{lemma:about:alpha:clean:coherent:clips:repairing:maximality}
Let $\alpha{\in}\N$ and $({\mathcal T},{\ll},{\triangleright})$ be a coherent and regular clip.
If $({\mathcal T},{\ll},{\triangleright})$ is $\alpha$-clean for maximality, $\alpha$-clean for accessibility, $\alpha$-clean for downward confluence when $\L{=}\LIK$ and $\alpha$-clean for forward confluence then the clip obtained from $({\mathcal T},{\ll},{\triangleright})$ after the execution of the maximality procedure is $(\alpha{+}1)$-clean for maximality, $\alpha$-clean for accessibility, $\alpha$-clean for downward confluence when $\L{=}\LIK$ and $\alpha$-clean for forward confluence.
\end{lemma}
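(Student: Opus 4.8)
The plan is to extract four structural features of the maximality procedure and then let each one dispatch one of the four cleanness claims. First I would record the facts that make everything run. By Lemma~\ref{lemma:procedure:defects:maximality:terminates} the procedure halts, so the resulting clip is well defined. Each elementary repair — of a $\rightarrow$-maximality defect or of a $\square$-maximality defect of rank $\alpha$ — adds exactly one new tip, of rank $\alpha$ and height one greater than that of the repaired defect, together with exactly one new couple placed into $\ll$ joining a rank-$\alpha$ tip to that new tip; it never alters $\triangleright$; and it deletes no tip and no couple, so the whole procedure is add-only. Coherence and regularity are preserved at each repair, as already observed where the repairs were defined, so the final clip is again coherent and regular and the cleanness notions apply to it. Finally I would single out the observation that a freshly added tip is incident to no $\triangleright$-couple whatsoever, since repairs enlarge $\ll$ only.

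For maximality, the guard of the while loop is exactly ``$x$ contains defects of maximality of rank $\alpha$''; the repairs performed at a given height introduce only strictly higher-height tips, so by Lemma~\ref{lemma:procedure:defects:maximality:terminates} this cascade bottoms out, and upon exit the guard is false, i.e.\ no maximality defect of rank $\alpha$ survives. For a rank $\beta{<}\alpha$, any maximality defect sits on a tip of rank $\beta$ and, by coherence, its third defining clause quantifies only over $\ll$-successors of that tip, all of rank $\beta$; since the procedure adds neither a rank-$\beta$ tip nor an $\ll$-couple between rank-$\beta$ tips, both the tip and its $\ll$-successor set are frozen, so no maximality defect of rank $\beta$ can appear. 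Combining the two observations yields $(\alpha{+}1)$-cleanness for maximality. Accessibility is the easiest case: a defect of $\square$- or $\lozenge$-accessibility of rank $\beta{<}\alpha$ lives on a rank-$\beta$ tip and refers only to its $\triangleright$-successors; as $\triangleright$ is untouched and no rank-$\beta$ tip is added, the tip and its $\triangleright$-successor set are frozen, so no such defect is created and $\alpha$-cleanness for accessibility is preserved.

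The confluence clauses are where the real work lies, and I expect the boundary rank $\beta{=}\alpha{-}1$ to be the main obstacle. A defect of downward confluence (in the case $\L{=}\LIK$) or of forward confluence of rank $\beta$ is a triple whose first two tips have rank $\beta$ and whose third tip, by coherence, has rank $\beta{+}1$. If $\beta{+}1{<}\alpha$, then all three tips, the two positive edges and the negative no-witness clause refer to data of rank $<\alpha$, which the procedure leaves untouched, so no such defect can be newly created. The delicate case is $\beta{=}\alpha{-}1$, where the third tip has rank $\alpha$ and the no-witness clause ranges over an $\ll$-couple between rank-$\alpha$ tips, exactly the kind of couple the procedure may introduce. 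Here the two positive clauses of the triple comprise a $\triangleright$-couple incident to the third tip and an $\ll$-couple between the two rank-$(\alpha{-}1)$ tips; since a freshly added tip is incident to no $\triangleright$-couple, the first of these forces the third tip to be old, while the second is a rank-$(\alpha{-}1)$ $\ll$-couple and hence old as well. Thus all three tips and both positive clauses predate the procedure, and the only defining clause whose truth value can change is the no-witness clause. Because the repairs only ever add tips and couples, any witness present before the procedure survives, so this clause can pass only from true to false. Consequently any triple that is a confluence defect after the procedure was already one before it, contradicting the assumed $\alpha$-cleanness; hence $\alpha$-cleanness for downward confluence (when $\L{=}\LIK$) and for forward confluence is preserved, which completes the four claims.
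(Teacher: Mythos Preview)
Your proof is correct and follows the same approach as the paper: both rest on the observation that the maximality procedure terminates and only introduces rank-$\alpha$ tips together with $\ll$-couples between rank-$\alpha$ tips, leaving $\triangleright$ untouched, so defects of rank ${<}\alpha$ cannot be created. Your treatment is considerably more detailed than the paper's one-line argument---in particular you carefully handle the boundary case $\beta{=}\alpha{-}1$ for confluence defects by noting that the third (rank-$\alpha$) tip must be old since fresh tips carry no $\triangleright$-incidence, and that the add-only nature of the procedure means the no-witness clause can only pass from true to false---whereas the paper simply records that only rank-$\alpha$ tips are introduced and concludes directly.
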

\begin{proof}
Suppose $({\mathcal T},{\ll},{\triangleright})$ is $\alpha$-clean for maximality, $\alpha$-clean for accessibility, $\alpha$-clean for downward confluence when $\L{=}\LIK$ and $\alpha$-clean for forward confluence.
Notice that by Lemma~\ref{lemma:procedure:defects:maximality:terminates}, the maximality procedure terminates.
In other respect, since $({\mathcal T},{\ll},{\triangleright})$ is $\alpha$-clean for maximality, $\alpha$-clean for accessibility, $\alpha$-clean for downward confluence when $\L{=}\LIK$ and $\alpha$-clean for forward confluence and the execution of the maximality procedure only introduces tips of rank $\alpha$, then the clip obtained from $({\mathcal T},{\ll},{\triangleright})$ after the execution of the maximality procedure is $(\alpha{+}1)$-clean for maximality, $\alpha$-clean for accessibility, $\alpha$-clean for downward confluence when $\L{=}\LIK$ and $\alpha$-clean for forward confluence.
\medskip
\end{proof}
\section{Repair of accessibility defects}\label{section:reparations:accessibility:defects}
\subsection*{Repair of $\square$-accessibility defects}
The {\em repair of a defect $((i,s,\Gamma,\alpha,X),B)$ of $\square$-accessibility of a coherent and regular clip $({\mathcal T},{\ll},{\triangleright})$}\/ consists in sequentially executing the following actions:
\begin{itemize}
\item add a tip $(j,t,\Delta,\beta,Y)$ to ${\mathcal T}$ such that $j$ is new, $t{\in}R_{\L}(s)$, $t{\not\models}B$, $\Delta{=}\Gamma^{\circ}$, $\beta{=}\alpha{+}1$ and $Y{=}X$,
\item add the couple $((i,s,\Gamma,\alpha,X),(j,t,\Delta,\beta,Y))$ to $\triangleright$.
\end{itemize}
Obviously, the resulting clip is coherent.
\\
\\
Moreover, since the resulting clip is obtained by adding the tip $(j,t,\Delta,\beta,Y)$ to ${\mathcal T}$ and the couple $((i,s,\Gamma,\alpha,X),(j,t,\Delta,\beta,Y))$ to $\triangleright$, then the resulting clip is regular.
Indeed, suppose for a while that the resulting clip is not regular.
Hence, the tip $(j,t,\Delta,\beta,Y)$ has a $\triangleright$-predecessor that is different from $(i,s,\Gamma,\alpha,X)$ in the resulting clip.
Since $j$ is new, then this is impossible.
\\
\\
In other respect, notice that if $\Gamma{=}\Sigma_{A}^{\alpha}$ then $\Delta{=}\Sigma_{A}^{\beta}$.
\\
\\
Finally, notice that this repair is the repair of a defect of rank $\alpha$ and height $X$ that only introduces in ${\mathcal T}$ a tip of rank $\alpha{+}1$ and height $X$.
\subsection*{Repair of $\lozenge$-accessibility defects}
The {\em repair of a defect $((i,s,\Gamma,\alpha,X),B)$ of $\lozenge$-accessibility of a coherent and regular clip $({\mathcal T},{\ll},{\triangleright})$}\/ consists in sequentially executing the following actions:
\begin{itemize}
\item add a tip $(j,t,\Delta,\beta,Y)$ to ${\mathcal T}$ such that $j$ is new, $t{\in}R_{\L}(s)$, $t{\models}B$, $\Delta{=}\Gamma^{\circ}$, $\beta{=}\alpha{+}1$ and $Y{=}X$,
\item add the couple $((i,s,\Gamma,\alpha,X),(j,t,\Delta,\beta,Y))$ to $\triangleright$.
\end{itemize}
Obviously, the resulting clip is coherent.
\\
\\
Moreover, since the resulting clip is obtained by adding the tip $(j,t,\Delta,\beta,Y)$ to ${\mathcal T}$ and the couple $((i,s,\Gamma,\alpha,X),(j,t,\Delta,\beta,Y))$ to $\triangleright$, then the resulting clip is regular.
Indeed, suppose for a while that the resulting clip is not regular.
Hence, the tip $(j,t,\Delta,\beta,Y)$ has a $\triangleright$-predecessor that is different from $(i,s,\Gamma,\alpha,X)$ in the resulting clip.
Since $j$ is new, then this is impossible.
\\
\\
In other respect, notice that if $\Gamma{=}\Sigma_{A}^{\alpha}$ then $\Delta{=}\Sigma_{A}^{\beta}$.
\\
\\
Finally, notice that this repair is the repair of a defect of rank $\alpha$ and height $X$ that only introduces in ${\mathcal T}$ a tip of rank $\alpha{+}1$ and height $X$.
\subsection*{The accessibility procedure}
Given $\alpha{\in}\N$ and a coherent and regular clip $({\mathcal T},{\ll},{\triangleright})$, the {\em accessibility procedure}\/ is defined as follows:
\begin{enumerate}
\item $x:=({\mathcal T},{\ll},{\triangleright})$,
\item $X:=0$,
\item while $x$ contains defects of accessibility of rank $\alpha$ do
\begin{enumerate}
\item repair in $x$ all defects of accessibility of rank $\alpha$ and height $X$,
\item $X:=X{+}1$.
\end{enumerate}
\end{enumerate}
Given $\alpha{\in}\N$ and a coherent and regular clip $({\mathcal T},{\ll},{\triangleright})$, the role of the accessibility procedure is to iteratively repair all defects of accessibility of $({\mathcal T},{\ll},{\triangleright})$ of rank $\alpha$.
\begin{lemma}\label{lemma:about:height:maximal:in:tips:repairing:accessibility:defects}
Given $\alpha{\in}\N$ and a coherent and regular clip $({\mathcal T},{\ll},{\triangleright})$, at any moment of the execution of the accessibility procedure, for all tips $(j,t,\Delta,\beta,Y)$ occurring in $x$, $Y{\leq}\max\{Z{\in}\N$: $(k,u,\Lambda,\gamma,Z){\in}{\mathcal T}\}$.
\end{lemma}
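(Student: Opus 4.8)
The plan is to exploit the fact that, in contrast with the maximality procedure, repairing an accessibility defect never increases the height of the tips. Both kinds of accessibility repair add a $\triangleright$-edge rather than a $\ll$-edge, and as was recorded when defining the repair of $\square$-accessibility defects and the repair of $\lozenge$-accessibility defects, each of them introduces a single tip whose rank is $\alpha{+}1$ but whose height coincides with the height $X$ of the defect being repaired. Consequently, no height value that is not already present in $x$ is ever created, and the upper bound $\max\{Z{\in}\N$: $(k,u,\Lambda,\gamma,Z){\in}{\mathcal T}\}$ should be preserved throughout the execution, without the extra $\card(\Sigma_{A})$ term that was needed in Lemma~\ref{lemma:about:height:maximal:in:tips:repairing:max:defects}.

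First I would set $M{=}\max\{Z{\in}\N$: $(k,u,\Lambda,\gamma,Z){\in}{\mathcal T}\}$ and argue by induction on the number of repairs already performed during the execution of the accessibility procedure. For the base case, $x$ is equal to $({\mathcal T},{\ll},{\triangleright})$, so every tip occurring in $x$ is a tip of ${\mathcal T}$ and therefore has height at most $M$ by the definition of $M$. For the inductive step, I would consider the repair of a single defect of accessibility of the form $((i,s,\Gamma,\alpha,X),B)$. The triggering tip $(i,s,\Gamma,\alpha,X)$ already occurs in $x$, so by the inductive hypothesis $X{\leq}M$. Since the repair of an accessibility defect of rank $\alpha$ and height $X$ introduces in ${\mathcal T}$ only a tip of rank $\alpha{+}1$ and height $X$, the unique tip added has height $X{\leq}M$ as well. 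Hence every tip occurring in $x$ after the repair still has height at most $M$, which closes the induction.

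The only point requiring attention is the verification that both the repair of $\square$-accessibility defects and the repair of $\lozenge$-accessibility defects leave the height component untouched; but this was already noted when those two repairs were defined, so I do not expect any genuine obstacle here. The essential difference with the maximality situation is simply that accessibility repairs act along $\triangleright$, raising the rank while preserving the height, whereas maximality repairs act along $\ll$, raising the height; this is precisely why the bound obtained here is tighter and dispenses with the $\card(\Sigma_{A})$ summand.
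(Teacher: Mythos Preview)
Your argument is correct and follows essentially the same approach as the paper: the key observation is exactly that the repair of an accessibility defect of rank $\alpha$ and height $X$ introduces only a tip of rank $\alpha{+}1$ and height $X$, so no new height values appear. The paper's proof simply records this observation without spelling out the induction on the number of repairs, but your more explicit formulation is equivalent.
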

\begin{proof}
It suffices to notice that for all $X{\in}\N$, the repair of a defect of accessibility of rank $\alpha$ and height $X$ only introduces a tip of rank $\alpha{+}1$ and height $X$.
\medskip
\end{proof}
\begin{lemma}\label{lemma:procedure:defects:accessibility:terminates}
Given $\alpha{\in}\N$ and a coherent and regular clip $({\mathcal T},{\ll},{\triangleright})$, the accessibility procedure terminates.
\end{lemma}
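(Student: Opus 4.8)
The plan is to combine the height bound of Lemma~\ref{lemma:about:height:maximal:in:tips:repairing:accessibility:defects} with the observation that repairing accessibility defects of rank $\alpha$ produces only tips of rank $\alpha{+}1$, so that the population of rank-$\alpha$ tips stays frozen throughout the execution. Writing $M{=}\max\{Z{\in}\N:\ (k,u,\Lambda,\gamma,Z){\in}{\mathcal T}\}$ for the maximal height present in the initial clip, Lemma~\ref{lemma:about:height:maximal:in:tips:repairing:accessibility:defects} guarantees that at every moment of the execution each tip $(j,t,\Delta,\beta,Y)$ satisfies $Y{\leq}M$. Hence the \texttt{while} loop, which runs through the heights $X{=}0,1,2,\ldots$ in turn, can only meet tips --- and therefore defects --- at heights not exceeding $M$.

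First I would record that, as noticed in Section~\ref{section:reparations:accessibility:defects}, the repair of a $\square$-accessibility or $\lozenge$-accessibility defect of rank $\alpha$ introduces exactly one tip of rank $\alpha{+}1$ (and unchanged height $X$), together with one new $\triangleright$-edge from the repaired rank-$\alpha$ tip to this new tip. Consequently no fresh tip of rank $\alpha$ is ever produced, so the set of rank-$\alpha$ tips coincides throughout with the set present initially. Since that set is finite and every rank-$\alpha$ tip carries a finite topic $\Gamma{\subseteq}\Sigma_{A}$, there are only finitely many candidate accessibility defects of rank $\alpha$, at most one for each pair formed by a rank-$\alpha$ tip and a formula ${\square}B$ or ${\lozenge}B$ in its topic.

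The crucial step is to check that a repair never \emph{creates} a new accessibility defect of rank $\alpha$. Adding a $\triangleright$-successor $t$ to a rank-$\alpha$ tip $(i,s,\Gamma,\alpha,X)$ can only enlarge its set of $\triangleright$-successors; a $\square$-accessibility defect requires every $\triangleright$-successor to satisfy $B$ and a $\lozenge$-accessibility defect requires every $\triangleright$-successor to falsify $B$, so widening the successor set can only turn a defect into a non-defect (when the new successor violates the universal condition) and never the reverse. Moreover, the newly added tip has rank $\alpha{+}1$, hence cannot itself host a defect of rank $\alpha$. Thus each pass of step~(a) strictly decreases the finite number of outstanding rank-$\alpha$ accessibility defects of height $X$ without spawning any at this or at other heights, so step~(a) terminates.

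Finally I would assemble these facts. Processing the heights in increasing order, the repairs performed at height $X$ eliminate every rank-$\alpha$ accessibility defect of that height and, by the previous paragraph, leave the defect counts at all other heights untouched; since no rank-$\alpha$ defect is ever re-created, once $X$ has passed $M$ the clip contains no rank-$\alpha$ accessibility defect at any height and the \texttt{while} test fails. I expect the main obstacle to be precisely the monotonicity argument of the third paragraph --- confirming that enlarging the $\triangleright$-image of a tip cannot manufacture a new $\square$- or $\lozenge$-accessibility defect --- everything else following routinely from the height bound and from the rank-shifting behaviour of the repairs.
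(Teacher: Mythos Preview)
Your proof is correct and follows the same approach as the paper, which simply cites Lemma~\ref{lemma:about:height:maximal:in:tips:repairing:accessibility:defects}; you have spelled out the details the paper leaves implicit, namely that repairs introduce only tips of rank $\alpha{+}1$ and that enlarging the $\triangleright$-image of a rank-$\alpha$ tip cannot create new accessibility defects. The monotonicity observation you flag as the ``main obstacle'' is exactly the point the paper takes for granted, and your verification of it is sound.
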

\begin{proof}
By Lemma~\ref{lemma:about:height:maximal:in:tips:repairing:accessibility:defects}.
\medskip
\end{proof}
\begin{lemma}\label{lemma:about:alpha:clean:coherent:clips:repairing:accessibility}
Let $\alpha{\in}\N$ and $({\mathcal T},{\ll},{\triangleright})$ be a coherent and regular clip.
If $({\mathcal T},{\ll},
$\linebreak$
{\triangleright})$ is $(\alpha{+}1)$-clean for maximality, $\alpha$-clean for accessibility, $\alpha$-clean for downward confluence when $\L{=}\LIK$ and $\alpha$-clean for forward confluence then the clip obtained from $({\mathcal T},{\ll},{\triangleright})$ after the execution of the accessibility procedure is $(\alpha{+}1)$-clean for maximality, $(\alpha{+}1)$-clean for accessibility, $\alpha$-clean for downward confluence when $\L{=}\LIK$ and $\alpha$-clean for forward confluence.
\end{lemma}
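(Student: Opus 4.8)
The plan is to mirror the proof of Lemma~\ref{lemma:about:alpha:clean:coherent:clips:repairing:maximality}, the only structural difference being that the accessibility procedure advances accessibility-cleanness rather than maximality-cleanness, precisely because its repairs introduce tips of rank $\alpha{+}1$ instead of rank $\alpha$. First I would assume the four cleanness hypotheses and invoke Lemma~\ref{lemma:procedure:defects:accessibility:terminates}, so that the output clip is well defined. The structural fact doing all the work---already recorded after each of the two accessibility repairs---is this: repairing a rank-$\alpha$ accessibility defect (of either kind) adds only a fresh tip of rank $\alpha{+}1$ together with a single $\triangleright$-edge running from the repaired rank-$\alpha$ tip to that new tip; in particular the procedure adds no $\ll$-edge, creates no tip of rank ${\leq}\alpha$, and, being purely additive, never deletes a tip or an edge.

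I would then treat the four conclusions separately, using this additivity throughout. For accessibility, after termination the clip contains no rank-$\alpha$ accessibility defect; and since a rank-$\beta$ accessibility defect with $\beta{<}\alpha$ concerns only the $\triangleright$-successors of a rank-$\beta$ tip, to which the procedure attaches nothing and whose intrinsic satisfiability conditions it leaves untouched, the hypothesis of $\alpha$-cleanness for accessibility survives for every $\beta{<}\alpha$. Combining the two yields $(\alpha{+}1)$-cleanness for accessibility. The point needing care---and which I regard as the main obstacle---is that a single repair must not spawn a \emph{fresh} rank-$\alpha$ accessibility defect at the very tip being repaired, for some other formula; this is exactly where additivity is decisive, since furnishing a rank-$\alpha$ tip with one more $\triangleright$-successor can only supply a previously missing witness (thereby repairing a $\square$- or $\lozenge$-accessibility defect) or leave the situation unchanged, never manufacture a new one. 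The newly created tips have rank $\alpha{+}1$, so any accessibility defect they carry is irrelevant to $(\alpha{+}1)$-cleanness.

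For the remaining three properties I would argue preservation by rank bookkeeping. Maximality defects concern only the $\ll$-successors and the intrinsic properties of a tip; since the procedure adds no $\ll$-edge and no tip of rank ${\leq}\alpha$, the maximality status of every tip of rank ${\leq}\alpha$ is unchanged, so $(\alpha{+}1)$-cleanness for maximality is preserved, the only new tips being of rank $\alpha{+}1$. For forward confluence, and for downward confluence when $\L{=}\LIK$, a defect of rank $\beta{<}\alpha$ involves only tips of rank ${\leq}\beta{+}1{\leq}\alpha$ and a $\triangleright$-edge emanating from a rank-$\beta$ tip, together with the absence of a completing witness that would itself be a $\triangleright$-successor of a rank-$\beta$ tip; since the procedure alters neither the $\triangleright$-successors of rank-$\beta$ tips nor, by additivity, any already-present completing witness, no such defect is created, nor a fortiori left over. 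Hence $\alpha$-cleanness for forward confluence and for downward confluence is preserved, which completes the argument.
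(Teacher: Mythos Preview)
Your proposal is correct and follows essentially the same approach as the paper: invoke termination via Lemma~\ref{lemma:procedure:defects:accessibility:terminates} and then argue from the key structural fact that the accessibility procedure introduces only tips of rank $\alpha{+}1$ (and only $\triangleright$-edges to them). The paper's own proof is considerably terser---it states this fact and concludes directly---whereas you spell out the rank bookkeeping for each of the four cleanness properties, but the underlying argument is the same.
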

\begin{proof}
Suppose $({\mathcal T},{\ll},{\triangleright})$ is $(\alpha{+}1)$-clean for maximality, $\alpha$-clean for accessibility, $\alpha$-clean for downward confluence when $\L{=}\LIK$ and $\alpha$-clean for forward confluence.
Notice that by Lemma~\ref{lemma:procedure:defects:accessibility:terminates}, the accessibility procedure terminates.
In other respect, since $({\mathcal T},{\ll},{\triangleright})$ is $(\alpha{+}1)$-clean for maximality, $\alpha$-clean for accessibility, $\alpha$-clean for downward confluence when $\L{=}\LIK$ and $\alpha$-clean for forward confluence and the execution of the accessibility procedure only introduces tips of rank $\alpha{+}1$, then the clip obtained from $({\mathcal T},{\ll},{\triangleright})$ after the execution of the accessibility procedure is $(\alpha{+}1)$-clean for maximality, $(\alpha{+}1)$-clean for accessibility, $\alpha$-clean for downward confluence when $\L{=}\LIK$ and $\alpha$-clean for forward confluence.
\medskip
\end{proof}
\section{Repair of confluence defects}\label{section:reparations:confluence:defects}
\subsection*{Repair of downward confluence defects}
When $\L{=}\LIK$, the {\em repair of a defect $((i,s,\Gamma,\alpha,X),(j,t,\Delta,\beta,Y),(k,u,\Lambda,\gamma,Z))$ of downward confluence of a coherent and regular clip $({\mathcal T},{\ll},{\triangleright})$}\/ consists in sequentially executing the following actions:
\begin{itemize}
\item add a tip $(l,v,\Theta,\delta,T)$ to ${\mathcal T}$ such that $l$ is new, $v{\in}R_{\L}(s)$, $v{\leq_{\L}}u$, $\Theta{=}\Gamma^{\circ}$, $\Theta{=}\Lambda$, $\delta{=}\alpha{+}1$, $\delta{=}\gamma$, $T{=}X$ and $T{=}Z{-}1$,
\item add the couple $((i,s,\Gamma,\alpha,X),(l,v,\Theta,\delta,T))$ to $\triangleright$,
\item add the couple $((l,v,\Theta,\delta,T),(k,u,\Lambda,\gamma,Z))$ to $\ll$.
\end{itemize}
Obviously, the resulting clip is coherent.
\\
\\
Moreover, since the resulting clip is obtained by adding the tip $(l,v,\Theta,\delta,T)$ to ${\mathcal T}$, the couple $((i,s,\Gamma,\alpha,X),(l,v,\Theta,\delta,T))$ to $\triangleright$ and the couple $((l,v,\Theta,\delta,T),(k,u,
$\linebreak$
\Lambda,\gamma,Z))$ to $\ll$, then the resulting clip is regular.
Indeed, suppose for a while that the resulting clip is not regular.
Hence, either the tip $(l,v,\Theta,\delta,T)$ has a $\triangleright$-predecessor that is different from $(i,s,\Gamma,\alpha,X)$ in the resulting clip, or the tip $(k,u,\Lambda,\gamma,Z)$ has a $\ll$-predecessor that is different from $(l,v,\Theta,\delta,T)$ in the resulting clip, or the tip $(l,v,\Theta,\delta,T)$ has a $\ll$-predecessor $(m,w,\Phi,\epsilon,U)$ such that no $\ll$-predecessor of $(i,s,\Gamma,\alpha,X)$ is a $\triangleright$-predecessor of $(m,w,\Phi,\epsilon,U)$ in the resulting clip, or the tip $(k,u,\Lambda,\gamma,Z)$ has a $\triangleright$-predecessor $(m,w,\Phi,\epsilon,U)$ such that no $\ll$-predecessor of $(m,w,\Phi,\epsilon,U)$ is a $\triangleright$-predecessor of $(l,v,\Theta,\delta,T)$ in the resulting clip.
In the first case, since $l$ is new, then this is impossible.
In the second case, consequently, there exists a $\ll$-predecessor $(m,w,\Phi,\epsilon,U)$ of $(k,u,\Lambda,\gamma,Z)$ in ${\mathcal T}$.
Since $(j,t,\Delta,\beta,Y)$ is a $\triangleright$-predecessor of $(k,u,\Lambda,\gamma,Z)$ in ${\mathcal T}$, $(i,s,\Gamma,\alpha,X)$ is a $\ll$-predecessor of $(j,t,\Delta,\beta,Y)$ in ${\mathcal T}$ and ${\mathcal T}$ is regular, then 
$(i,s,\Gamma,\alpha,X)$ is a $\triangleright$-predecessor of $(m,w,\Phi,\epsilon,U)$ in ${\mathcal T}$.
Since $((i,s,\Gamma,\alpha,X),(j,t,\Delta,\beta,Y),(k,u,\Lambda,\gamma,
$\linebreak$
Z))$ is a defect of downward confluence of $({\mathcal T},{\ll},{\triangleright})$, then this is impossible.
In the third case, since $l$ is new, then this is impossible.
In the fourth case, since $(j,t,\Delta,\beta,Y)$ is a $\triangleright$-predecessor of $(k,u,\Lambda,\gamma,Z)$ in ${\mathcal T}$, $(i,s,\Gamma,\alpha,X)$ is a $\ll$-predecessor of $(j,t,\Delta,\beta,Y)$ in ${\mathcal T}$ and ${\mathcal T}$ is regular, then 
$(i,s,\Gamma,\alpha,X)$ is a $\ll$-predecessor of $(m,w,\Phi,\epsilon,U)$ in ${\mathcal T}$ and a $\triangleright$-predecessor of $(l,v,\Theta,\delta,T)$ in the resulting clip: a contradiction.
\\
\\
In other respect, notice that if $\Gamma{=}\Sigma_{A}^{\alpha}$, $\Delta{=}\Sigma_{A}^{\beta}$ and $\Lambda{=}\Sigma_{A}^{\gamma}$ then $\Theta{=}\Sigma_{A}^{\delta}$.
\\
\\
Finally, notice that this repair is the repair of a defect of rank $\alpha$ and height $X$ that only introduces in ${\mathcal T}$ a tip of rank $\alpha{+}1$ and height $X$.
\subsection*{The procedure of downward confluence}
When $\L{=}\LIK$, given $\alpha{\in}\N$ and a coherent and regular clip $({\mathcal T},{\ll},{\triangleright})$, the {\em procedure of downward confluence}\/ is defined as follows:
\begin{enumerate}
\item $x:=({\mathcal T},{\ll},{\triangleright})$,
\item $X:=\hauteur_{\alpha}({\mathcal T})$,
\item while $x$ contains defects of downward confluence of rank $\alpha$ do
\begin{enumerate}
\item repair in $x$ all defects of downward confluence of rank $\alpha$ and height $X$,
\item $X:=X{-}1$.
\end{enumerate}
\end{enumerate}
When $\L{=}\LIK$, given $\alpha{\in}\N$ and a coherent and regular clip $({\mathcal T},{\ll},{\triangleright})$, the role of the procedure of downward confluence is to iteratively repair all defects of downward confluence of $({\mathcal T},{\ll},{\triangleright})$ of rank $\alpha$.
\begin{lemma}\label{lemma:about:height:maximal:in:tips:repairing:downward:confluence:defects}
When $\L{=}\LIK$, given $\alpha{\in}\N$ and a coherent and regular clip $({\mathcal T},
$\linebreak$
{\ll},{\triangleright})$, at any moment of the execution of the procedure of downward confluence, for all tips $(j,t,\Delta,\beta,Y)$ occurring in $x$, $Y{\leq}\max\{Z{\in}\N$: $(k,u,\Lambda,\gamma,Z){\in}{\mathcal T}\}$.
\end{lemma}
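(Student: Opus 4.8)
The plan is to mirror the proof of Lemma~\ref{lemma:about:height:maximal:in:tips:repairing:accessibility:defects}, exploiting the fact --- already recorded when the repair of downward confluence defects was defined --- that such a repair leaves heights unchanged. Writing $M{=}\max\{Z{\in}\N$: $(k,u,\Lambda,\gamma,Z){\in}{\mathcal T}\}$ for the maximal height occurring in the initial clip, the assertion of the lemma is precisely that at every moment of the execution of the procedure of downward confluence, every tip $(j,t,\Delta,\beta,Y)$ occurring in $x$ satisfies $Y{\leq}M$; so it suffices to establish this as an invariant.

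First I would verify the invariant by induction on the number of repairs performed so far. In the base case, $x$ coincides with $({\mathcal T},{\ll},{\triangleright})$ and the bound holds by the very definition of $M$. For the inductive step, consider the repair of a downward confluence defect $((i,s,\Gamma,\alpha,X),(j,t,\Delta,\beta,Y),(k,u,\Lambda,\gamma,Z))$ of rank $\alpha$ and height $X$. As observed when this repair was introduced, it adds to ${\mathcal T}$ exactly one tip $(l,v,\Theta,\delta,T)$, of rank $\alpha{+}1$ and height $T{=}X$. Since $(i,s,\Gamma,\alpha,X)$ is a tip already occurring in $x$, the induction hypothesis yields $X{\leq}M$, and hence the newly introduced tip has height $T{=}X{\leq}M$. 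No previously present tip is modified by the repair, so the bound is preserved and the invariant holds throughout the execution.

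The only point deserving attention is that the height of the introduced tip is forced to equal the height $X$ of the first component of the defect, which is immediate from the constraints $T{=}X$ (equivalently $T{=}Z{-}1$, since coherence gives $Y{=}X{+}1$ along $\ll$ and $Z{=}Y$ along $\triangleright$, hence $Z{=}X{+}1$) imposed in the repair. I do not anticipate any genuine obstacle: in contrast to the maximality procedure, no height increment ever occurs here, so --- exactly as in the accessibility case --- no auxiliary degree-decrease argument is needed, and the additive term $\card(\Sigma_{A})$ appearing in Lemma~\ref{lemma:about:height:maximal:in:tips:repairing:max:defects} does not arise.
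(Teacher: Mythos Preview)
Your proof is correct and follows essentially the same approach as the paper: the paper's proof is the one-line observation that the repair of a downward confluence defect of rank $\alpha$ and height $X$ only introduces a tip of rank $\alpha{+}1$ and height $X$, which is exactly the invariant you spell out by induction on the number of repairs. Your additional remarks (the explicit base case, the coherence computation showing $Z{=}X{+}1$, and the comparison with the maximality and accessibility lemmas) are accurate elaborations of what the paper leaves implicit.
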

\begin{proof}
It suffices to notice that for all $X{\in}\N$, the repair of a defect of downward confluence of rank $\alpha$ and height $X$ only introduces a tip of rank $\alpha{+}1$ and height $X$.
\medskip
\end{proof}
\begin{lemma}\label{lemma:procedure:defects:downward:confluence:terminates}
When $\L{=}\LIK$, given $\alpha{\in}\N$ and a coherent and regular clip $({\mathcal T},
$\linebreak$
{\ll},{\triangleright})$, the procedure of downward confluence terminates.
\end{lemma}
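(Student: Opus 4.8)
The plan is to prove termination by a downward induction on the height parameter $X$, supported by two structural facts about the repair of a downward confluence defect. First I would note, directly from the definition of the repair, that repairing a defect of rank $\alpha$ introduces a single new tip and that this tip has rank $\alpha{+}1$. Hence no tip of rank $\alpha$ is ever added during the execution, the set of rank-$\alpha$ tips remains fixed, and $\hauteur_{\alpha}({\mathcal T})$ is constant. Since the first component of any downward confluence defect of rank $\alpha$ is itself a tip of rank $\alpha$, every such defect has height at most $\hauteur_{\alpha}({\mathcal T})$.

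The decisive point is to locate where freshly created defects can occur. Writing the repaired defect as $((i,s,\Gamma,\alpha,X),(j,t,\Delta,\beta,Y),(k,u,\Lambda,\gamma,Z))$, coherence forces $\beta{=}\alpha$, $Y{=}X{+}1$, $\gamma{=}\alpha{+}1$ and $Z{=}X{+}1$; the repair then adjoins the tip $(l,v,\Theta,\alpha{+}1,X)$ together with the edges $(i,s,\Gamma,\alpha,X){\triangleright}(l,v,\Theta,\alpha{+}1,X)$ and $(l,v,\Theta,\alpha{+}1,X){\ll}(k,u,\Lambda,\gamma,Z)$. As $\triangleright$ preserves height while $\ll$ increments it, the new $\triangleright$-edge joins two tips of height $X$, so it can serve as the second-to-third edge of a rank-$\alpha$ defect only when that defect's first tip sits at height $X{-}1$; and the new $\ll$-edge has a source of rank $\alpha{+}1$, so it cannot be the first-to-second edge of any rank-$\alpha$ defect. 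Because a repair only adds edges, the ``no common predecessor'' clause of a defect can turn from true to false but never the reverse, and the newly added tip is precisely a common predecessor witnessing that the repaired defect is eliminated. Combining this with Lemma~\ref{lemma:about:height:maximal:in:tips:repairing:downward:confluence:defects} (no height is ever raised), I conclude that repairing a rank-$\alpha$ defect of height $X$ destroys it and creates new rank-$\alpha$ defects only at height $X{-}1$, never at height ${\geq}X$.

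Termination then splits into two layers. While height $X$ is being processed in step~(a), no tip of height $X{+}1$ is produced, so the candidate components of a height-$X$ defect (first tip of rank $\alpha$ and height $X$, second of rank $\alpha$ and height $X{+}1$, third of rank $\alpha{+}1$ and height $X{+}1$) range over a fixed finite set; each repair strictly lowers the number of height-$X$ defects and introduces none, so step~(a) halts. For the while loop I would argue by downward induction on $X$, starting from $\hauteur_{\alpha}({\mathcal T})$, that immediately after step~(a) is run at height $X$ the clip contains no rank-$\alpha$ defect of height ${\geq}X$: the base case holds because all defects begin at height at most $\hauteur_{\alpha}({\mathcal T})$, and the inductive step uses the fact just established that processing height $X$ clears all height-$X$ defects while spawning only height-$(X{-}1)$ ones. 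Once $X$ has descended to $0$ the clip has no rank-$\alpha$ defect of height ${\geq}0$, hence none at all, and the while test fails.

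The step I expect to be the main obstacle is the second paragraph: verifying, through the height-and-rank bookkeeping that coherence imposes on the two edges added by a repair, that a repair performed at height $X$ cannot manufacture a defect of height $X$ or greater. Granting that, the remaining two-level induction is routine.
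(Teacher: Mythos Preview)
Your proof is correct and follows essentially the same approach as the paper: both hinge on the height bound of Lemma~\ref{lemma:about:height:maximal:in:tips:repairing:downward:confluence:defects}. The paper's own proof is the single line ``By Lemma~\ref{lemma:about:height:maximal:in:tips:repairing:downward:confluence:defects}'', leaving implicit precisely the rank/height bookkeeping and the downward induction on $X$ that you have written out in full; your argument that a repair at height $X$ can only spawn rank-$\alpha$ defects at height $X{-}1$ (because the new $\ll$-edge has a rank-$(\alpha{+}1)$ source and the new $\triangleright$-edge forces any $\ll$-predecessor to sit at height $X{-}1$) is exactly the content the paper expects the reader to supply.
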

\begin{proof}
By Lemma~\ref{lemma:about:height:maximal:in:tips:repairing:downward:confluence:defects}.
\medskip
\end{proof}
\begin{lemma}\label{lemma:about:alpha:clean:coherent:clips:repairing:downward:confluence}
Let $\alpha{\in}\N$ and $({\mathcal T},{\ll},{\triangleright})$ be a coherent and regular clip.
When $\L{=}\LIK$, if $({\mathcal T},{\ll},{\triangleright})$ is $(\alpha{+}1)$-clean for maximality, $(\alpha{+}1)$-clean for accessibility, $\alpha$-clean for downward confluence and $\alpha$-clean for forward confluence then the clip obtained from $({\mathcal T},{\ll},{\triangleright})$ after the execution of the procedure of downward confluence is $(\alpha{+}1)$-clean for maximality, $(\alpha{+}1)$-clean for accessibility, $(\alpha{+}1)$-clean for downward confluence and $\alpha$-clean for forward confluence.
\end{lemma}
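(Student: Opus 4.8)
The plan is to follow verbatim the template established for the two preceding procedures in Lemmas~\ref{lemma:about:alpha:clean:coherent:clips:repairing:maximality} and~\ref{lemma:about:alpha:clean:coherent:clips:repairing:accessibility}: first assume the four hypotheses and invoke Lemma~\ref{lemma:procedure:defects:downward:confluence:terminates} for termination; then read off the preservation of the cleanness already in force from the single structural fact that every repair of a downward confluence defect introduces only a tip of rank $\alpha{+}1$, together with one new $\triangleright$-edge emanating from a rank-$\alpha$ tip and one new $\ll$-edge emanating from the freshly created rank-$(\alpha{+}1)$ tip $v$; and finally extract the newly gained $(\alpha{+}1)$-cleanness for downward confluence from the exit condition of the while-loop. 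I would organise everything around the question of which tips acquire new successors.

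For the three preservation claims I would argue by a rank/locality analysis. Since the only new $\ll$-edge starts at the rank-$(\alpha{+}1)$ tip $v$ and the only new $\triangleright$-edge starts at a rank-$\alpha$ tip, no tip of rank ${\leq}\alpha$ acquires a new $\ll$-successor and no tip of rank ${<}\alpha$ acquires a new $\triangleright$-successor. From the first fact, no maximality defect of rank ${\leq}\alpha$ can be created, so $(\alpha{+}1)$-cleanness for maximality is retained (the tip $v$ may itself be defective, but only at rank $\alpha{+}1$, which is harmless). For forward confluence I would note that a defect of rank ${<}\alpha$ refers only to $\ll$- and $\triangleright$-edges at rank ${<}\alpha$ and, in its resolving-witness clause, to the $\triangleright$-successors of a tip of rank ${<}\alpha$ and the $\ll$-successors of a tip of rank ${\leq}\alpha$; since the repair adds no such edge and no such successor, these defects are left exactly as before, so $\alpha$-cleanness for forward confluence survives. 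Finally, the repairs add no $\triangleright$-successor to any tip of rank ${<}\alpha$, so the $\alpha$-clean-for-downward-confluence hypothesis is undisturbed, and combined with the fact that at termination the while-guard forces the absence of any rank-$\alpha$ downward confluence defect, this yields $(\alpha{+}1)$-cleanness for downward confluence.

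The main obstacle I expect is the preservation of $(\alpha{+}1)$-cleanness for accessibility, since this is the one repair that attaches a new $\triangleright$-successor precisely to a rank-$\alpha$ tip while we are simultaneously required to forbid rank-$\alpha$ accessibility defects; a priori, enlarging a $\triangleright$-successor set might manufacture such a defect. I would dispose of this by a monotonicity observation: the characteristic clause of a $\square$- or $\lozenge$-accessibility defect is universally quantified over the $\triangleright$-successors ($t{\models}B$ for all of them, respectively $t{\not\models}B$ for all of them), so adding a successor can only falsify that clause and hence \emph{destroy} a defect, never create one; concretely, the successor that witnessed freedom from a rank-$\alpha$ accessibility defect in the input clip still witnesses it afterwards. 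A secondary bookkeeping obstacle is that the new $\triangleright$-edge may spawn fresh rank-$\alpha$ downward confluence defects, but these necessarily sit at strictly smaller height than the one just repaired, which is exactly why the procedure scans heights in decreasing order and why Lemma~\ref{lemma:procedure:defects:downward:confluence:terminates} suffices to drain them all before the loop exits.
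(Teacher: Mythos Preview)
Your argument is correct and follows the same template as the paper's own proof: invoke Lemma~\ref{lemma:procedure:defects:downward:confluence:terminates} for termination, then derive all four cleanness conclusions from the single structural fact that the procedure only introduces tips of rank $\alpha{+}1$. Your rank/locality case analysis and the monotonicity observation for accessibility are simply a careful unpacking of what the paper leaves implicit in its one-sentence justification, so the approaches coincide.
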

\begin{proof}
Suppose $({\mathcal T},{\ll},{\triangleright})$ is $(\alpha{+}1)$-clean for maximality, $(\alpha{+}1)$-clean for accessibility, $\alpha$-clean for downward confluence and $\alpha$-clean for forward confluence.
Notice that by Lemma~\ref{lemma:procedure:defects:downward:confluence:terminates}, the procedure of downward confluence terminates.
In other respect, since $({\mathcal T},{\ll},{\triangleright})$ is $(\alpha{+}1)$-clean for maximality, $(\alpha{+}1)$-clean for accessibility, $\alpha$-clean for downward confluence and $\alpha$-clean for forward confluence and the execution of the procedure of downward confluence only introduces tips of rank $\alpha{+}1$, then the clip obtained from $({\mathcal T},{\ll},{\triangleright})$ after the execution of the procedure of downward confluence is $(\alpha{+}1)$-clean for maximality, $(\alpha{+}1)$-clean for accessibility, $(\alpha{+}1)$-clean for downward confluence and $\alpha$-clean for forward confluence.
\medskip
\end{proof}
\subsection*{Repair of forward confluence defects}
The {\em repair of a defect $((i,s,\Gamma,\alpha,X),(j,t,\Delta,\beta,Y),(k,u,\Lambda,\gamma,Z))$ of forward confluence of a coherent and regular clip $({\mathcal T},{\ll},{\triangleright})$}\/ consists in sequentially executing the following actions:
\begin{itemize}
\item add a tip $(l,v,\Theta,\delta,T)$ to ${\mathcal T}$ such that $l$ is new, $v{\in}R_{\L}(s)$, $u{\leq_{\L}}v$, $\Theta{=}\Gamma^{\circ}$, $\Theta{=}\Lambda$, $\delta{=}\alpha{+}1$, $\delta{=}\gamma$, $T{=}X$ and $T{=}Z{+}1$,
\item add the couple $((i,s,\Gamma,\alpha,X),(l,v,\Theta,\delta,T))$ to $\triangleright$,
\item add the couple $((k,u,\Lambda,\gamma,Z),(l,v,\Theta,\delta,T))$ to $\ll$.
\end{itemize}
Obviously, the resulting clip is coherent.
\\
\\
Moreover, since the resulting clip is obtained by adding the tip $(l,v,\Theta,\delta,T)$ to ${\mathcal T}$, the couple $((i,s,\Gamma,\alpha,X),(l,v,\Theta,\delta,T))$ to $\triangleright$ and the couple $((k,u,\Lambda,\gamma,Z),(l,v,
$\linebreak$
\Theta,\delta,T))$ to $\ll$, then the resulting clip is regular.
Indeed, suppose for a while that the resulting clip is not regular.
Hence, either the tip $(l,v,\Theta,\delta,T)$ has a $\triangleright$-predecessor that is different from $(i,s,\Gamma,\alpha,X)$ in the resulting clip, or the tip $(l,v,\Theta,\delta,T)$ has a $\ll$-predecessor that is different from $(k,u,\Lambda,\gamma,Z)$ in the resulting clip, or the tip $(l,v,\Theta,\delta,T)$ has a $\ll$-predecessor $(m,w,\Phi,\epsilon,U)$ such that no $\ll$-predecessor of $(i,s,\Gamma,\alpha,X)$ is a $\triangleright$-predecessor of $(m,w,\Phi,\epsilon,U)$ in the resulting clip, or the tip $(l,v,\Theta,\delta,T)$ has a $\triangleright$-predecessor $(m,w,\Phi,\epsilon,U)$ such that no $\ll$-predecessor of $(m,w,\Phi,\epsilon,U)$ is a $\triangleright$-predecessor of $(k,u,\Lambda,\gamma,Z)$ in the resulting clip.
In the first case, since $l$ is new, then this is impossible.
In the second case, since $l$ is new, then this is impossible.
In the third case, since $l$ is new, then $(m,w,\Phi,\epsilon,U)$ is equal to $(k,u,\Lambda,\gamma,Z)$.
Since $(j,t,\Delta,\beta,Y)$ is a $\ll$-predecessor of $(i,s,\Gamma,\alpha,X)$ and no $\ll$-predecessor of $(i,s,\Gamma,\alpha,X)$ is a $\triangleright$-predecessor of $(m,w,\Phi,\epsilon,U)$ in the resulting clip, then $(j,t,\Delta,\beta,Y)$ is not a $\triangleright$-predecessor of $(k,u,\Lambda,\gamma,Z)$ in ${\mathcal T}$: a contradiction.
In the fourth case, since $l$ is new, then $(m,w,\Phi,\epsilon,U)$ is equal to $(i,s,\Gamma,\alpha,X)$.
Since $(j,t,\Delta,\beta,Y)$ is a $\ll$-predecessor of $(i,s,\Gamma,\alpha,X)$ and no $\ll$-predecessor of $(m,w,\Phi,\epsilon,U)$ is a $\triangleright$-predecessor of $(k,u,\Lambda,\gamma,Z)$ in the resulting clip, then $(j,t,\Delta,\beta,Y)$ is not a $\triangleright$-predecessor of $(k,u,\Lambda,\gamma,Z)$ in ${\mathcal T}$: a contradiction.
\\
\\
In other respect, notice that if $\Gamma{=}\Sigma_{A}^{\alpha}$, $\Delta{=}\Sigma_{A}^{\beta}$ and $\Lambda{=}\Sigma_{A}^{\gamma}$ then $\Theta{=}\Sigma_{A}^{\delta}$.
\\
\\
Finally, notice that this repair is the repair of a defect of rank $\alpha$ and height $X$ that only introduces in ${\mathcal T}$ a tip of rank $\alpha{+}1$ and height $X$.
\subsection*{The procedure of forward confluence}
Given $\alpha{\in}\N$ and a coherent and regular clip $({\mathcal T},{\ll},{\triangleright})$, the {\em procedure of forward confluence}\/ is defined as follows:
\begin{enumerate}
\item $x:=({\mathcal T},{\ll},{\triangleright})$,
\item $X:=0$,
\item while $x$ contains defects of forward confluence of rank $\alpha$ do
\begin{enumerate}
\item repair in $x$ all defects of forward confluence of rank $\alpha$ and height $X$,
\item $X:=X{+}1$.
\end{enumerate}
\end{enumerate}
Given $\alpha{\in}\N$ and a coherent and regular clip $({\mathcal T},{\ll},{\triangleright})$, the role of the procedure of forward confluence is to iteratively repair all defects of forward confluence of $({\mathcal T},{\ll},{\triangleright})$ of rank $\alpha$.
\begin{lemma}\label{lemma:about:height:maximal:in:tips:repairing:forward:confluence:defects}
Given $\alpha{\in}\N$ and a coherent and regular clip $({\mathcal T},{\ll},{\triangleright})$, at any moment of the execution of the procedure of forward confluence, for all tips $(j,t,\Delta,
$\linebreak$
\beta,Y)$ occurring in $x$, $Y{\leq}\max\{Z{\in}\N$: $(k,u,\Lambda,\gamma,Z){\in}{\mathcal T}\}$.
\end{lemma}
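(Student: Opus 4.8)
The plan is to establish the inequality as an invariant of the while loop of the procedure of forward confluence, maintained by each individual repair. Write $M := \max\{Z \in \N : (k,u,\Lambda,\gamma,Z) \in \mathcal{T}\}$ for the maximal height occurring in the input clip $(\mathcal{T},\ll,\triangleright)$. I would argue by induction on the number of repairs of forward confluence defects carried out so far that every tip occurring in $x$ has height $\leq M$. The base case is the initial assignment $x := (\mathcal{T},\ll,\triangleright)$: here every tip of $x$ already belongs to $\mathcal{T}$, so its height is $\leq M$ by the very definition of $M$.

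For the inductive step, I would invoke the observation recorded just after the definition of the repair of forward confluence defects, namely that the repair of a defect $((i,s,\Gamma,\alpha,X),(j,t,\Delta,\beta,Y),(k,u,\Lambda,\gamma,Z))$ of rank $\alpha$ and height $X$ introduces into the current clip $x$ only a single new tip $(l,v,\Theta,\delta,T)$, of rank $\alpha+1$ and height $T = X$. Since the height of a forward confluence defect is by definition the height $X$ of its first component $(i,s,\Gamma,\alpha,X)$, and that component is a tip already occurring in $x$, the induction hypothesis yields $X \leq M$. Hence the newly introduced tip has height $T = X \leq M$, and since the while body performs finitely many such repairs in succession, each preserving the bound, the invariant survives one full pass through the loop.

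As every tip occurring in $x$ therefore has height $\leq M$ at all times, the inequality $Y \leq \max\{Z \in \N : (k,u,\Lambda,\gamma,Z) \in \mathcal{T}\}$ holds at any moment of the execution, exactly in parallel with Lemmas~\ref{lemma:about:height:maximal:in:tips:repairing:accessibility:defects} and~\ref{lemma:about:height:maximal:in:tips:repairing:downward:confluence:defects}. I expect no real obstacle: the whole argument is carried by the already-verified fact that the forward-confluence repair raises only the rank, from $\alpha$ to $\alpha+1$, while copying the height of its first component onto the new tip, so no repair can ever push a height above the ceiling $M$ fixed once and for all by the input clip.
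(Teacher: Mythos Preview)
Your proposal is correct and follows exactly the same idea as the paper's proof, which is a one-liner observing that the repair of a forward-confluence defect of rank $\alpha$ and height $X$ only introduces a tip of rank $\alpha{+}1$ and height $X$. You have simply spelled out explicitly the invariant and the induction on the number of repairs that this observation implicitly carries.
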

\begin{proof}
It suffices to notice that for all $X{\in}\N$, the repair of a defect of forward confluence of rank $\alpha$ and height $X$ only introduces a tip of rank $\alpha{+}1$ and height $X$.
\medskip
\end{proof}
\begin{lemma}\label{lemma:procedure:defects:forward:confluence:terminates}
Given $\alpha{\in}\N$ and a coherent and regular clip $({\mathcal T},{\ll},{\triangleright})$, the procedure of forward confluence terminates.
\end{lemma}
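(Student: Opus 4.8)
The plan is to obtain termination from the height bound of Lemma~\ref{lemma:about:height:maximal:in:tips:repairing:forward:confluence:defects}, following the pattern already used for the maximality, accessibility and downward confluence procedures. Set $M{=}\max\{Z{\in}\N:(k,u,\Lambda,\gamma,Z){\in}{\mathcal T}\}$, the largest height occurring in the input clip. By that lemma, throughout the execution no tip occurring in $x$ has height above $M$; and since the height of a defect of forward confluence is by definition the height $X$ of its first tip $(i,s,\Gamma,\alpha,X)$, every defect of forward confluence of rank $\alpha$ that ever appears has height at most $M$.

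The key local observation is that repairing a defect of rank $\alpha$ and height $X$ adds exactly one tip $(l,v,\Theta,\delta,T)$, of rank $\alpha{+}1$ and height $T{=}X$, together with the edges $(i,s,\Gamma,\alpha,X){\triangleright}(l,v,\Theta,\delta,T)$ and $(k,u,\Lambda,\gamma,Z){\ll}(l,v,\Theta,\delta,T)$. I would check that this cannot spawn a new defect of forward confluence of rank $\alpha$ of height at most $X$. Indeed, in a rank-$\alpha$ defect the first and middle tips have rank $\alpha$ while the third has rank $\alpha{+}1$, so the new tip can occur in such a defect only as its third component; by regularity its sole $\triangleright$-predecessor is $(i,s,\Gamma,\alpha,X)$, which forces the middle tip of that defect to be $(i,s,\Gamma,\alpha,X)$ and its first tip to be a $\ll$-successor of $(i,s,\Gamma,\alpha,X)$, hence by coherence a tip of height $X{+}1$. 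Thus every rank-$\alpha$ defect created while height $X$ is being processed has height $X{+}1$.

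Termination then follows by a routine induction on the loop variable $X$. Each pass repairs all rank-$\alpha$ defects of the current height $X$ and then increments $X$; by the previous paragraph this clears every rank-$\alpha$ defect of height $X$ while producing new ones only at height $X{+}1$, so the invariant ``after the pass at height $X$, no rank-$\alpha$ defect of height at most $X$ remains'' is preserved, with a vacuous base case. Each individual pass is finite, the clip staying finite and the finite set of height-$X$ defects only shrinking under repair. Since every rank-$\alpha$ defect has height at most $M$, the pass at height $M$ leaves no rank-$\alpha$ defect at all, the \textbf{while} guard fails, and the loop halts after at most $M{+}1$ passes. The one genuinely delicate point, and where I expect the real work to lie, is the verification in the second paragraph that adding the edge $(i,s,\Gamma,\alpha,X){\triangleright}(l,v,\Theta,\delta,T)$ activates new candidate defects only strictly above the current height, so that the height frontier advances monotonically and stays capped by $M$; once the height bound is granted, everything else is bookkeeping.
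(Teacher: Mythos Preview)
Your proof is correct and follows the paper's approach, which simply cites Lemma~\ref{lemma:about:height:maximal:in:tips:repairing:forward:confluence:defects} (the height bound). You supply the details the paper leaves implicit---most importantly, that a repair at height~$X$ can spawn new rank-$\alpha$ forward-confluence defects only at height~$X{+}1$---which is precisely what lets the bounded height counter overtake all remaining defects.
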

\begin{proof}
By Lemma~\ref{lemma:about:height:maximal:in:tips:repairing:forward:confluence:defects}.
\medskip
\end{proof}
\begin{lemma}\label{lemma:about:alpha:clean:coherent:clips:repairing:forward:confluence}
Let $\alpha{\in}\N$ and $({\mathcal T},{\ll},{\triangleright})$ be a coherent and regular clip.
If $({\mathcal T},{\ll},
$\linebreak$
{\triangleright})$ is $(\alpha{+}1)$-clean for maximality, $(\alpha{+}1)$-clean for accessibility, $(\alpha{+}1)$-clean for downward confluence when $\L{=}\LIK$ and $\alpha$-clean for forward confluence then the clip obtained from $({\mathcal T},{\ll},{\triangleright})$ after the execution of the procedure of forward confluence is $(\alpha{+}1)$-clean for maximality, $(\alpha{+}1)$-clean for accessibility, $(\alpha{+}1)$-clean for downward confluence when $\L{=}\LIK$ and $(\alpha{+}1)$-clean for forward confluence.
\end{lemma}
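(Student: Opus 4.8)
The plan is to reproduce, essentially verbatim, the argument already used for the three preceding procedures, and in particular for Lemma~\ref{lemma:about:alpha:clean:coherent:clips:repairing:downward:confluence}. First I would invoke Lemma~\ref{lemma:procedure:defects:forward:confluence:terminates} so that the procedure of forward confluence halts and the clip it returns is well defined; coherence and regularity of that clip are already secured by the step-by-step verification carried out in the repair subsection. The single structural fact on which everything else rests is the closing remark of that subsection: the repair of a forward confluence defect of rank $\alpha$ and height $X$ adds exactly one tip $(l,v,\dots)$, of rank $\alpha{+}1$, together with the edge $(i,s,\dots){\triangleright}(l,v,\dots)$ out of the rank-$\alpha$ tip $(i,s,\dots)$ and the edge $(k,u,\dots){\ll}(l,v,\dots)$ between two rank-$(\alpha{+}1)$ tips.

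From this I would read off the four conclusions. Cleanness for maximality is immediate: the only new $\ll$-edge emanates from a rank-$(\alpha{+}1)$ tip, so the $\ll$-successor set of every tip of rank ${\leq}\alpha$ is untouched and no old tip changes its maximality status. Cleanness for accessibility is almost as quick: the only genuinely new $\triangleright$-edge leaves the rank-$\alpha$ tip $(i,s,\dots)$, but the defining clause of an accessibility defect is universally quantified over $\triangleright$-successors, so enlarging that successor set can only destroy defects, never create one. For forward confluence itself, rank-$(\beta{<}\alpha)$ defects cannot appear because the tail of the new $\triangleright$-edge has rank $\alpha$, while all rank-$\alpha$ defects are gone once the while loop exits, since its guard tests precisely for the presence of a rank-$\alpha$ forward confluence defect; together these yield $(\alpha{+}1)$-cleanness.

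The delicate point, and the one I expect to be the real obstacle, is cleanness for downward confluence when $\L{=}\LIK$, because a downward confluence defect is not a purely universal condition: it demands the \emph{absence} of a completing tip. The new $\triangleright$-edge $(i,s,\dots){\triangleright}(l,v,\dots)$ could a priori turn some old tip $(a,\dots)$ with $(a,\dots){\ll}(i,s,\dots)$ into the source of a fresh rank-$\alpha$ downward confluence defect. Here I would use regularity of the input clip decisively: by its first clause the tip $(i,s,\dots)$ has a unique $\ll$-predecessor, namely the tip $(j,t,\dots)$ coming from the repaired defect, so $(a,\dots)$ must be $(j,t,\dots)$; and the defect datum already supplies $(j,t,\dots){\triangleright}(k,u,\dots)$ while the repair has just added $(k,u,\dots){\ll}(l,v,\dots)$, so $(k,u,\dots)$ is exactly the completing tip witnessing that no such downward confluence defect is created. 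The same height-indexing that drives termination (Lemma~\ref{lemma:about:height:maximal:in:tips:repairing:forward:confluence:defects}, bounding all heights by the maximal height present in ${\mathcal T}$) guarantees that the rank-$\alpha$ forward confluence defects spawned at strictly greater height by the new $\triangleright$-edge are themselves swept up in later passes of the loop, so the loop does terminate in a state with no rank-$\alpha$ forward confluence defect. Assembling these observations gives the four $(\alpha{+}1)$-cleanness statements.
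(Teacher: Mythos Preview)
Your proposal is correct and follows the same line as the paper: invoke Lemma~\ref{lemma:procedure:defects:forward:confluence:terminates} for termination and then argue from the structural fact that every repair step introduces only a tip of rank $\alpha{+}1$ (together with a $\triangleright$-edge out of a rank-$\alpha$ tip and a $\ll$-edge between rank-$(\alpha{+}1)$ tips). In fact the paper's proof is terser than yours---it simply records that ``the execution of the procedure of forward confluence only introduces tips of rank $\alpha{+}1$'' and concludes---so your careful treatment of the downward-confluence case (using regularity to pin down the unique $\ll$-predecessor $(j,t,\dots)$ of $(i,s,\dots)$ and observing that the freshly added edge $(k,u,\dots){\ll}(l,v,\dots)$ together with $(j,t,\dots){\triangleright}(k,u,\dots)$ supplies the completing witness) makes explicit a point the paper leaves to the reader.
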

\begin{proof}
Suppose $({\mathcal T},{\ll},{\triangleright})$ is $(\alpha{+}1)$-clean for maximality, $(\alpha{+}1)$-clean for accessibility, $(\alpha{+}1)$-clean for downward confluence when $\L{=}\LIK$ and $\alpha$-clean for forward confluence.
Notice that by Lemma~\ref{lemma:procedure:defects:forward:confluence:terminates}, the procedure of forward confluence terminates.
In other respect, since $({\mathcal T},{\ll},{\triangleright})$ is $(\alpha{+}1)$-clean for maximality, $(\alpha{+}1)$-clean for accessibility, $(\alpha{+}1)$-clean for downward confluence when $\L{=}\LIK$ and $\alpha$-clean for forward confluence and the execution of the procedure of forward confluence only introduces tips of rank $\alpha{+}1$, then the clip obtained from $({\mathcal T},{\ll},{\triangleright})$ after the execution of the procedure of forward confluence is $(\alpha{+}1)$-clean for maximality, $(\alpha{+}1)$-clean for accessibility, $(\alpha{+}1)$-clean for downward confluence when $\L{=}\LIK$ and $(\alpha{+}1)$-clean for forward confluence.
\medskip
\end{proof}
\section{A decision procedure}\label{section:a:decision:procedure}
\subsection*{The saturation procedure}
The {\em saturation procedure}\/ is defined as follows:
\begin{enumerate}
\item $x:=(\{(0,s_{0},\Sigma_{A},0,0)\},\emptyset,\emptyset)$,
\item $\alpha:=0$,
\item while $x$ is not clean do
\begin{enumerate}
\item repair in $x$ all defects of maximality of rank $\alpha$ by applying the maximality procedure,
\item repair in $x$ all defects of accessibility of rank $\alpha$ by applying the accessibility procedure,
\item when $\L{=}\LIK$, repair in $x$ all defects of downward confluence of rank $\alpha$ by applying the procedure of downward confluence,
\item repair in $x$ all defects of forward confluence of rank $\alpha$ by applying the procedure of forward confluence,
\item $\alpha:=\alpha{+}1$.
\end{enumerate}
\end{enumerate}
The role of the saturation procedure is to iteratively repair all defects of $(\{(0,s_{0},
$\linebreak$
\Sigma_{A},0,0)\},\emptyset,\emptyset)$.
\begin{lemma}\label{lemma:invariant:of:the:saturation:procedure}
At any moment of the execution of the saturation procedure, for all tips $(j,t,\Delta,\beta,Y)$ occurring in $x$, $\beta{\leq}\card(\Sigma_{A}){-}\card(\Delta)$.
\end{lemma}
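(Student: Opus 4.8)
The plan is to prove, by induction on the number of individual repair steps carried out during the execution of the saturation procedure, the equivalent invariant that every tip $(j,t,\Delta,\beta,Y)$ occurring in $x$ satisfies $\beta{+}\card(\Delta){\leq}\card(\Sigma_{A})$; the stated inequality $\beta{\leq}\card(\Sigma_{A}){-}\card(\Delta)$ is then obtained by rearrangement. The decisive arithmetic input will be the earlier lemma on the operation $\Gamma{\mapsto}\Gamma^{\circ}$, namely that $\Gamma^{\circ}$ is closed and that $\card(\Gamma^{\circ}){<}\card(\Gamma)$ whenever $\Gamma$ is closed and nonempty. Intuitively, each unit increment of the rank performed by an accessibility or a confluence repair is exactly paid for by a strict drop in the cardinality of the topic, so that the quantity $\beta{+}\card(\Delta)$ never grows.

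For the base case, the only tip present initially is $(0,s_{0},\Sigma_{A},0,0)$, for which $\beta{+}\card(\Delta){=}0{+}\card(\Sigma_{A}){=}\card(\Sigma_{A})$, so the invariant holds. For the inductive step, I would inspect each kind of repair, using the observation that every repair creates its new tip as either a $\ll$-successor or a $\triangleright$-successor of a tip $(i,s,\Gamma,\alpha,X)$ that is already present in $x$ and hence, by the induction hypothesis, satisfies $\alpha{+}\card(\Gamma){\leq}\card(\Sigma_{A})$. The two maximality repairs produce a tip with $\Delta{=}\Gamma$ and $\beta{=}\alpha$, so $\beta{+}\card(\Delta){=}\alpha{+}\card(\Gamma){\leq}\card(\Sigma_{A})$ follows immediately, with no recourse to the $\Gamma^{\circ}$ lemma.

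The remaining four repairs --- the two accessibility repairs and the two confluence repairs --- produce a tip with $\Delta{=}\Gamma^{\circ}$ and $\beta{=}\alpha{+}1$, where $(i,s,\Gamma,\alpha,X)$ is the source of the newly added $\triangleright$-edge. The one point requiring care is that the topic $\Gamma$ of this source is nonempty. For the $\square$- and $\lozenge$-accessibility repairs this is immediate, since the underlying defect requires $\square B{\in}\Gamma$ or $\lozenge B{\in}\Gamma$. For the downward confluence repair, $(i,s,\Gamma,\alpha,X)$ is the source of a $\ll$-edge of the defect, so $\Gamma{\not=}\emptyset$ by the coherence conditions of the clip; and for the forward confluence repair, $(i,s,\Gamma,\alpha,X)$ is the target of a $\ll$-edge whose coherence condition forces its topic to equal that of the source, which is itself nonempty. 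Since topics are by definition closed subsets of $\Sigma_{A}$, the $\Gamma^{\circ}$ lemma then yields $\card(\Gamma^{\circ}){\leq}\card(\Gamma){-}1$, whence $\beta{+}\card(\Delta){=}(\alpha{+}1){+}\card(\Gamma^{\circ}){\leq}\alpha{+}\card(\Gamma){\leq}\card(\Sigma_{A})$, closing the induction.

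I expect the bookkeeping of the nonemptiness of $\Gamma$ in the two confluence cases to be the only genuine obstacle, since there $\Gamma$ is not exhibited directly through a modal formula but must be recovered from the coherence conditions attached to the edges of the confluence defect; everything else is a routine inheritance of the induction hypothesis. It is worth noting that tracking $\beta{+}\card(\Delta)$ directly also disposes uniformly of the case $\Delta{=}\emptyset$, thereby avoiding the awkward boundary behaviour of the bound $\max\{0,\card(\Sigma_{A}){-}\beta\}$ that would surface in an argument routed through the identity $\Delta{=}\Sigma_{A}^{\beta}$ together with the earlier cardinality estimate for $\Sigma_{A}^{\beta}$.
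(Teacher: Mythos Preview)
Your argument is correct. It differs from the paper's route in a way worth noting: the paper does not track the quantity $\beta+\card(\Delta)$ directly, but instead relies on the stronger invariant $\Delta=\Sigma_{A}^{\beta}$, which has already been recorded piecemeal in each repair subsection (``notice that if $\Gamma=\Sigma_{A}^{\alpha}$ then $\Delta=\Sigma_{A}^{\beta}$''); the desired inequality is then read off from Lemma~4 on $\card(\Sigma_{A}^{\beta})$. That route is shorter because the invariant has been prepared in advance, and it yields the sharper structural fact that every topic is literally one of the sets $\Sigma_{A}^{\beta}$. Your route, by contrast, is self-contained at the point of the lemma and, as you observe, handles the boundary case $\Delta=\emptyset$ without having to argue separately that $\beta\le\card(\Sigma_{A})$ --- a step the paper leaves implicit when passing from $\card(\Sigma_{A}^{\beta})\le\max\{0,\card(\Sigma_{A})-\beta\}$ to $\beta\le\card(\Sigma_{A})-\card(\Delta)$. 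Your handling of nonemptiness of $\Gamma$ in the two confluence repairs via the coherence clauses on $\ll$-edges is exactly right.
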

\begin{proof}
It suffices to notice that at any moment of the execution of the saturation procedure, for all tips $(j,t,\Delta,\beta,Y)$ occurring in $x$, $\Delta{=}\Sigma_{A}^{\beta}$.
\medskip
\end{proof}
\begin{proposition}\label{proposition:principal}
The saturation procedure terminates.
\end{proposition}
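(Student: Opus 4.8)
The plan is to bound the number of iterations of the outer while loop and to check that each iteration completes. First I would note that a single pass through the body of the loop terminates: steps (a)--(d) run, in turn, the maximality procedure, the accessibility procedure, the procedure of downward confluence (when $\L{=}\LIK$) and the procedure of forward confluence, and these terminate by Lemmas~\ref{lemma:procedure:defects:maximality:terminates}, \ref{lemma:procedure:defects:accessibility:terminates}, \ref{lemma:procedure:defects:downward:confluence:terminates} and~\ref{lemma:procedure:defects:forward:confluence:terminates}. Thus the only way termination could fail is if the loop ran through infinitely many values of $\alpha$.

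Next I would establish, by induction on $\alpha$, the invariant that when the body of the loop is entered with counter $\alpha$, the clip $x$ is $\alpha$-clean for maximality, for accessibility, for downward confluence (when $\L{=}\LIK$) and for forward confluence. The case $\alpha{=}0$ is vacuous. For the inductive step I would chain the four preservation lemmas in the very order the sub-procedures run: Lemma~\ref{lemma:about:alpha:clean:coherent:clips:repairing:maximality} pushes maximality-cleanness up to $\alpha{+}1$; its output is exactly the hypothesis of Lemma~\ref{lemma:about:alpha:clean:coherent:clips:repairing:accessibility}, which then raises accessibility-cleanness; this feeds Lemma~\ref{lemma:about:alpha:clean:coherent:clips:repairing:downward:confluence} (when $\L{=}\LIK$), raising downward-confluence-cleanness; and finally Lemma~\ref{lemma:about:alpha:clean:coherent:clips:repairing:forward:confluence} raises forward-confluence-cleanness. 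After step (d) the clip is therefore $(\alpha{+}1)$-clean for all four kinds of defect, which together with the increment in step (e) reproduces the invariant at $\alpha{+}1$. Coherence and regularity, needed for these lemmas to apply, are preserved throughout by the remarks attached to each repair.

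The heart of the matter is to see that defects vanish for a structural reason rather than merely being cleaned forever. Here I would invoke the observation recorded in the proof of Lemma~\ref{lemma:invariant:of:the:saturation:procedure}: at every moment, each tip $(j,t,\Delta,\beta,Y)$ in $x$ has topic $\Delta{=}\Sigma_{A}^{\beta}$. Since $\card(\Sigma_{A}^{\beta}){\leq}\max\{0,\card(\Sigma_{A}){-}\beta\}$, the topic $\Sigma_{A}^{\beta}$ is empty as soon as $\beta{\geq}\card(\Sigma_{A})$. I would then check that a defect of rank $\alpha$, of any of the six kinds, forces $\Sigma_{A}^{\alpha}{\not=}\emptyset$. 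For the four maximality and accessibility defects this is immediate, since each requires a formula of shape $B{\rightarrow}C$, ${\square}B$ or ${\lozenge}B$ to belong to the topic $\Gamma{=}\Sigma_{A}^{\alpha}$ of the rank-$\alpha$ tip. For the two confluence defects it follows from coherence: each carries a ${\ll}$-edge incident to the rank-$\alpha$ tip (as source in the downward case, as target in the forward case), and the coherence clause for ${\ll}$ forces the topic $\Sigma_{A}^{\alpha}$ of that tip to be nonempty. Hence no defect of rank $\alpha$ exists once $\alpha{\geq}\card(\Sigma_{A})$.

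Combining the two ingredients finishes the argument. Once the loop has processed ranks $0,1,\ldots,\card(\Sigma_{A}){-}1$, the invariant makes $x$ be $\card(\Sigma_{A})$-clean for all four kinds of defect, i.e. free of defects of rank below $\card(\Sigma_{A})$, while the structural observation excludes defects of rank at least $\card(\Sigma_{A})$; so $x$ is clean and the loop exits after at most $\card(\Sigma_{A})$ iterations. I expect the third paragraph to be the main obstacle: one must verify uniformly, across all six defect kinds, that rank $\alpha$ entails $\Sigma_{A}^{\alpha}{\not=}\emptyset$, which for the two confluence defects means tracing the coherence constraints on the incident ${\ll}$- and ${\triangleright}$-edges rather than simply reading off a modal formula in the topic.
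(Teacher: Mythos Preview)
Your proposal is correct and follows essentially the same approach as the paper: the paper's proof simply cites Lemmas~\ref{lemma:procedure:defects:maximality:terminates}, \ref{lemma:about:alpha:clean:coherent:clips:repairing:maximality}, \ref{lemma:procedure:defects:accessibility:terminates}, \ref{lemma:about:alpha:clean:coherent:clips:repairing:accessibility}, \ref{lemma:procedure:defects:downward:confluence:terminates}, \ref{lemma:about:alpha:clean:coherent:clips:repairing:downward:confluence}, \ref{lemma:procedure:defects:forward:confluence:terminates}, \ref{lemma:about:alpha:clean:coherent:clips:repairing:forward:confluence} and~\ref{lemma:invariant:of:the:saturation:procedure}, and you have spelled out exactly how these fit together---the sub-procedure termination lemmas handle each loop body, the four preservation lemmas chain to give the $(\alpha{+}1)$-cleanness invariant, and Lemma~\ref{lemma:invariant:of:the:saturation:procedure} bounds the outer loop by forcing empty topics (hence no defects) at rank $\card(\Sigma_{A})$. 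Your additional observation that the confluence defects also require a nonempty rank-$\alpha$ topic, via the coherence clause on the incident ${\ll}$-edge, is the one detail the paper leaves entirely implicit, and your treatment of it is accurate.
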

\begin{proof}
By Lemmas~\ref{lemma:procedure:defects:maximality:terminates}, \ref{lemma:about:alpha:clean:coherent:clips:repairing:maximality}, \ref{lemma:procedure:defects:accessibility:terminates}, \ref{lemma:about:alpha:clean:coherent:clips:repairing:accessibility}, \ref{lemma:procedure:defects:downward:confluence:terminates}, \ref{lemma:about:alpha:clean:coherent:clips:repairing:downward:confluence}, \ref{lemma:procedure:defects:forward:confluence:terminates}, \ref{lemma:about:alpha:clean:coherent:clips:repairing:forward:confluence} and~\ref{lemma:invariant:of:the:saturation:procedure}.
\medskip
\end{proof}
Therefore, there exists $\alpha{\in}\N$ and there exists a finite sequence $(({\mathcal T}_{0},{\ll_{0}},{\triangleright_{0}}),\ldots,
$\linebreak$
({\mathcal T}_{\alpha},{\ll_{\alpha}},{\triangleright_{\alpha}}))$ of clips such that
\begin{itemize}
\item $({\mathcal T}_{0},{\ll_{0}},{\triangleright_{0}}){=}(\{(0,s_{0},\Sigma_{A},0,0)\},\emptyset,\emptyset)$,
\item for all $\beta{\in}\N$, if $\beta{<}\alpha$ then $({\mathcal T}_{\beta},{\ll_{\beta}},{\triangleright_{\beta}})$ is not clean and $({\mathcal T}_{\beta{+}1},{\ll_{\beta{+}1}},{\triangleright_{\beta{+}1}})$ is obtained from $({\mathcal T}_{\beta},{\ll_{\beta}},{\triangleright_{\beta}})$ by firstly applying the maximality procedure iteratively repairing all defects of maximality of rank $\beta$, by secondly applying the accessibility procedure iteratively repairing all defects of accessibility of rank $\beta$, by thirdly applying the procedure of downward confluence iteratively repairing all defects of downward confluence of rank $\beta$ when $\L{=}\LIK$ and by fourthly applying the procedure of forward confluence iteratively repairing all defects of forward confluence of rank $\beta$,
\item $({\mathcal T}_{\alpha},{\ll_{\alpha}},{\triangleright_{\alpha}})$ is clean.
\end{itemize}
\section{Finite frame property}\label{section:finite:model:property}
\subsection*{The saturated frame}
Let $(W^{\prime},{\leq^{\prime}},{R^{\prime}})$ be the frame such that
\begin{itemize}
\item $W^{\prime}{=}{\mathcal T}_{\alpha}$,
\item ${\leq^{\prime}}{=}{\ll_{\alpha}^{\star}}$,
\item ${R^{\prime}}{=}{\triangleright_{\alpha}}$.
\end{itemize}
The frame $(W^{\prime},{\leq^{\prime}},{R^{\prime}})$ is called {\em saturated frame of $s_{0}$.}
\begin{lemma}\label{lemma:frame:prime:is:appropriate}
\begin{itemize}
\item $(W^{\prime},{\leq^{\prime}},{R^{\prime}})$ is in ${\mathcal C}_{\fucfra}^{\omega}$ when $\L{=}\FIK$,
\item $(W^{\prime},{\leq^{\prime}},{R^{\prime}})$ is in ${\mathcal C}_{\fducfra}^{\omega}$ when $\L{=}\LIK$.
\end{itemize}
\end{lemma}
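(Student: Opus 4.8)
The plan is to verify, in turn, finiteness, upward confluence, forward confluence, and—when $\L{=}\LIK$—downward confluence of $(W^{\prime},{\leq^{\prime}},{R^{\prime}})$. First I would record finiteness together with the frame conditions. By Proposition~\ref{proposition:principal} the saturation procedure terminates, so the clip $({\mathcal T}_{\alpha},{\ll_{\alpha}},{\triangleright_{\alpha}})$ is reached after finitely many steps and ${\mathcal T}_{\alpha}$ is a finite set of tips; hence $W^{\prime}{=}{\mathcal T}_{\alpha}$ is finite and nonempty (it contains the initial tip $(0,s_{0},\Sigma_{A},0,0)$), ${\leq^{\prime}}{=}{\ll_{\alpha}^{\star}}$ is a preorder by definition of $(\cdot)^{\star}$, and ${R^{\prime}}{=}{\triangleright_{\alpha}}$ is a binary relation on $W^{\prime}$. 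Thus $(W^{\prime},{\leq^{\prime}},{R^{\prime}})$ is a finite frame. Throughout the procedure each repair preserves coherence and regularity, and the main loop exits only when the clip is clean, so $({\mathcal T}_{\alpha},{\ll_{\alpha}},{\triangleright_{\alpha}})$ is coherent, regular and clean; these three facts underlie the remaining conditions.

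Upward confluence is immediate: since $({\mathcal T}_{\alpha},{\ll_{\alpha}},{\triangleright_{\alpha}})$ is regular, Lemma~\ref{lemma:about:regularity:and:upward:confluence} gives that $({\mathcal T}_{\alpha},{\ll_{\alpha}^{\star}},{\triangleright_{\alpha}}){=}(W^{\prime},{\leq^{\prime}},{R^{\prime}})$ is upward confluent, in both cases $\L{=}\FIK$ and $\L{=}\LIK$.

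For forward confluence I would exploit cleanness. Since $({\mathcal T}_{\alpha},{\ll_{\alpha}},{\triangleright_{\alpha}})$ contains no defect of forward confluence, the defining clause unwinds to a one-step confluence property: whenever $p{\ll_{\alpha}}q$ and $p{\triangleright_{\alpha}}r$, there is a tip $v$ with $q{\triangleright_{\alpha}}v$ and $r{\ll_{\alpha}}v$. The genuine forward confluence of the frame, however, is stated with ${\leq^{\prime}}{=}{\ll_{\alpha}^{\star}}$ in place of ${\ll_{\alpha}}$, so the task is to promote this one-step property to ${\ll_{\alpha}^{\star}}$. This I would do by induction on the length of a ${\ll_{\alpha}}$-path witnessing $t{\leq^{\prime}}s$: in the base case $t{=}s$ one takes $v{=}u$; in the inductive step, peeling the last edge $s^{\prime}{\ll_{\alpha}}s$ of the path, the induction hypothesis applied to $t{\ll_{\alpha}^{\star}}s^{\prime}$ and $t{\triangleright_{\alpha}}u$ supplies $v^{\prime}$ with $s^{\prime}{\triangleright_{\alpha}}v^{\prime}$ and $u{\ll_{\alpha}^{\star}}v^{\prime}$, and the one-step property applied to $s^{\prime}{\ll_{\alpha}}s$ and $s^{\prime}{\triangleright_{\alpha}}v^{\prime}$ yields $v$ with $s{\triangleright_{\alpha}}v$ and $v^{\prime}{\ll_{\alpha}}v$; chaining gives $u{\ll_{\alpha}^{\star}}v$, as required.

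When $\L{=}\LIK$ the same strategy handles downward confluence: cleanness now also forbids defects of downward confluence, which similarly unwinds to the one-step property that $p{\ll_{\alpha}}q$ and $q{\triangleright_{\alpha}}r$ yield $v$ with $p{\triangleright_{\alpha}}v$ and $v{\ll_{\alpha}}r$, and an analogous induction on the length of the ${\ll_{\alpha}}$-path witnessing $s{\leq^{\prime}}t$—here peeling the edge adjacent to $t$, applying the one-step property first and then the induction hypothesis—promotes it to ${\ll_{\alpha}^{\star}}$. Collecting these facts places $(W^{\prime},{\leq^{\prime}},{R^{\prime}})$ in ${\mathcal C}_{\fucfra}^{\omega}$ when $\L{=}\FIK$ and in ${\mathcal C}_{\fducfra}^{\omega}$ when $\L{=}\LIK$. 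The main obstacle is precisely this passage from the one-step confluence encoded by the absence of defects—which speaks only about single ${\ll_{\alpha}}$-edges—to the confluence of the frame, whose preorder is the full transitive closure ${\ll_{\alpha}^{\star}}$; the inductions above, together with matching the peeling direction to the shape of each confluence diagram, are where the care is needed.
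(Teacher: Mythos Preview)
Your proof is correct and follows the same approach as the paper, which merely writes ``By Lemma~\ref{lemma:about:regularity:and:upward:confluence}'' and leaves finiteness, forward confluence, and downward confluence implicit. Your explicit inductions promoting the one-step confluence properties (encoded by the absence of defects) to the reflexive-transitive closure $\ll_{\alpha}^{\star}$ are exactly the details the paper suppresses, and they are carried out correctly.
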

\begin{proof}
By Lemma~\ref{lemma:about:regularity:and:upward:confluence}.
\medskip
\end{proof}
\subsection*{The saturated valuation}
The {\em saturated valuation of $s_{0}$}\/ is the valuation $V^{\prime}$: $\At\longrightarrow\wp(W^{\prime})$ on $(W^{\prime},{\leq^{\prime}},{R^{\prime}})$ such that for all atoms $p$, $V^{\prime}(p){=}\{(i,s,\Gamma,\alpha,X){\in}W^{\prime}$: $s{\in}V_{\L}(p)\}$.
\subsection*{The saturated model}
The {\em saturated model of $s_{0}$}\/ is the model $(W^{\prime},{\leq^{\prime}},{R^{\prime}},V^{\prime})$.
\subsection*{The Saturated Truth Lemma}
In Lemma~\ref{lemma:truth:lemma:at:the:end:of:the:procedure}, for all $B{\in}\Fo$ and for all $(i,s,\Gamma,\alpha,X){\in}W^{\prime}$, when we write ``$(i,s,\Gamma,
$\linebreak$
\alpha,X){\models}B$'' we mean ``$(W^{\prime},{\leq^{\prime}},{R^{\prime}},V^{\prime}),(i,s,\Gamma,\alpha,X){\models}B$'' and when we write ``$s{\models}
$\linebreak$
B$'' we mean ``$(W_{\L},{\leq_{\L}},{R_{\L}},V_{\L}),s{\models}B$''.
\begin{lemma}[Saturated Truth Lemma]\label{lemma:truth:lemma:at:the:end:of:the:procedure}
Let $B{\in}\Fo$.
For all $(i,s,\Gamma,\alpha,X){\in}
$\linebreak$
W^{\prime}$, if $B{\in}\Gamma$ then $(i,s,\Gamma,\alpha,X){\models}B$ if and only if $s{\models}B$.
\end{lemma}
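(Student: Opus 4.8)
The plan is to argue by induction on the structure of $B$, using two structural facts established earlier: that every $\ll^{\star}$-step preserves the topic and the rank of a tip (by coherence), and that $(i,s,\Gamma,\alpha,X){\mapsto}s$ is a homomorphism from $(W^{\prime},{\leq^{\prime}},{R^{\prime}})$ to $(W_{\L},{\leq_{\L}},{R_{\L}})$ (Lemma~\ref{lemma:about:coherence:and:morphisms}). The atomic case is immediate from the definition of $V^{\prime}$; the cases of $\top$ and $\bot$ are trivial, using that the correspondent of each tip is a proper prime theory; and the cases of $\wedge$ and $\vee$ follow at once from the induction hypothesis, since the closure of $\Gamma$ places the immediate subformulas back in $\Gamma$.

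For the implication from $s{\models}B$ to $(i,s,\Gamma,\alpha,X){\models}B$ (the easy direction for $\rightarrow$, $\square$ and $\lozenge$), I would rely on the homomorphism. In the $\rightarrow$-case, any $(j,t,\Delta,\beta,Y)$ with $(i,s,\Gamma,\alpha,X){\ll^{\star}}(j,t,\Delta,\beta,Y)$ satisfies $s{\leq_{\L}}t$ and $\Delta{=}\Gamma$, so the induction hypothesis transfers the relevant satisfactions between $t$ and the tip. The $\square$-case is analogous, noting that an $R^{\prime}$-successor carries topic $\Gamma^{\circ}$, which contains $B$ whenever ${\square}B{\in}\Gamma$. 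For $\lozenge$ this direction instead uses cleanness: if $s{\models}{\lozenge}B$ and ${\lozenge}B{\in}\Gamma$, then since the terminal clip has no defect of $\lozenge$-accessibility at $(i,s,\Gamma,\alpha,X)$ and the first two defect conditions already hold, the third must fail, producing a $\triangleright$-successor whose correspondent satisfies $B$; the induction hypothesis and reflexivity of $\leq^{\prime}$ then give ${\lozenge}B$ at the tip. Symmetrically, the reverse implication for $\lozenge$ is the easy one: an internal witness maps, via the homomorphism and topic preservation, to a canonical witness for $s{\models}{\lozenge}B$.

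The main obstacle is the implication from $(i,s,\Gamma,\alpha,X){\models}B$ to $s{\models}B$ in the $\rightarrow$- and $\square$-cases, which I would prove contrapositively. Suppose $s{\not\models}B{\rightarrow}C$ with $B{\rightarrow}C{\in}\Gamma$. The key sub-lemma is that one can reach from $(i,s,\Gamma,\alpha,X)$, along $\ll^{\star}$, a tip whose correspondent is maximal with respect to $B{\rightarrow}C$ and still falsifies it: while the current correspondent is not maximal, absence of a defect of $\rightarrow$-maximality (cleanness) forces a $\ll$-successor still falsifying $B{\rightarrow}C$, and each such step strictly increases the height; since the terminal clip is finite (Proposition~\ref{proposition:principal}), this chain must stop, and it can stop only at a maximal correspondent. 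Lemma~\ref{lemma:maximal:rightarrow} then yields that this correspondent satisfies $B$ and falsifies $C$, whence the induction hypothesis makes the reached tip satisfy $B$ and falsify $C$, so $(i,s,\Gamma,\alpha,X){\not\models}B{\rightarrow}C$.

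The $\square$-case combines the same reachability argument with accessibility cleanness. From $s{\not\models}{\square}B$ and ${\square}B{\in}\Gamma$ I would first reach, exactly as above, a $\ll^{\star}$-successor $(j,t,\Gamma,\alpha,Y)$ whose correspondent is maximal with respect to ${\square}B$ and falsifies it; Lemma~\ref{lemma:maximal:square} guarantees a canonical $R_{\L}$-successor falsifying $B$, and since the terminal clip has no defect of $\square$-accessibility at this tip, a $\triangleright$-successor $(k,u,\Gamma^{\circ},\alpha{+}1,Y)$ with $u{\not\models}B$ must exist. As $B{\in}\Gamma^{\circ}$, the induction hypothesis makes this successor falsify $B$, so the tip---and hence $(i,s,\Gamma,\alpha,X)$---falsifies ${\square}B$. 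The only delicate point throughout is the bookkeeping that keeps topics and ranks aligned so that the induction hypothesis is always applicable to the subformula at the reached tip, which is precisely what coherence and the preservation properties noted above secure.
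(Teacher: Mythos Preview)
Your proposal is correct and follows essentially the same route as the paper's proof: induction on $B$, with the homomorphism of Lemma~\ref{lemma:about:coherence:and:morphisms} handling the ``easy'' directions and cleanness of $({\mathcal T}_{\alpha},{\ll_{\alpha}},{\triangleright_{\alpha}})$ supplying the witnesses in the ``hard'' directions for $\rightarrow$, $\square$ and $\lozenge$. The one place where you are more explicit than the paper is the chain argument along $\ll$ to reach a tip whose correspondent is maximal; the paper compresses this into a single existence claim ``since clean, there exists \ldots'', but the content is the same, and your termination argument via strictly increasing height in a finite clip is exactly the justification that claim needs.
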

\begin{proof}
By induction on $B$.
{\bf Case $B{=}p$:}
Let $(i,s,\Gamma,\alpha,X){\in}W^{\prime}$.
Suppose $p{\in}\Gamma$.
From left to right, suppose $(i,s,\Gamma,\alpha,X){\models}p$.
Hence, $(i,s,\Gamma,\alpha,X){\in}V^{\prime}(p)$.
Thus, $s{\in}V_{\L}(p)$.
Consequently, $s{\models}p$.
From right to left, suppose $s{\models}p$.
Hence, $s{\in}V_{\L}(p)$.
Thus, $(i,s,\Gamma,\alpha,X){\in}V^{\prime}(p)$.
Consequently, $(i,s,\Gamma,\alpha,X){\models}p$.
{\bf Case $B{=}C{\rightarrow}D$:}
Let $(i,s,\Gamma,\alpha,X){\in}W^{\prime}$.
Suppose $C{\rightarrow}D{\in}\Gamma$.
Hence, $C{\in}\Gamma$ and $D{\in}\Gamma$.
From left to right, suppose $(i,s,\Gamma,\alpha,X){\models}C{\rightarrow}D$.
For the sake of the contradiction, suppose $s{\not\models}C{\rightarrow}D$.
Since $({\mathcal T}_{\alpha},{\ll_{\alpha}},{\triangleright_{\alpha}})$ is clean, then there exists $(j,t,\Delta,\beta,Y){\in}W^{\prime}$ such that $(i,s,\Gamma,\alpha,X){\leq^{\prime}}(j,t,\Delta,\beta,Y)$, $t{\not\models}C{\rightarrow}D$ and $t$ is maximal with respect to $C{\rightarrow}D$.
Thus, by Lemma~\ref{lemma:maximal:rightarrow}, $t{\models}C$ and $t{\not\models}D$.
Since $C{\in}\Gamma$ and $D{\in}\Gamma$, then by induction hypothesis, $(j,t,\Delta,\beta,Y){\models}C$ and $(j,t,\Delta,\beta,Y){\not\models}D$.
Since $(i,s,\Gamma,\alpha,X){\leq^{\prime}}(j,t,\Delta,\beta,Y)$, then $(i,s,\Gamma,\alpha,X){\not\models}C{\rightarrow}D$: a contradiction.
From right to left, suppose $s{\models}C{\rightarrow}D$.
For the sake of the contradiction, suppose $(i,s,\Gamma,\alpha,X){\not\models}C{\rightarrow}D$.
Consequently, there exists $(j,t,\Delta,\beta,Y){\in}W^{\prime}$ such that $(i,s,\Gamma,\alpha,X){\leq^{\prime}}(j,t,\Delta,\beta,Y)$, $(j,t,\Delta,\beta,Y){\models}C$ and $(j,t,\Delta,\beta,Y){\not\models}D$.
Hence, $s{\leq_{\L}}t$.
Moreover, since $C{\in}\Gamma$ and $D{\in}\Gamma$, then by induction hypothesis, $t{\models}C$ and $t{\not\models}D$.
Thus, $s{\not\models}C{\rightarrow}D$: a contradiction.
{\bf Cases $B{=}\top$, $B{=}\bot$, $B{=}C{\vee}D$ and $B{=}C{\wedge}D$:}
These cases are left as exercises to the reader.
{\bf Case $B{=}{\square}C$:}
Let $(i,s,\Gamma,\alpha,X){\in}W^{\prime}$.
Suppose ${\square}C{\in}\Gamma$.
Consequently, $C{\in}\Gamma^{\circ}$.
From left to right, suppose $(i,s,\Gamma,\alpha,X){\models}{\square}C$.
For the sake of the contradiction, suppose $s{\not\models}{\square}C$.
Since $({\mathcal T}_{\alpha},{\ll_{\alpha}},{\triangleright_{\alpha}})$ is clean, then there exists $(j,t,\Delta,\beta,Y){\in}W^{\prime}$ such that $(i,s,\Gamma,\alpha,X){\leq^{\prime}}(j,t,\Delta,\beta,Y)$, $t{\not\models}{\square}C$ and $t$ is maximal with respect to ${\square}C$.
Since $({\mathcal T}_{\alpha},{\ll_{\alpha}},{\triangleright_{\alpha}})$ is clean, then there exists $(k,u,\Lambda,\gamma,Z){\in}W^{\prime}$ such that $(j,t,\Delta,\beta,Y){R^{\prime}}(k,u,\Lambda,\gamma,Z)$ and $u{\not\models}C$.
Since $C{\in}\Gamma^{\circ}$, then by induction hypothesis, $(k,u,\Lambda,\gamma,Z){\not\models}C$.
Since $(i,s,\Gamma,\alpha,X){\leq^{\prime}}(j,t,\Delta,\beta,Y)$ and $(j,t,\Delta,\beta,Y){R^{\prime}}
$\linebreak$
(k,u,\Lambda,\gamma,Z)$, then $(i,s,\Gamma,\alpha,X){\not\models}{\square}C$: a contradiction.
From right to left, suppose $s{\models}{\square}C$.
For the sake of the contradiction, suppose $(i,s,\Gamma,\alpha,X){\not\models}{\square}C$.
Hence, there exists $(j,t,\Delta,\beta,Y),(k,u,\Lambda,\gamma,Z){\in}W^{\prime}$ such that $(i,s,\Gamma,\alpha,X){\leq^{\prime}}(j,
$\linebreak$
t,\Delta,\beta,Y)$, $(j,t,\Delta,\beta,Y){R^{\prime}}(k,u,\Lambda,\gamma,Z)$ and $(k,u,\Lambda,\gamma,Z){\not\models}C$.
Thus, $s{\leq}t$ and $u{\in}R_{\L}(t)$.
Moreover, since $C{\in}\Gamma^{\circ}$, then by induction hypothesis, $u{\not\models}C$.
Consequently, $s{\not\models}{\square}C$: a contradiction.
{\bf Case $B{=}{\lozenge}C$:}
Let $(i,s,\Gamma,\alpha,X){\in}W^{\prime}$.
Suppose ${\lozenge}C{\in}\Gamma$.
Hence, $C{\in}\Gamma^{\circ}$.
From left to right, suppose $(i,s,\Gamma,\alpha,X){\models}{\lozenge}C$.
Thus, there exists $(j,t,\Delta,\beta,Y),(k,u,
$\linebreak$
\Lambda,\gamma,Z){\in}W^{\prime}$ such that $(j,t,\Delta,\beta,Y){\leq^{\prime}}(i,s,\Gamma,\alpha,X)$, $(j,t,\Delta,\beta,Y){R^{\prime}}(k,u,\Lambda,\gamma,
$\linebreak$
Z)$ and $(k,u,\Lambda,\gamma,Z){\models}C$.
Consequently, $t{\leq}s$ and $u{\in}R_{\L}(t)$.
Moreover, since $C{\in}\Gamma^{\circ}$, then by induction hypothesis, $u{\models}C$.
Hence, $s{\models}{\lozenge}C$.
From right to left, suppose $s{\models}{\lozenge}C$.
Since $({\mathcal T}_{\alpha},{\ll_{\alpha}},{\triangleright_{\alpha}})$ is clean, then there exists $(j,t,\Delta,\beta,Y){\in}W^{\prime}$ such that $(i,s,\Gamma,\alpha,X){R^{\prime}}(j,t,\Delta,\beta,Y)$ and $t{\models}C$.
Since $C{\in}\Gamma^{\circ}$, then by induction hypothesis, $(j,t,\Delta,\beta,Y){\models}C$.
Since $(i,s,\Gamma,\alpha,X){R^{\prime}}(j,t,\Delta,\beta,Y)$, then $(i,s,\Gamma,\alpha,X){\models}{\lozenge}C$.
\medskip
\end{proof}
\begin{lemma}\label{lemma:falsified:finite:model}
$A$ is falsified in $(W^{\prime},{\leq^{\prime}},{R^{\prime}})$.
\end{lemma}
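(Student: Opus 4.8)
The plan is to obtain this as a direct corollary of the Saturated Truth Lemma (Lemma~\ref{lemma:truth:lemma:at:the:end:of:the:procedure}), applied to the initial tip. First I would observe that the saturation procedure is purely additive: its starting point is $x{:=}(\{(0,s_{0},\Sigma_{A},0,0)\},\emptyset,\emptyset)$, and every repair action described in Sections~\ref{section:reparations:maximality:defects}--\ref{section:reparations:confluence:defects} only \emph{adds} a tip to ${\mathcal T}$, never removing one. Consequently ${\mathcal T}_{0}{\subseteq}{\mathcal T}_{1}{\subseteq}\cdots{\subseteq}{\mathcal T}_{\alpha}$, so the initial tip $(0,s_{0},\Sigma_{A},0,0)$ still belongs to ${\mathcal T}_{\alpha}{=}W^{\prime}$. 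Moreover, since $\Sigma_{A}$ is by definition the least closed set of formulas containing $A$, we have $A{\in}\Sigma_{A}$; that is, $A$ lies in the topic of this tip.

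Next I would invoke Lemma~\ref{lemma:truth:lemma:at:the:end:of:the:procedure} with $B{:=}A$ and the world $(0,s_{0},\Sigma_{A},0,0){\in}W^{\prime}$. Because $A{\in}\Sigma_{A}$, the lemma yields that $(0,s_{0},\Sigma_{A},0,0){\models}A$ if and only if $s_{0}{\models}A$. By the choice of $s_{0}$ fixed at the start of Section~\ref{section:tips}, we have $(W_{\L},{\leq_{\L}},{R_{\L}},V_{\L}),s_{0}{\not\models}A$, i.e.\ $s_{0}{\not\models}A$. Combining the two gives $(0,s_{0},\Sigma_{A},0,0){\not\models}A$ in the saturated model $(W^{\prime},{\leq^{\prime}},{R^{\prime}},V^{\prime})$.

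Finally, since $(W^{\prime},{\leq^{\prime}},{R^{\prime}},V^{\prime})$ is a model based on the saturated frame $(W^{\prime},{\leq^{\prime}},{R^{\prime}})$ in which $A$ fails at some world, $A$ is not true in this model and hence not valid in $(W^{\prime},{\leq^{\prime}},{R^{\prime}})$; in other words, $A$ is falsified there, as required. I do not expect any real obstacle in this argument: all the substantive work has already been carried out in establishing cleanness of $({\mathcal T}_{\alpha},{\ll_{\alpha}},{\triangleright_{\alpha}})$ and in the Saturated Truth Lemma itself. The only point deserving explicit mention is the persistence of the initial tip into $W^{\prime}$, which is immediate from the additive nature of the repairs.
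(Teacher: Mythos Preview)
Your proposal is correct and follows exactly the approach of the paper, which simply cites the Saturated Truth Lemma together with the fact that $(W_{\L},{\leq_{\L}},{R_{\L}},V_{\L}),s_{0}{\not\models}A$. You have merely made explicit the two points the paper leaves implicit: that the initial tip survives into $W^{\prime}$ (by additivity of the repairs) and that $A$ belongs to its topic $\Sigma_{A}$.
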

\begin{proof}
By Lemma~\ref{lemma:truth:lemma:at:the:end:of:the:procedure} and the fact that $(W_{\L},{\leq_{\L}},{R_{\L}},V_{\L}),s_{0}{\not\models}A$.
\medskip
\end{proof}
\subsection*{The Finite Frame Property}
All in all, the desired result is within reach.
\begin{proposition}[Finite Frame Property]\label{proposition:fmp:fik}
\begin{itemize}
\item $\FIK{=}\Log({\mathcal C}_{\fcfra}){=}\Log({\mathcal C}_{\fucfra})
$\linebreak$
{=}\Log({\mathcal C}_{\fucfra}^{\omega})$,
\item $\LIK{=}\Log({\mathcal C}_{\fdcfra}){=}\Log({\mathcal C}_{\fducfra}){=}\Log({\mathcal C}_{\fducfra}^{\omega})$.
\end{itemize}
\end{proposition}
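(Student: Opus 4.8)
The plan is to close a cycle of inclusions, since the two chains of equalities follow once we have inclusions going both ways between consecutive terms. First I would invoke Completeness (Proposition~\ref{proposition:completeness:of:FIK:and:LIK}) to pin down the leftmost terms: $\FIK{=}\Log({\mathcal C}_{\fcfra})$ and $\LIK{=}\Log({\mathcal C}_{\fdcfra})$. Next, Proposition~\ref{proposition:facile:about:inclusion} already supplies the two forward chains $\Log({\mathcal C}_{\fcfra}){\subseteq}\Log({\mathcal C}_{\fucfra}){\subseteq}\Log({\mathcal C}_{\fucfra}^{\omega})$ and $\Log({\mathcal C}_{\fdcfra}){\subseteq}\Log({\mathcal C}_{\fducfra}){\subseteq}\Log({\mathcal C}_{\fducfra}^{\omega})$, so it suffices to establish the single reverse inclusion in each chain, namely $\Log({\mathcal C}_{\fucfra}^{\omega}){\subseteq}\FIK$ and $\Log({\mathcal C}_{\fducfra}^{\omega}){\subseteq}\LIK$. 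Combining these with the forward chains forces all displayed sets to coincide.

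To obtain the reverse inclusions I would argue contrapositively, and here the entire machinery of Sections~\ref{section:tips}--\ref{section:finite:model:property} pays off. Fix $A{\not\in}\L$. The standing hypothesis opening Section~\ref{section:tips} then furnishes a prime theory $s_{0}{\in}W_{\L}$ with $s_{0}{\not\models}A$ in the canonical model, and by Proposition~\ref{proposition:principal} the saturation procedure started from the initial clip of $s_{0}$ halts, producing a clean clip $({\mathcal T}_{\alpha},{\ll_{\alpha}},{\triangleright_{\alpha}})$ and hence the finite saturated frame $(W^{\prime},{\leq^{\prime}},{R^{\prime}})$. By Lemma~\ref{lemma:frame:prime:is:appropriate} this frame lies in ${\mathcal C}_{\fucfra}^{\omega}$ when $\L{=}\FIK$ and in ${\mathcal C}_{\fducfra}^{\omega}$ when $\L{=}\LIK$, and by Lemma~\ref{lemma:falsified:finite:model} it falsifies $A$. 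Consequently $A{\not\in}\Log({\mathcal C}_{\fucfra}^{\omega})$ in the $\FIK$ case and $A{\not\in}\Log({\mathcal C}_{\fducfra}^{\omega})$ in the $\LIK$ case, which is exactly the desired reverse inclusion.

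With both directions in hand I would conclude by squeezing: for $\FIK$ the chain $\FIK{=}\Log({\mathcal C}_{\fcfra}){\subseteq}\Log({\mathcal C}_{\fucfra}){\subseteq}\Log({\mathcal C}_{\fucfra}^{\omega}){\subseteq}\FIK$ collapses to equalities throughout, and symmetrically for $\LIK$ through $\Log({\mathcal C}_{\fducfra})$ and $\Log({\mathcal C}_{\fducfra}^{\omega})$.

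As for where the difficulty resides: the proposition itself is a bookkeeping assembly, so no genuinely new obstacle arises at this stage. All the real work has already been discharged---the termination argument behind Proposition~\ref{proposition:principal} (which rests on the rank and height invariants controlling the repair procedures) and the verification in Lemma~\ref{lemma:frame:prime:is:appropriate} that cleanness together with regularity of the final clip yields precisely the right confluence properties (forward and upward confluence for $\FIK$, with downward confluence added for $\LIK$) on a finite carrier. The one point demanding care in the write-up is to keep the two logics aligned with their respective frame classes throughout the squeeze, since the downward-confluence repair only intervenes in the $\LIK$ branch; everything else is a direct appeal to the cited results.
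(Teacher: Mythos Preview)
Your proposal is correct and follows essentially the same squeeze argument as the paper. The only cosmetic difference is that the paper closes the cycle by citing soundness (Proposition~\ref{proposition:soundness:FIK:LIK}) directly rather than the full Completeness theorem---soundness alone already gives $\FIK{\subseteq}\Log({\mathcal C}_{\fcfra})$ and $\LIK{\subseteq}\Log({\mathcal C}_{\fdcfra})$, which is all that is needed once Lemmas~\ref{lemma:frame:prime:is:appropriate} and~\ref{lemma:falsified:finite:model} supply $\Log({\mathcal C}_{\fucfra}^{\omega}){\subseteq}\FIK$ and $\Log({\mathcal C}_{\fducfra}^{\omega}){\subseteq}\LIK$---but since Completeness is already available your route is equally valid.
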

\begin{proof}
By Propositions~\ref{proposition:facile:about:inclusion} and~\ref{proposition:soundness:FIK:LIK} and Lemmas~\ref{lemma:frame:prime:is:appropriate} and~\ref{lemma:falsified:finite:model}.
\medskip
\end{proof}
\begin{corollary}
The membership problems in $\FIK$ and $\LIK$ are decidable.
\end{corollary}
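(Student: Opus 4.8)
The plan is to derive decidability of the two membership problems from the Finite Frame Property established in Proposition~\ref{proposition:fmp:fik}, combined with the recursive axiomatizability of $\FIK$ and $\LIK$. The guiding principle is that a decision problem is decidable as soon as both its positive instances and its negative instances can be semi-decided; I would therefore exhibit one semi-decision procedure for ``$A{\in}\L$'' and another for ``$A{\not\in}\L$'', run them in parallel, and argue that on every input exactly one of them halts.

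First I would note that, for $\L$ equal to $\FIK$ or to $\LIK$, the set $\L$ is recursively enumerable. Indeed, $\L$ is by definition the least set of formulas that contains the standard axioms of $\IPL$ and the finitely many modal axioms $(\mathbf{D}\square),\ldots,(\axiomwCD)$ (augmented with $(\Axiom\mathbf{f})$ in the case of $\FIK$ and with $(\Axiom\mathbf{f})$ and $(\Axiom\mathbf{d})$ in the case of $\LIK$), and that is closed under uniform substitution, modus ponens and the rules $(\Rule\square),(\Rule\lozenge),(\Rule\mathbf{I})$. Since this is a finite presentation by axiom schemas and inference rules, the set of all formulas obtainable by finite derivations is recursively enumerable and coincides with $\L$; enumerating these derivations and waiting for $A$ to appear therefore semi-decides ``$A{\in}\L$''.

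Next I would turn Proposition~\ref{proposition:fmp:fik} into a semi-decision procedure for non-membership. By that proposition, $\FIK{=}\Log({\mathcal C}_{\fucfra}^{\omega})$ and $\LIK{=}\Log({\mathcal C}_{\fducfra}^{\omega})$, so $A{\not\in}\L$ holds precisely when $A$ is falsified in some \emph{finite} frame belonging to the appropriate class. I would enumerate, up to isomorphism, all finite relational structures $(W,{\leq},{R})$; for each one it is decidable whether $\leq$ is a preorder and whether the relevant confluence conditions hold, since each condition is a bounded statement quantifying only over the finite set $W$. For each finite frame that passes this test, deciding whether $A$ is valid amounts to inspecting the finitely many valuations that assign to the variables occurring in $A$ some $\leq$-closed subset of the finite set $W$, and for each such valuation the satisfiability relation of Section~\ref{section:semantics} is computed by recursion over the finitely many subformulas of $A$. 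Hence ``$(W,{\leq},{R}){\models}A$'' is decidable, and reporting success as soon as a finite frame falsifying $A$ is met semi-decides ``$A{\not\in}\L$''.

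The two procedures together decide membership: the first halts exactly when $A$ is derivable, that is when $A{\in}\L$, while Proposition~\ref{proposition:fmp:fik} guarantees that the second halts exactly when $A{\not\in}\L$, so precisely one of them halts on every input $A$. I do not expect a genuine obstacle at this last stage: the real work has already been done in Proposition~\ref{proposition:fmp:fik}, and what remains --- effectiveness of the enumeration of finite confluent frames and decidability of validity over a single finite frame --- is routine. Finally, I would point out that the bounds recorded in Lemmas~\ref{lemma:about:height:maximal:in:tips:repairing:max:defects} and~\ref{lemma:invariant:of:the:saturation:procedure}, together with the inequality $\card(\Sigma_{A}){\leq}{\parallel}A{\parallel}$, bound the rank, the height and hence the size of the saturated frame produced by the saturation procedure by a computable function of ${\parallel}A{\parallel}$; this upgrades the argument to a direct decision procedure that need only test finite frames below that bound, and in turn yields an effective complexity upper bound.
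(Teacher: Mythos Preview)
Your argument is correct and follows the same route as the paper, which simply invokes Theorem~6.13 of \cite{Blackburn:et:al:2001} together with Proposition~\ref{proposition:fmp:fik}; that theorem is precisely the statement that a finitely axiomatized logic with the finite frame property is decidable, and you have spelled out its proof in full. Your closing remark about extracting an explicit size bound from Lemmas~\ref{lemma:about:height:maximal:in:tips:repairing:max:defects} and~\ref{lemma:invariant:of:the:saturation:procedure} goes a little beyond what the paper records, though to make it airtight you would also need to bound the branching factor (number of $\ll$- and $\triangleright$-successors created per tip), not only the rank and the height.
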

\begin{proof}
By~\cite[Theorem~$6.13$]{Blackburn:et:al:2001} and Proposition~\ref{proposition:fmp:fik}.
\medskip
\end{proof}
\section{Conclusion}\label{section:conclusion}
Much remains to be done.
For example,
\begin{itemize}
\item adapt the above line of reasoning to the IMLs considered in~\cite{Balbiani:Gencer:preliminary:draft,Simpson:1994},
\item determine the computational complexity of the membership problem in these IMLs,
\item adapt the above line of reasoning to the intuitionistic variants of $\S4$ considered in~\cite{Balbiani:et:al:2021,Girlando:et:al:2023},
\item determine the computational complexity of the membership problem in these intuitionistic variants of $\S4$.
\end{itemize}
\section*{Acknowledgements}
We wish to thank Han Gao (Aix-Marseille University), Zhe Lin (Xiamen University), Nicola Olivetti (Aix-Marseille University) and Vladimir Sotirov (Bulgarian Academy of Sciences) for their valuable remarks.
\\
\\
Special acknowledgement is also granted to our colleagues of the Toulouse Institute of Computer Science Research for many stimulating discussions about the subject of this paper.
%
%
%
%
%
%
%
%
\bibliographystyle{named}
\end{document}